\documentclass{article}

\usepackage{arxiv}
\makeatletter
\renewcommand{\normalsize}{%
  \fontsize{12pt}{14pt}\selectfont  %
  \abovedisplayskip      7\p@ \@plus 2\p@ \@minus 5\p@
  \abovedisplayshortskip \z@ \@plus 3\p@
  \belowdisplayskip      \abovedisplayskip
  \belowdisplayshortskip 4\p@ \@plus 3\p@ \@minus 3\p@
}
\makeatother

\usepackage[utf8]{inputenc} %
\usepackage[T1]{fontenc}    %
\usepackage{url} %
\usepackage{booktabs} %
\usepackage{amsfonts} %
\usepackage{nicefrac} %
\usepackage{microtype} %
\usepackage{graphicx}
\usepackage{natbib}
\usepackage{doi}

\usepackage{multirow}
\usepackage{amsmath,amssymb,amsthm}%
\usepackage{hyperref} %
\usepackage{cleveref}%
\hypersetup{hidelinks}
\usepackage{mathrsfs}%
\usepackage{xcolor}%
\usepackage{textcomp}%
\usepackage{manyfoot}%
\usepackage{algorithm}%
\usepackage{algorithmicx}%
\usepackage{algpseudocode}%
\usepackage{listings}%
\usepackage[shortlabels]{enumitem}
\usepackage{comment}
\usepackage{bbm}
\usepackage{mathabx}

\RequirePackage{snapshot}

\def\cM{\mathcal{M}}
\def\cQ{\mathcal{Q}}
\def\cZ{\mathcal{Z}}
\def\cP{\mathcal{P}}

\def\cQpost{\mathcal{Q}^+}
\def\cZpost{\mathcal{Z}^+}

\def\pmin{p^{\downarrow}_{\cZ_h}}
\def\player{p^{\Updownarrow}_{\cZ_h}}

\def\ptheta{\varrho}

\let\oldpropto\propto
\renewcommand{\propto}{\mathrel{\,\oldpropto\,}}

\def\Xmin{X^\downarrow}
\def\Xmax{X^\uparrow}
\def\XminStar{X^{*\downarrow}}

\def\Xminpr{X'^{\downarrow}}
\def\Xpr{\widetilde{X}}
\def\xipr{\widetilde{\xi}}

\def\layer{X^{\Updownarrow}}
\def\layerpr{X'^{\Updownarrow}}

\def\Llayer{L_{\Updownarrow}}
\def\Ulayer{U_{\Updownarrow}}

\def\AuxRate{\widebar{M}}

\def\piY{\pi_Y}
\def\piM{\mu}

\def\G{{\psi \cup S}}
\def\Gtick{{\psi' \cup S}}

\def\gibbshmc{\texttt{GibbsHMC}}
\def\gibbsprior{\texttt{GibbsPrior}}
\def\gibbspost{\texttt{GibbsPost}}
\def\pmcmc{\texttt{pMCMC}}

\newcommand{\sort}[1]{\texttt{sort(}#1\texttt{)}}

\theoremstyle{plain}
\newtheorem{theorem}{Theorem}
\newtheorem{proposition}{Proposition}

\theoremstyle{definition}

\theoremstyle{remark}

\title{Exact Gibbs sampling for stochastic differential equations with gradient drift and constant diffusion}
\renewcommand{\shorttitle}{Gibbs sampling for constant diffusion SDEs}

\date{}

\author{%
	Xinyi Pei \thanks{Equal contribution}\\ 	\texttt{pxinyi@purdue.edu}
	\And
	Minhyeok Kim \footnotemark[\value{footnote}] \\	\texttt{kim4660@purdue.edu} \\~\\
	Department of Statistics, Purdue University \\
	West Lafayette, IN 47907, USA \\
	\And
	Vinayak Rao \thanks{Corresponding author}\\
	\texttt{varao@purdue.edu} \\
}

\begin{document}
\maketitle

\begin{abstract}
	Stochastic differential equations (SDEs) are an important class of time-series models, used to describe stochastic systems evolving in continuous time. Simulating paths from these processes, particularly after conditioning on noisy observations of the latent path, remains a challenge. Existing methods often introduce bias through time-discretization, require involved rejection sampling or debiasing schemes or are restricted to a narrow family of diffusions.
	In this work, we propose an exact Markov chain Monte Carlo (MCMC) sampling algorithm that is applicable to a broad subset of all SDEs with unit diffusion coefficient; after suitable transformation, this includes an even larger class of multivariate SDEs and most 1-d SDEs.
	We develop a Gibbs sampling framework that allows {\em exact} MCMC for such diffusions, without any discretization error.
	We demonstrate how our MCMC methodology requires only fairly straightforward simulation steps. Our framework can be extended to include parameter simulation, and allows tools from the Gaussian process literature to be easily applied.
	We evaluate our method on synthetic and real datasets, demonstrating superior performance to particle MCMC approaches.
\end{abstract}

\keywords{Diffusion\and Poisson process\and Brownian motion\and Markov chain Monte Carlo\and Hamiltonian Monte Carlo}

\section{Introduction}\label{sec1}
Stochastic differential equations (SDEs)~\citep{oksendal2013stochastic} play an important role across a broad spectrum of fields; %
however, the accurate and efficient simulation of SDE paths poses significant challenges that remain only partially resolved. 
The primary difficulty with these models is that simulating from and evaluating state transition probabilities is an intractable problem. %
 Standard approaches introduce error through discrete-time approximations to the dynamics (giving for example the Euler-Maruyama method~\citep{kloeden1994numerical}), and can be extended to conditional simulation (given noisy observations) using standard discrete-time algorithms, such as particle filters~\citep{DelMoral1996} or particle MCMC~\citep{andrieu2010particle}.

In~\Cref{sec:ea}, we describe a series of {\em exact} sampling methods
~\citep{beskos2005exact, beskos2006retrospective, beskos2006exact, beskos2008factorisation} that avoid discretization error. These are all rejection sampling schemes where paths are proposed from a Brownian bridge and accepted or rejected based on a Poisson process realization. This ingenious simulation scheme, while exact, can be complex for some settings, and can have high rejection rates when the Brownian proposal deviates significantly from the SDE target.

In this paper, we develop a Markov chain Monte Carlo (MCMC) methodology that builds on the representation from these papers. %
Our methodology extends naturally to hidden Markov process settings where the SDE path is only observed noisily, and opens the way for practitioners to employ established MCMC methodologies like Hamiltonian Monte Carlo~\citep[HMC,][]{neal2011mcmc}. 
Our algorithm follows an auxiliary variable MCMC approach
outlined in~\citet{wang2020exact}. That algorithm
repeated two main steps: simulate an inhomogeneous Poisson point process conditioned on the diffusion trajectory, and simulate a diffusion trajectory given these Poisson events. 
It was, however, limited to a narrow class of EA1  diffusions, %
and here, we relax their restrictive assumptions, allowing MCMC inference for the much broader class EA2 and EA3 diffusions (defined later).
For these (and especially EA3 diffusions), the benefits of our MCMC approach are more apparent, allowing us to carefully order simulations to obtain algorithms much simpler than the rejection samplers they build upon. %

\subsection{Related work}
The exact rejection sampling methods~\citep{beskos2005exact, beskos2008factorisation, pollock2016exact}  above were originally designed for unconditional and end-point conditioned simulation.
In~\citet{beskos2006exact, sermaidis2013markov}, extensions to conditional simulation given noisy observations were studied, which proceed by partitioning the time interval at the observation times, and running the rejection sampler on each segment, together with a step to update path values at observation times. 
These require multiple rejection sampling steps as part of each MCMC iteration, %
and when applied to EA3 diffusions~\citep{sermaidis2013markov}, require the complicated EA3 rejection sampler. By  contrast, our MCMC approach is easier to implement, and can use sophisticated likelihood driven proposals (such as Hamiltonian Monte Carlo) to update the entire path simultaneously.

~\cite{fearnhead2008particle} extended the above ideas to HMM settings via a random-weight particle filter, %
however this still requires the complicated EA3 sampling methods that our approach avoids. \cite{fearnhead2010random} does avoid this through a clever Wald-type construction. %
However, these particle filtering methods 
focus on the filtering (rather than the smoothing) distribution, and 
can suffer from particle degeneracy issues. %
Some of these limitations can be tackled (at further computational cost) by incorporating them into particle MCMC (pMCMC) methods. In our experiments, we compare against time-discretized approximate pMCMC methods, and for one example, against an exact pMCMC method. 

A different approach was taken in~\citet{rhee2015unbiased}, who used multilevel Monte Carlo to allow unbiased estimation of functionals of diffusion paths. ~\citet{chada2021unbiased} used a similar idea, together with particle filtering to allow unbiased estimation in the posterior setting. While applicable to much more general SDEs than our approach, these are more complicated than our Gibbs scheme.

As already stated, our work builds on~\citet{wang2020exact}, that develops an MCMC algorithm that has the EA1 diffusion  as its stationary distribution. %
Extending this directly to EA2 and EA3 diffusions involves working with  non-Gaussian diffusions, and would naively involve intractable likelihoods and gradients. In this paper, we show how with some thought, we can have EA2 and EA3 MCMC samplers that are not much more complex than~\citet{wang2020exact}, but are simpler and more flexible than the EA2 and EA3 rejection sampling algorithms.

\section{Background}
\label{Background}

We will consider diffusions on the interval $[0,T]$  satisfying the SDE:
\begin{align}
dX_t = \alpha_{\theta}(X_t)\,dt + dW_t,  \quad X_0 \sim h^0; \quad t \in [0,T]. \label{eq:sde} %
\end{align} %
Here $X_t$ and $W_t$ are the state of the diffusion and a Brownian motion path at time $t$ respectively, $\alpha_{\theta}(x) = \nabla_x A_\theta(x)$ is the \emph{drift} term of \emph{gradient form}, and \(\theta \in \Theta\) are the process parameters. Informally,~\cref{eq:sde} describes the continuous evolution of \(X_t\) over time, driven by deterministic and stochastic components. \(X_t\) can be multidimensional, with \(X_t, W_t\) and \(\alpha_{\theta}(\cdot) \in \mathbb{R}^d\), %
though our focus is mostly on $d=1$.

For a general SDE, $\alpha_\theta(\cdot)$ can be more general, and the second term on the right of~\cref{eq:sde} is actually \(\sigma_{\theta}(\cdot)\,dW_t\), with a state-dependent \emph{diffusion term}.
For one-dimensional diffusions, which are our main focus, our assumptions involve no real loss of generality (see~\citet{moller2010state, pollock2016exact} for details on the Lamperti transform). In higher dimensions, this assumption is a limitation of the rejection sampling algorithms of Beskos and collaborators that our method inherits. Nevertheless, \cref{eq:sde} represents a broad and useful class of SDEs. 

\subsection{Notation} \label{sec:notation}
Our primary goal is to simulate SDE paths $X$ over an interval $[0,T]$, possibly conditioning on noisy observations $Y$ at a finite set of times $S$. Without loss of generality, $S$ will include the start and end times.
We use $\cQ$ to refer to the law over paths implied by~\cref{eq:sde}, and $\cQpost$ to refer to the law over paths conditioned on $Y$. We call these the {\em prior} and {\em posterior} over paths.
For most of this paper, we assume the data $Y$ and the parameters $\theta$ are fixed and known, and make them implicit to our notation.
When necessary, we will include $\theta$-subscripts (e.g.\ $\cQ_\theta$ and  $\cQpost_\theta$). 

For any $\tau \subset [0,T]$, we write $X_\tau$ for $X$ at times $t \in \tau$. %
We use $\Xmin = (\Xmin_v,\Xmin_t) \in \mathbb{R} \times [0,T]$ %
for the value and time of the minimum of $X$, and $\Xmax=(\Xmax_v,\Xmax_t)$ for the maximum. %
The \emph{layer} $\layer$ of $X$ is a pair $(\Llayer,\Ulayer) \in \mathbb{R}^2$ with $\Llayer \le \Xmin \le \Xmax \le \Ulayer$ . %

 We write 
 $\cZ_h$ for the law of an \emph{$h$-biased Brownian bridge} on $[0,T]$; to sample a path from this, first simulate the endpoints $X_0$ and $X_T$ from a probability density $h(X_0,X_T)$, and then connect these with a Brownian bridge (see~\Cref{sec:ea}).
 We write $\cZ_h(\cdot\mid X_A)$ or $\cZ_h(dX\mid X_A)$ for the conditional law given path values at times $A$, and $p_{\cZ_h}(X_B\mid X_A)$ for the corresponding density at a finite set of times $B$. When $A$ includes the end-times $0$ and $T$, then for any realization of $X_A$, $\cZ_h(\cdot \mid X_A)$ is the law of a Brownian bridge (really a collection of independent Brownian bridges over the intervals defined by $A$), and $p_{\cZ_h}$ is just a $|B|$-dimensional Gaussian distribution. %
 Sometimes we will let $A$ and $B$ intersect (most commonly, $\{0,T\} = A \cap B$). In this case, we abuse notation slightly, and use $p_{\cZ_h}(X_B\mid X_A)$ to refer to the density of $X_{B \setminus A}$. %
 This is also the case when we write $\cZ_h(dX \mid X_A)$.
 We write $\cZ_h(dX\mid \Xmin)$ for the regular conditional probability of an $h$-biased Brownian bridge given its minimum $\Xmin$, and 
$\cZ_h(\cdot \mid \layer)$ for the conditional law given the layer $\layer$. The former corresponds to a Bessel process, while the latter is not a standard stochastic process. 
 Finally, for a sequence $A$, $\sort{A}$ refers to its elements in increasing order.

\vspace{-.1in}
\section{Exact Rejection Sampling Algorithms without discretization error} \label{sec:ea}
We highlight key points, for more details,~\citet{pollock2016exact} provides an excellent review. %
Recall, $\alpha_\theta(x) =\nabla_x A_\theta(x)$, and %
$h^0$ is the prior over the initial value $X_0$. We assume:  (a) $ \int_{-\infty}^\infty \int_{-\infty}^{\infty}h^0(u_1) \exp\left(A_{\theta}(u_2) - A_{\theta}(u_1)- \frac{(u_2-u_1)^2}{2T}\right)\,du_2\,du_1 := c_{\theta} < \infty \ \forall \theta$, and (b) there exists a lower bound $\alpha^\downarrow_{\theta} \leq \frac{1}{2}\left(\alpha_{\theta}^2(x) + \alpha_{\theta}^{\prime}(x)\right)  \  \forall x \in \mathbb{R}$. For more about these mild assumptions, see~\citet{pollock2016exact}; usually (a) is written with $u_1$ fixed to a {\em known} initialization of the path.
With (a), we can define a density \(h_{\theta}(u_1, u_2) = \frac{1}{c_{\theta}}h^0(u_1)  \exp\left(A_{\theta}(u_2) - A_{\theta}(u_1)- \frac{(u_2-u_1)^2}{2T}\right)\).
Assumption (b) gives %
\vspace{-.1in}
\begin{align}
  \phi_{\theta}(x) := \frac{1}{2}\left(\alpha_{\theta}^2(x) + \alpha_{\theta}^{\prime}(x)\right) - \alpha^\downarrow_{\theta} \ge 0.   \label{eq:phi}
\end{align}
Denote by $\cZ_h$ the law of a biased Brownian motion whose endpoints $(X_0,X_T)$ follow $h_\theta(\cdot,\cdot)$,  with the rest of the path conditionally following a Brownian bridge. One can use Girsanov's theorem to show~\citep{beskos2005exact} that the Radon-Nikodym derivative between the diffusion measure $\mathcal{Q}$ and $\mathcal{Z}_h$ is
\begin{align}
\frac{d\mathcal{Q}}{d\mathcal{Z}_h}(X) \propto \exp\left(-\int_{0}^{T} \phi_{\theta}(X_t)\,dt\right) \in [0,1] \quad \text{(since $\phi_\theta(x) \ge 0$)}. \label{eq:RN_der_BB}
\end{align}
The EA rejection samplers %
operate by proposing a path $X$ from   $\mathcal{Z}_h$ and accepting it with probability
$
\exp\left(-\int_{0}^{T} \phi_{\theta}(X_t)\,dt\right).
$
To do this, %
observe that~\cref{eq:RN_der_BB}  corresponds to the probability of zero events from a Poisson process with rate \(\lambda(t) = \phi_{\theta}(X_t)\) over \([0, T]\). %
Suppose for each $X$, we have an upper bound \(M_\theta(X) \geq \phi_{\theta}(X_t) \ \forall t\in [0,T]\). %
Then, by the Poisson thinning theorem~\citep{lewis1979simulation}, one can  simulate this Poisson process by first simulating from a  rate-\(M_\theta(X)\) homogeneous Poisson process and then keeping each event $e$ with probability $\phi_\theta(X_e)/M_\theta(X)$.
This only requires evaluating $X$ on the \emph{finite} set of times, together with the bound $M_\theta(X)$. %
If at the end, all Poisson events are thinned, we have a zero-event Poisson realization and $X$ is accepted.
If any Poisson events survive, the proposal $X$ is rejected, and we repeat with a fresh $X$. 
After acceptance, %
path values at any time can be simulated retrospectively from a Brownian bridge conditioned on $M_\theta(X)$ and its values on the Poisson grid.
Based on %
$M_\theta(X)$,  \citet{beskos2008factorisation}  define three families: %
\begin{enumerate}[(a)]
    \item\ \textbf{EA1-type diffusions}: Here \(\phi_\theta(\cdot)\) is uniformly bounded from above: $\phi_\theta(\cdot) \leq M_\theta$. %
    Now, with $M_\theta(X) = M_\theta$, the Poisson thinning procedure is straightforward: simulate from a rate-$M_\theta$ Poisson process, instantiate a biased Brownian bridge on this {finite} set of points, and keep each point $e$ with probability $\phi_\theta(X_e)/M_\theta$.

    \item\ \textbf{EA2-type diffusions}: Here, either \(\lim_{x \rightarrow \infty}\phi_\theta(x)\) or  \(\lim_{x \rightarrow -\infty}\phi_\theta(x)\) is finite. %
    If we assume
    the former, %
 $\phi_\theta(\cdot)$ is bounded over intervals of the form $[b,\infty)$. Given the path minimum $\Xmin =(\Xmin_v,\Xmin_t)$, set %
$M_\theta(X) = M_\theta(\Xmin)  \overset{\scriptscriptstyle\text{def}}{\ge} \sup_{x \in [\Xmin_v,\infty)} \phi_\theta(x)\ge \sup_{t\in[0,T]}\phi_\theta(X_t)$. %
Now, the Poisson-thinning scheme  first simulates the end-values $(X_0,X_T)$ from $h$, then the minimum $\Xmin$ of the Brownian bridge linking these values, before simulating a  rate-$M_\theta(\Xmin)$ Poisson process. 
Given $\Xmin$, imputing the Brownian bridge at the Poisson times requires simulating a Bessel bridge~\citep{makarov2009exact}. These are used to thin or keep the Poisson events as before.

\item\ \textbf{EA3-type diffusions}: Now $\phi_\theta(x)$ can tend to infinity as $x \rightarrow \infty$ and as $x \rightarrow -\infty$, and to bound $\phi_\theta(X_t)$ we need the minimum $\Xmin$ {\em and} maximum $\Xmax$ of $X$. In practice, we only bound these quantities.
Let $(\dotsc < L_2 < L_1 < U_1 < U_2 < \dotsc)$ be an increasing sequence, with $L_i \rightarrow -\infty$ and $U_i \rightarrow \infty$ as $i \rightarrow \infty$.  For a biased Brownian path $X$, define the {\em layer} $\layer$ %
as the smallest $i$ such that $X_t \in [L_i,U_i]\ \forall t \in [0,T]$. 
Associated with $\layer$ is the pair %
$(\Llayer,\Ulayer) := (L_{\layer},U_{\layer})$. %
Set $M_\theta(X) = M_\theta(\layer) \overset{\scriptscriptstyle\text{def}}{\ge} \sup_{x \in [\Llayer,\Ulayer]} \phi_\theta(x) \ge \sup_{t\in[0,T]}\phi_\theta(X_t)$. Then, simulate from a rate-$M_\theta(\layer)$ Poisson process, instantiate the Brownian bridge at these times  conditioned on $\layer$ and thin these as before. The second step is much more involved than before.
\end{enumerate}
Though not essential, with an additional lower bound $\phi_\theta(X_t) \ge m_\theta(X) \ge 0\ \forall t \in [0,T]$, the above schemes can be made slightly more efficient as described below: 
\begin{enumerate}[(a)]
    \item\ Initiate a proposal $X$ by simulating $(X_0,X_T)$ and $\Xmin$ (for EA2), or $\layer$ (for EA3). Use these to calculate $M_\theta(X)$ and $m_\theta(X)$. For EA1, these are uniform across $X$.
    \item\ With probability $1-\exp(-m_\theta(X)T)$ reject $X$ and go back to Step (a). Otherwise: 
    \item\ Simulate $\psi \subset [0,T]$ from a Poisson process with rate $\Delta_\theta(X) = M_\theta(X)-m_\theta(X)$, conditionally simulate $X_\psi$, and thin each $e \in \psi$ with probability $\frac{\phi_\theta(x_e)-m_\theta(X)}{\Delta_\theta(X)}$.
    \item\ If any event of $\psi$ is not thinned, reject $X$ and return to Step (a). Else accept $X$.
\end{enumerate}

\subsection{Challenges with the EA3 rejection sampler}
\label{sec:ea3_ch}
The EA3 sampler needs 3 operations, listed below in increasing order of complexity. %

\begin{enumerate}[(a)]
    \item\ {\em Simulate the layer $\layer$ of a path conditioned on its endpoints $X_0,X_T$:} %
    Let $\gamma_i$ be the probability that the $\layer$ equals $i$. %
    We can simulate $u \sim \text{Uniform}(0,1)$, and set $\layer=j$ if $u \in (\sum_{i \le j-1}\gamma_i, \sum_{i \le j}\gamma_i)$. While the $\gamma_i$'s are not available in closed form, \citet{pollock2016exact} construct converging and easy to evaluate upper and lower bounds to these. For any $u$, there exists a finite depth to which we need to evaluate these bounds to identify exactly which interval $u$ lies in. 
    To extend this scheme to simulate $\layer$ for a Brownian bridge conditioned on a finite set $G$, we simulate the layer of each segment $(g_i,g_{i+1})$ of the Brownian bridge, and %
    set the path layer $\layer$ to the largest of the layers of the individual segments\footnote{To make this possible, ours is a slightly different definition of layer from~\citet{pollock2016exact}}.
    \item\ {\em Simulate $p_{\cZ_h}(X_\psi \mid\layer)$, the values of the Brownian bridge at $\psi$ given $\layer$:} %
    In theory, for this step, we can propose the $|\psi|$-dimensional $\Xpr_\psi$ from a Brownian bridge, use the earlier step to simulate the layer ${\Xpr}^{\Updownarrow}$ of $\Xpr$, repeating these two steps until %
${\Xpr}^{\Updownarrow}$ equals $\layer$. In practice, %
\citet{pollock2016exact} use a cleverer proposal to improve the accept rate. Since our algorithm will not need this, we do not elaborate on this, and only emphasize how challenging and expensive this step is.
    \item\ {\em Simulate $p_{\cZ_h}(X_G \mid \layer,X_\psi)$, the values of the Brownian bridge at $G$ given its layer $\layer$ as well as values $X_\psi$ at the finite set of times $\psi$:} This step is the most complicated. %
    $\layer$ induces dependency across all intervals created by $\psi$, and one must simulate a number of \emph{local layers} from an exponentially large (in $|\psi|$) space. We then follow (b) to independently impute $X_G$ on each of the intervals given its sublayer. Section 8 of~\citet{pollock2016exact} has more details.
\end{enumerate}

\section{Theoretical framework for our MCMC approach}
\label{sec:theory_bg}
Consider a latent path $X \equiv \{X_t, t \in [0,T]\}$  modeled  (possibly after applying the Lamperti transform) as the SDE from~\cref{eq:sde}. We start by assuming %
the parameters $\theta$ are known, and drop $\theta$ from all subscripts. Given observations \( \vec{Y} = (y_1, y_2, \cdots, y_{|S|}) \) of $X$ at times \( S = (s_1 = 0 < s_2 < \cdots < s_{|S|} = T) \), %
write the likelihood of a path $X$ as
$l({Y} \mid X) = \prod_{i \in S} l(y_i \mid X_{s_i}) := l_Y(X_S)$.
The SDE forms a prior over paths with law $\cQ$, and given $Y$, we denote by \(\cQpost\) the law of the posterior. Write $\cZpost_h$ for the law of the posterior given ${Y}$ of the biased Brownian bridge $\cZ_h$. Bayes' theorem gives
$\frac{d\cZpost_{h}}{d\cZ_{h}}(X) \propto l_Y(X_S)$ and $ 
\frac{d\cQpost}{d\cQ}(X) \propto l_Y(X_S)$.
It follows from \cref{eq:RN_der_BB} that:
\begin{align}
    \frac{d\cQpost}{d\cZpost_h}(X) \propto \exp\left(-\int_{0}^{T} \phi(X_t)\,dt\right). \label{eq:post_dens}
\end{align}
Next, for any positive $M$, write  \(\mathcal{M}_M\) for the law of a  rate-$M$ Poisson process on \([0,T]\). We will use $\psi$ to denote Poisson process realizations. Recall that $\frac{d \mathcal{M}_M}{d \mathcal{M}_1}(\psi) \propto \exp(-MT)M^{|\psi|}$~\citep{reiss2012course}.
The results below generalize those from~\citet{wang2020exact} to EA2 and EA3 diffusions, and tighten them to include the minimum $m(X)$.
\begin{theorem}
Define a probability $\cP$ on the path-point process product space as
\begin{equation}
    \label{eq:gibbs_target}
\cP(dX,d\psi) \propto \cZpost_{h}(dX) \cdot \mathcal{M}_{\Delta(X)}(d\psi) \cdot \exp(-m(X)T) \prod_{g \in \psi} \left(\frac{M(X) - \phi(X_g)}{\Delta(X)}\right).
\end{equation}
 Then, the probability measure $\cQpost$ (i.e.\ the SDE posterior) is the marginal of $\cP$: %
$
\cQpost(dX) %
\propto \int_{\psi} \cZpost_{h}(dX) \times \exp(-m(X)T) \times \mathcal{M}_{\Delta(X)}(d\psi) \times \prod_{g \in \psi} \left( \frac{M(X) - \phi(X_g)}{\Delta(X)}\right).
$
\label{thrm:joint}
\end{theorem}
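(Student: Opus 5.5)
The plan is to fix $X$, integrate out $\psi$ explicitly, and compare the resulting $X$-marginal with \cref{eq:post_dens}. For fixed $X$, the quantities $M(X)$, $m(X)$ and $\Delta(X)=M(X)-m(X)$ are constants, and $X_g$ is well defined for every $g\in[0,T]$. So the inner integral is the expectation of the product functional $F(\psi)=\prod_{g\in\psi} f(g)$ under a homogeneous Poisson process of rate $\Delta(X)$, where
\[
f(g)=\frac{M(X)-\phi(X_g)}{\Delta(X)}.
\]
The standard identity (Campbell's formula for product functionals, i.e.\ the probability generating functional of a Poisson process) gives
\[
\int_\psi \mathcal{M}_{\Delta}(d\psi)\prod_{g\in\psi} f(g)=\exp\Bigl(-\int_0^T \Delta\,(1-f(t))\,dt\Bigr).
\]
I would verify this directly by conditioning on $|\psi|=n\sim\text{Poisson}(\Delta T)$ with i.i.d.\ uniform locations:
\[
\sum_n e^{-\Delta T}\frac{(\Delta T)^n}{n!}\Bigl(\frac1T\int_0^T f\Bigr)^n=\exp\Bigl(-\Delta T+\Delta\int_0^T f\Bigr).
\]

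Next I substitute $f$. Since $\Delta(1-f(t))=\Delta-M+\phi(X_t)=\phi(X_t)-m(X)$, the $\psi$-integral equals
\[
\exp\Bigl(m(X)T-\int_0^T\phi(X_t)\,dt\Bigr).
\]
Multiplying by the prefactor $\exp(-m(X)T)$ cancels the $m$ term. The marginal of $\cP$ in $X$ is then proportional to $\cZpost_h(dX)\exp\bigl(-\int_0^T\phi(X_t)\,dt\bigr)$, which by \cref{eq:post_dens} is proportional to $\cQpost(dX)$. This is the claim.

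Some measure-theoretic bookkeeping remains: $\cP$ must be a well-defined finite measure, the normalizing constant must be finite and positive, and Tonelli must be justified. For the bound, $0\le f\le 1$ because $m(X)\le\phi\le M(X)$ on the path, so the integrand is nonnegative and bounded by one. The total mass of $\cP$ is therefore at most $1$. It is positive because $\int\phi(X_t)\,dt<\infty$ for continuous paths (bounded on compacts), so the exponential weight is strictly positive. Measurability of $(X,\psi)\mapsto\mathcal{M}_{\Delta(X)}$ as a kernel should follow from measurability of $X\mapsto M(X),m(X)$, which are functions of $\Xmin$ or $\layer$.

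The main obstacle I foresee is the degenerate case $\Delta(X)=0$, for example when $\phi$ is constant on the layer. There $\mathcal{M}_0$ is the empty-configuration point mass, and the ratio $f$ is $0/0$ but never evaluated. The identity still holds, because both sides then equal $\exp(-m(X)T)=\exp(-\int\phi)$. I would simply treat this case separately.
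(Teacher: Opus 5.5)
Your proposal is correct and follows essentially the same route as the paper. Both integrate out $\psi$ with the Poisson product-functional (Campbell) identity to get $\exp\bigl(m(X)T-\int_0^T\phi(X_t)\,dt\bigr)$, cancel this against the prefactor $\exp(-m(X)T)$, and compare the result with the Radon--Nikodym derivative $\exp\bigl(-\int_0^T\phi(X_t)\,dt\bigr)$. Your direct check by conditioning on $|\psi|$, the finiteness and positivity bookkeeping, and the separate treatment of $\Delta(X)=0$ are small extras that the paper leaves implicit.
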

\noindent We prove this in the Supplementary Material; informally, this follows from the thinning theorem. The right hand side of~\cref{eq:post_dens} is the probability of zero events from a rate-$\phi(X_t)$ Poisson process, while~\cref{eq:gibbs_target} represents this as a thinned rate-$\Delta(X)$ Poisson process along with $0$ events from a rate-$m(X)$ Poisson process.
\Cref{thrm:joint} %
implies that for a pair \((X, \psi)\) simulated from~\cref{eq:gibbs_target}, discarding $\psi$ gives a path $X$ from $\cQ^+$.
While rejection sampling is one way to simulate from $\cP$ (for constant likelihood), writing it down allows other, potentially simpler approaches.
Towards this, we state two propositions that follow from~\Cref{thrm:joint}:
\begin{proposition}
\label{prop:condpath}
Write $\cP(\cdot \mid \psi)$ for the conditional law over paths under~\eqref{eq:gibbs_target}, for any $\psi$. Then this has density with respect to \(\mathcal{Z}_h\) given by:
\begin{align}
\frac{d \cP(X \mid \psi)}{d\mathcal{Z}_h} \propto 
l_Y(X_S) \cdot \exp(- M(X) \cdot T) \cdot \prod_{g \in \psi} \left({M(X)} - {\phi(X_g)}\right). \label{eq:path_cond}
\end{align}
\end{proposition}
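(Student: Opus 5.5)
We will read off the conditional law directly from the joint measure $\cP$ in~\cref{eq:gibbs_target}, by writing $\cP$ as a density against a product reference measure that does not depend on $X$. Fix the reference measure $\mathcal{Z}_h(dX)\otimes\mathcal{M}_1(d\psi)$. Two ingredients then express each factor of~\cref{eq:gibbs_target} as a density against this reference. First, Bayes' theorem, as recalled before~\Cref{thrm:joint}, gives $\frac{d\cZpost_h}{d\cZ_h}(X)\propto l_Y(X_S)$. Second, the Poisson density $\frac{d\mathcal{M}_{\Delta(X)}}{d\mathcal{M}_1}(\psi)=\exp(-(\Delta(X)-1)T)\,\Delta(X)^{|\psi|}$ holds with the exact normalising constant. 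The constant matters here: it depends on $X$ through $\Delta(X)$, so it may not be dropped.

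Multiplying these factors, the joint density of $\cP$ with respect to $\mathcal{Z}_h\otimes\mathcal{M}_1$ is proportional to
\begin{align*}
l_Y(X_S)\, e^{-\Delta(X)T}\,\Delta(X)^{|\psi|}\, e^{-m(X)T}\prod_{g\in\psi}\frac{M(X)-\phi(X_g)}{\Delta(X)},
\end{align*}
where the factor $e^{T}$ has been absorbed into the proportionality constant. The product contributes $\Delta(X)^{-|\psi|}$, which cancels the factor $\Delta(X)^{|\psi|}$. Since $\Delta(X)=M(X)-m(X)$, the two exponentials combine into $e^{-M(X)T}$. The joint density is therefore proportional to $l_Y(X_S)\,e^{-M(X)T}\prod_{g\in\psi}(M(X)-\phi(X_g))$.

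Because the reference measure is a product, the conditional law of $X$ given $\psi$ has density with respect to $\mathcal{Z}_h$ equal to this joint density with $\psi$ held fixed, renormalised in $X$. This is exactly~\cref{eq:path_cond}. The normalising constant is finite: each factor $M(X)-\phi(X_g)$ lies in $[0,\Delta(X)]$, and integrability of the joint density already follows from~\Cref{thrm:joint}, so for $\mathcal{M}_1$-almost every $\psi$ the normaliser is finite and positive. To make this rigorous we would phrase it via disintegration: $\cP(dX\mid\psi)$ is defined for $\mathcal{M}_1$-a.e.\ $\psi$ (equivalently for a.e.\ $\psi$ under the $\psi$-marginal of $\cP$).

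The only real subtlety is the bookkeeping of the $X$-dependent normaliser of $\mathcal{M}_{\Delta(X)}$; done correctly, it produces the cancellation above. A minor point is the case $\Delta(X)=0$, where $\mathcal{M}_0$ is the point mass at $\psi=\emptyset$. We would handle it by adopting the convention that the product in~\cref{eq:gibbs_target} is then empty, and checking that the formula still holds.
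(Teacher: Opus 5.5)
Your proposal is correct and follows the paper's own argument. You write \cref{eq:gibbs_target} as a density against the $X$-independent reference $\mathcal{Z}_h\otimes\mathcal{M}_1$ using $\frac{d\mathcal{M}_{\Delta(X)}}{d\mathcal{M}_1}(\psi)\propto\exp(-\Delta(X)T)\Delta(X)^{|\psi|}$, cancel the $\Delta(X)^{|\psi|}$ factors, and combine $e^{-\Delta(X)T}e^{-m(X)T}=e^{-M(X)T}$. Beyond the paper's one-line proof, you simply spell out the cancellation, the need to keep the $X$-dependent Poisson normaliser, and the disintegration and $\Delta(X)=0$ details.
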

\noindent The conditional $\cP(\cdot \mid X)$ over point process realizations %
has an even simpler form:
\begin{proposition}
\label{prop:condpoi}
Write $\cP(\cdot \mid X)$ for the conditional law of $\psi$ given $X$ under~\eqref{eq:gibbs_target}. %
For any $X$, this is the law of a Poisson process on \([0, T]\) with rate \(\lambda_X(t) = M(X) - \phi(X_t)\).
\end{proposition}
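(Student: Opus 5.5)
We fix a path $X$ and read off the conditional law of $\psi$ from the joint measure $\cP$ in~\cref{eq:gibbs_target}. Conditionally on $X$, the factors $\cZpost_h(dX)$ and $\exp(-m(X)T)$ do not depend on $\psi$ and are absorbed into the normalizing constant. What remains, as a function of $\psi$, is
\begin{align*}
\cP(d\psi \mid X) \propto \mathcal{M}_{\Delta(X)}(d\psi)\prod_{g\in\psi}\left(\frac{M(X)-\phi(X_g)}{\Delta(X)}\right).
\end{align*}
We then identify this measure as a Poisson process with the stated rate. There are two routes, and we plan to use the first while keeping the second as a check.

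\emph{Density route.} Write everything relative to the unit-rate Poisson process $\mathcal{M}_1$. By the earlier stated fact, $\frac{d\mathcal{M}_{\Delta}}{d\mathcal{M}_1}(\psi)\propto e^{-\Delta T}\Delta^{|\psi|}$. Multiplying by the product term, the factors $\Delta(X)^{|\psi|}$ cancel exactly, and we get
\begin{align*}
\frac{d\cP(\cdot\mid X)}{d\mathcal{M}_1}(\psi)\propto \prod_{g\in\psi}\bigl(M(X)-\phi(X_g)\bigr).
\end{align*}
The factor $e^{-\Delta T}$ is constant in $\psi$. Now recall the standard density of an inhomogeneous Poisson process with intensity $\lambda$ relative to $\mathcal{M}_1$:
\begin{align*}
\exp\left(-\int_0^T(\lambda(t)-1)\,dt\right)\prod_{g\in\psi}\lambda(g).
\end{align*}
This is proportional, in $\psi$, to $\prod_{g\in\psi}\lambda(g)$. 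Taking $\lambda_X(t)=M(X)-\phi(X_t)$, the two densities agree up to a normalizing constant. Since both are probability measures, they are equal.

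\emph{Thinning route.} Alternatively, for $\lambda_X \le \Delta(X)$ the product $\prod_{g}\lambda_X(g)/\Delta(X)$ is exactly the probability that every point of a rate-$\Delta(X)$ process is kept under independent thinning with retention probability $\lambda_X(g)/\Delta(X)$. Conditioning on all points being retained should give, by the thinning theorem, a Poisson process with intensity $\lambda_X$. Making this rigorous ultimately reduces to the same density computation.

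The main point requiring care is that $\lambda_X$ is a valid intensity, i.e.\ nonnegative, bounded and measurable in $t$. Nonnegativity holds because $M(X)\ge \sup_t\phi(X_t)$ by construction in each of the EA1, EA2 and EA3 cases. Boundedness follows from $\lambda_X\le M(X)<\infty$, using $\phi\ge 0$, so $\int_0^T\lambda_X(t)\,dt<\infty$. Measurability follows from continuity of $t\mapsto\phi(X_t)$ along continuous paths, assuming $\phi$ is continuous. We also need the normalization to be finite and positive: the total mass is $e^{-\Delta T}\cdot\exp\bigl(\int_0^T\lambda_X(t)\,dt\bigr)$, which is finite and positive. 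With these checks the identification of the two measures is justified.

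Note that the definition of $\mathcal{M}_{\Delta(X)}$ presumes $\Delta(X)>0$. If instead $M(X)=m(X)$, then $\phi$ is constant on the range of the path, so $\lambda_X\equiv 0$ and $\psi=\emptyset$ almost surely, which is consistent with the claim.
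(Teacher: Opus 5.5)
Your density route is the paper's own proof: express the $\psi$-dependent part of the joint target relative to $\mathcal{M}_1$, cancel the $\Delta(X)^{|\psi|}$ factors, and recognize $\prod_{g\in\psi}\bigl(M(X)-\phi(X_g)\bigr)$ as the inhomogeneous Poisson density up to an $X$-dependent constant. Your extra checks on $\lambda_X$ (nonnegativity, integrability, and the degenerate case $\Delta(X)=0$) are correct additions that the paper leaves implicit. Your thinning remark can also be made rigorous on its own, without the density computation, using the independence of the kept and discarded processes.
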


Slightly restricted versions of these two propositions form the theoretical basis of the EA1 MCMC sampler of~\citet{wang2020exact}, and will also guide the samplers we propose here. These samplers are all fundamentally Gibbs samplers that target the augmented distribution $\cP(dX,d\psi)$ from~\cref{eq:gibbs_target}, %
by alternately updating: 1) the path $X$ given the Poisson events $\psi$, and 2) the Poisson events $\psi$ given the path $X$. Below, we describe the sampler of~\citet{wang2020exact}, which we refer to as \texttt{GibbsEA1}.

\subsection{\texorpdfstring{The \texttt{GibbsEA1} sampler~\citep{wang2020exact}}{The GibbsEA1 sampler[wang2020exact]}} \label{sec:wang}
 For EA1 diffusions, $M(X)$,$m(X)$, and $\Delta(X)$ in~\cref{eq:gibbs_target} are constants $M$,$m$, and $\Delta$. Now, the propositions above can be directly implemented as two easy Gibbs steps that target $\cP(dX,d\psi)$.
The first updates the path $X$ given the Poisson events $\psi$. Proposition \ref{prop:condpath}, together with the Markov property of the Brownian bridge, implies we only need to update $X$ at times $\G$, with $X$ anywhere else conditionally following the Brownian bridge $\cZ_h(\cdot\mid X_\G)$. Furthermore, $X_{\G}$ has density with respect to Lebesgue measure given by $p(X_{\G}\mid\psi) \propto h(X_0,X_T)p_{\cZ_h}(X_{\G} \mid X_0,X_T)l_Y(X_S)\prod_{g\in\psi} (M-\phi(X_g))$.
Recalling from~\Cref{sec:notation} that $p_{\cZ_h}(X_{\G}\mid X_0,X_T)$ is a multivariate normal, $X_{\G}$ can be updated using standard MCMC tools for the Gaussian process literature (\citet{wang2020exact} use HMC).
The second Gibbs step updates $\psi\mid X$ from a Poisson process by Poisson thinning. %

Our work here extends this to EA2 and EA3 diffusions by additionally maintaining and updating, respectively, the minimum and layer information of the Brownian path. Below, we describe a na\"{\i}ve attempt to do this for EA3 diffusions:
\begin{description}
    \item[Update $\psi$ given $X$:]\ From Proposition~\ref{prop:condpoi},  $\psi \mid X$ follows an inhomogeneous Poisson process with rate $M(\layer) - \phi(X)$.
To simulate this, we use the thinning theorem, first simulating a homogeneous Poisson process $\vartheta$  with rate \( \Delta(\layer) \), and then keeping each point $e \in \vartheta$ with probability \( \frac{M(\layer) - \phi(X_e)}{\Delta(\layer)} \).

In practice, at the start of this step, the sampler has only instantiated $X_{\G}$ and $\layer$. From the latter, we calculate $\Delta(\layer)$ and simulate $\vartheta$. To thin this, we must impute $X_\vartheta$ given $\layer,X_\G$.
    Unfortunately, for EA3 diffusions, this involves the challenging Step (c) from~\Cref{sec:ea3_ch}, making each iteration extremely involved. Our actual MCMC algorithm avoids this. 
    We note that for EA2 diffusions, imputing $X_\vartheta \mid \Xmin,X_\G$ is an easier step involving Bessel bridge simulation.
    \item[Update $X$ given $\psi$:]\ This requires us to target the conditional $X_\G,\layer \mid \psi$. Unlike the EA1 situation, this is a much more challenging distribution because of $\layer$, and we cannot even write this up to a normalization constant. For EA2 diffusions (where we target $X_\G,\Xmin \mid \psi$), this expression is more tractable analytically; nevertheless, even here, this update is very challenging: the presence of Bessel process marginals makes MCMC updates like HMC complicated.
\end{description}

\section{\texorpdfstring{Our proposed \texttt{GibbsEA3} sampler for EA3 diffusions}{Our proposed GibbsEA3 sampler for EA3 diffusions}}
\label{sec:gibbs_ea3}
We describe our proposed MCMC sampler for EA3 diffusions, before explaining how it simplifies for the EA2 case. %
We introduce an independent auxiliary Poisson process $\xi$ with rate $\AuxRate$ over the interval $[0,T]$. Our MCMC state space is now $(X,\psi,\xi)$, with the target 
$
\cP(dX,d\psi,d\xi) := \cP(dX,d\psi)\cM_{\AuxRate}(d \xi)
$.  %
Here, $\cP(dX,d\psi)$ is the joint distribution from~\cref{eq:gibbs_target}; this, and therefore (from~\Cref{thrm:joint}) the SDE posterior $\cQ^+$  are marginals of $\cP(dX,d\psi,d\xi)$. %
We target this with a Gibbs sampler that alternately samples from the conditionals $\cP(d\psi,d\xi\mid X)$ and $\cP(dX,d\xi\mid \psi)$.

\subsection{\texorpdfstring{Conditionally updating $(\psi,\xi)$ from $\cP(d\psi,d\xi\mid X)$}{Conditionally updating (psi,xi) from P(d psi,d xi | X)}}
Our approach is to relabel the elements of $\psi \cup \xi$, producing a new pair $\psi'$ and $\xi'$, with $\psi \cup \xi = \psi' \cup \xi'$. This equality ensures no new values of $X$ given $\layer,X_{\G\cup\xi}$ need to be simulated, avoiding the difficulties associated with the $\psi$-update step of the na\"{\i}ve MCMC sampler. 
By construction, $\xi$ is %
independent of $X$ and $\psi$, so that $\cP(d\psi,d\xi \mid X) = \cP(d\psi \mid X){\cM_{\AuxRate}} (d\xi)$ while from
 Proposition~\ref{prop:condpoi}, $\cP(d\psi \mid X)$ is a rate-$M(\layer)-\phi(X_t)$ Poisson process. From the Poisson superposition theorem, $\psi \cup \xi$ is Poisson process with rate $\AuxRate+ M(\layer)-\phi(X_t)$. If we assign each element $e  \in \psi \cup \xi$ to $\psi'$ with probability $\frac{M(\layer)-\phi(X_t)}{\AuxRate+ M(\layer)-\phi(X_t)}$ and to $\xi'$ with probability  $\frac{\AuxRate}{\AuxRate+ M(\layer)-\phi(X_t)}$, then it follows from the Poisson thinning theorem that $
 \psi'$ and $\xi'$ are also independent Poisson processes with rates $M(\layer)-\phi(X_t)$ and $\AuxRate$ respectively.
Thus, the relabeling step forms a Markov kernel that has {$\cP(d\psi,d\xi\mid X)$} as its stationary distribution, and our proposed update forms a valid Metropolis-within-Gibbs step.

\subsection{\texorpdfstring{Conditionally updating $(X,\xi)$ from $\cP(dX,d\xi\mid \psi)$}{Conditionally updating X,xi from P(dX,dxi | psi)}}

From the independence of $\xi$ with $(X,\psi)$, we have $\cP(dX,d\xi\mid \psi) = \mathcal{M}_{\AuxRate}(d\xi) \mathcal{P}(dX \mid \psi)$. The last term $\mathcal{P}(dX \mid \psi)$ is given in Proposition~\ref{prop:condpath}, ~\cref{eq:path_cond}, and to further simplify this, note that we can decompose the probability of any path from the biased Brownian bridge $\cZ_h$ as (probability of its endpoints) $\times$ (probability of its values on $\G\cup\xi$ given end-point values) $\times$ (probability of its layer given earlier values)  $\times$ (probability of the rest of the path given everything so far).
Thus,
\begin{align}
\hspace*{-.12in}\cZ_h(dX) = p_{\cZ_h}(X_{\G\cup\xi}) dX_{\G\cup\xi} \player(\layer\mid X_{\G\cup\xi}) d\layer \cZ_h(dX\mid \layer,X_{\G\cup\xi}).
    \label{eq:bb_decomp}
\end{align}
This ordering differs from the EA3 rejection sampler, where, to simulate $\psi$, we must {\em first} simulate $\layer$, after which $X_{\G}$ follows a complicated distribution. Under our ordering, $X_{\G\cup\xi}$ is a simple Gaussian vector.  Of course, $\layer\mid X_{\G\cup\xi}$ still is a nonstandard distribution, however $\layer$ is much lower dimensional than $X_{\G\cup\xi}$ and simulating $\layer \mid X_{\G\cup\xi}$ is much easier than $ X_{\G\cup\xi} \mid \layer$. This is possible only because our Gibbs step \emph{conditions} on $\psi$.
Plugging~\cref{eq:bb_decomp} into~\cref{eq:path_cond}, we get%
\begin{align} \hspace{-.14in}
   \cP(dX,d\xi \!\mid \!\psi)  \propto \mathcal{M}_{\AuxRate}(d\xi)  p_Y(X_{\G\cup\xi},   \layer \!\mid\!  \psi)  dX_{\G\cup\xi} d\layer \cZ_h(dX\! \mid \! \layer, X_{\G\cup\xi})\ \hspace{-.1in}  \label{eq:tgt_decom}\\
\hspace{-.6in}\text{where }\    p_Y( X_{\G\cup\xi}, \layer \mid \psi)   \propto \bigg\{p_{\cZ_h}(X_{\G\cup \xi}) \cdot  l_Y(X_S) \bigg\}\cdot \player(\layer\mid X_{\G\cup \xi}) \cdot \nonumber \\ 
     \quad \ \    \bigg\{\exp(- M(\layer)\cdot T) \cdot \prod_{g \in \psi} \left(M(\layer) - {\phi(X_g)}\right)\bigg\} \nonumber\\
     \hspace{.8in}:= \bigg\{\piY(X_S) p_{\cZ_h}(X_{\psi\cup\xi} \mid X_S)\bigg\}  \cdot \player(\layer\mid X_{\G\cup\xi}) \cdot 
    \bigg\{\piM({\layer}, X_\psi)\bigg\}.\nonumber
\end{align}
Note that the term $\piY(X_S)= p_{\cZ_h}(X_S)l_Y(X_S)$ is a standard Gaussian process (GP) posterior.
To target \(\cP(dX,d\xi\mid\psi)\), we employ a Metropolis-Hastings (MH) scheme, proposing  from some probability \(\mathcal{V}(d\Xpr,d\xipr\mid X, \xi, \psi)\), and accepting with the usual MH acceptance probability. Along the lines of~\cref{eq:tgt_decom},  we decompose $\mathcal{V}$ as follows:
\begin{align*}
    \mathcal{V}(d\Xpr,d\xipr \mid X,\xi,\psi) &= \cM_{\AuxRate}(d\xipr)q(\Xpr_{S} \mid X_{S})p_{\cZ_h}(\Xpr_{\psi\cup \xipr}\mid\Xpr_S)d\Xpr_{\G\cup\xipr}\\
    &\quad\quad p^{\Updownarrow}_{\cZ_{h}}(\Xpr^{ \Updownarrow} \mid \Xpr_{\G\cup\xipr}) d\Xpr^{\Updownarrow}\cZ_h(d\Xpr \mid \Xpr^{ \Updownarrow},\Xpr_{\G\cup\xipr}).
\end{align*}
That is, we first draw $\xipr$ from a rate-$\AuxRate$ Poisson process, propose new values $\Xpr_{S}$ conditioned on the old values $X_{S}$. Conditioned on these (and independent of previous MCMC values), we propose $\Xpr_{\psi\cup\xipr}$ from a Gaussian Brownian bridge marginal, and then propose a new layer ${\Xpr}^{\Updownarrow}$, with the remaining path following the intractable Brownian conditional. 
This yields the following acceptance probability\\
    $\hspace*{1in} \text{acc} = \frac{\cP(d\Xpr,d\xipr \mid \psi)}{\cP(dX,d\xi \mid \psi)}\cdot\frac{d\mathcal{V}(X,\xi \mid \Xpr,\xipr,\psi)}{d\mathcal{V}(\Xpr,\xipr \mid X,\xi,\psi)}
    = \frac{\piY(\Xpr_{S}){\piM(\Xpr^{\Updownarrow},\Xpr_\psi)}}{\piY(X_{S}){\piM(\layer,X_\psi)}}\cdot\frac{q(X_{S} \mid \Xpr_{S})}{q(\Xpr_{S} \mid X_{S})}.$\\
Notice that the intractable $\player(\cdot)$ terms cancel out above, %
and we are only left to find an efficient proposal $q( \Xpr_{S}\mid  X_{S} )$.
{ In general, $\piY(X_{S})$ is a standard GP posterior  distribution, and we can employ any Markov kernel that targets it. From detailed balance, such a kernel satisfies $q( \Xpr_{S} \mid X_{S} ) \piY(X_{S}) = q(X_{S} \mid \Xpr_{S} ) \piY(\Xpr_{S})$, so that the %
acceptance probability further simplifies to
\begin{align}
    \text{acc} = 
     \frac{{\piM(\Xpr^{\Updownarrow},\Xpr_\psi)}}{{\piM(\layer,X_\psi)}}= \frac{\exp(-M(\Xpr^\Updownarrow)\cdot T)}{\exp(-M(\layer)\cdot T)} \cdot \frac{\prod_{e\in \psi}\left(M(\Xpr^{\Updownarrow})-\phi(\Xpr_{e})\right)}{\prod_{e\in \psi}\left(M(\layer)-\phi(X_{e})\right)}. \label{eq:mh_ea3}
\end{align}
For a non-Gaussian likelihood, we set $q$ to a Hamiltonian Monte Carlo (HMC) kernel: our target is the posterior distribution for a Gaussian prior with a non-conjugate (but differentiable) likelihood, a setting where HMC is known to excel. Here, one iteration of HMC involves $L$ leapfrog steps of size $\epsilon$ (with $L, \epsilon$ and a mass-matrix $M$ parameters of the algorithm) followed by an accept/reject step~\citep{neal2011mcmc}. For a Gaussian likelihood $l_Y(X_S)$, $\piY$ admits a conditionally Gaussian structure where the distribution of $X_S$ given $(X_0,X_T,Y)$ is Gaussian. Thus we can run HMC on the lower-dimensional target $h(X_0,X_T)p(Y \mid X_0, X_T)$ to sample $(\widetilde{X}_0,\widetilde{X}_T)$, and then draw $\widetilde{X}_S\mid \widetilde{X}_0,\widetilde{X}_T$ from the Gaussian $p_{\cZ}(X_S\mid X_0, X_T)l_Y(X_S)$. For an endpoint-conditioned diffusion we do not even need the HMC step.}

Thus in general, our scheme has two accept/reject steps each iteration: one for the HMC proposal (to correct for discretizing the Hamiltonian dynamics), and one to correct the mismatch between the overall proposal and target distributions. %

\begin{algorithm}[h]
   \caption{{One iteration of \texttt{GibbsEA3} for EA3 diffusions.}}
   \label{alg:ea3gibbsparainf}
   \begin{flushleft}
   \textbf{Input:}
   \begin{itemize}[leftmargin=1.5em, labelsep=0.5em, itemsep=0pt, topsep=0pt, parsep=0pt, partopsep=0pt]
      \item Observation times $S \subset [0,T]$, along with likelihood function $l_Y(X_S)$. %
      \item Poisson times $\psi \subset [0,T]$, path values $X_{\G}$ on $\G$, and layer $\layer$.
      \item Auxiliary Poisson times $\xi \subset [0,T]$ and path values $X_{\xi}$.
   \end{itemize}
   \textbf{Output:}  
   \begin{itemize}[leftmargin=1.5em, labelsep=0.5em, itemsep=0pt, topsep=0pt, parsep=0pt, partopsep=0pt]
      \item Updated Poisson times $\psi'$, $\xi'$, path values $X'_{\psi' \cup S\cup \xi'}$, and layer ${\layerpr}$. %
   \end{itemize}
   \vspace{0.3em}
   \hrule
   \end{flushleft}
   \vspace{-0.5em}
   \begin{algorithmic}[1]
    \State \textbf{Update $(\psi,\xi)$ conditioned on $X$:}
      \State \quad Assign each $e \in \psi \cup \xi$ to $\psi'$ with probability $\frac{M(\layer)-\phi(X_t)}{\AuxRate+M(\layer)-\phi(X_t)}$, else to $\xi'$. %
    \State \textbf{Update $(X,\xi)$ conditioned on $\psi$:}
      \State %
\quad      Simulate $\Xpr_S \mid X_S$ from a Markov kernel  targeting $p_{\cZ_h}(X_{S})l_Y(X_S)$ (e.g.\ HMC). %
      \State \quad Simulate a rate-$\AuxRate$ homogeneous Poisson process $\xipr$ on $[0,T]$.
      \State \quad Simulate $\Xpr_{\psi' \cup \xipr} \mid \Xpr_S$ from a Brownian bridge.
      \State \quad Simulate the layer $\Xpr^{\Updownarrow}$ of a Brownian bridge given $\Xpr_{\psi'\cup S\cup \xipr}$.
      \State \quad  Set $(\xi',X'_{\psi'\cup S\cup \xi'}, {\layerpr})$ to $(\xipr,\Xpr_{\psi'\cup S\cup \xipr}, {\Xpr^{\Updownarrow}})$ with probability in~\cref{eq:mh_ea3}.
\State \textbf{Return} the updated values $(\psi',\xi',X'_{\psi'\cup S\cup \xi'}, {\layerpr})$.
   \end{algorithmic}
\end{algorithm}

\subsection{Comments on the algorithm} \label{sec:comm}
Our algorithm above can be simplified slightly for EA2 case, where we include $\Xmin$ instead of $\layer$ in the MCMC state space. Now, we no longer need to include $\xi$, and can directly update $\psi \mid X$ using the simpler scheme outlined at the end of~\Cref{sec:wang}: given $\Xmin$, we can easily simulate $X$ anywhere else from a Bessel process.
Our scheme to update $X \mid \psi$ is needed for both EA2 and EA3 diffusions. With the HMC update of $X_S$, this is a double Metropolis-Hastings scheme~\citep{liang2010double}, having %
two MH accept-reject steps. 
The first rejection is a concern only for non-Gaussian likelihoods, and can be controlled with smaller (but more) HMC leapfrog steps. We do not recommend investing too much computation towards this, since this can eventually still be rejected. One can also use other, cheaper schemes (often easier to tune than HMC) that target $p_{\cZ_h}(X_S)l_Y(X_S)$ such as elliptical slice sampling~\citep{murray2010elliptical}. Even in the event of a rejection, we continue with the proposal, now with $\Xpr_S$ equal to the current state $X_S$, but with new values  $\Xpr_{\psi\cup \xipr}$ and $\Xpr^{\Updownarrow}$.

The second accept/reject event  %
(\cref{eq:mh_ea3})
reflects how compatible $\Xpr_\psi$ and $M(\Xpr^{\Updownarrow})$ are with the Poisson events $\psi$. 
For instance, the term $\prod_{e \in \psi} \left({M({\Xpr^\Updownarrow})} - {\phi(\Xpr_e)}\right)$ suggests we do not want too many components $e \in \psi$ with $\phi(\Xpr_e)$ large. 
In our experiments, we did not encounter very low acceptance rates. 
Still, we provide some guidelines for such situations. An obvious approach follows~\citet{beskos2006exact, sermaidis2013markov}, and divides $[0,T]$ into $B$ subintervals $[0,T/B],[T/B,2T/B],\dotsc,[(B-1)T/B,T]$. We can now split our path update step into $B$ steps, with the $b$th updating 
$X_{[\frac{b-1}{B}T, \frac{b}{B}T]}$ given $\psi,X_\psi$ and the rest of the path. Unlike the earlier works, where the subintervals must align with the observation times $S$, we are free to choose these as we please. Two more ideas are:

\noindent\textbf{Avoiding bad $\Xpr_\psi$ proposals during the path update step:} A simple idea is to not update $X_\psi$ at all this step, instead, following a slightly different Gibbs scheme: simulate $\xipr, \Xpr_{S \cup \xipr}\mid\psi,X_\psi$ (so that $\tilde{X}_\psi = X_\psi$), and then update $\psi,\xipr \mid X_\psi,X_{\xipr}$. This is still ergodic, since every time an element $e \in \psi$ is swapped into $\xipr$, that $X_e$ can be updated.  %
This is even more efficient for the case of EA2 diffusions, where we directly update $\psi,X_\psi$ without introducing $\xi$.
From the Gaussianity and Markovianity of Brownian motion, we can still easily target $p_{\cZ_h}(X_S \mid X_\psi)l_Y(X_S)$. 

\noindent\textbf{Improving $\Xpr^{\Updownarrow}$ proposals during the path update step:} With the earlier change, we only have to maintain reasonable accept rates when updating the $\layer$.
A simple approach is to propose multiple $\layer$'s (this can be efficiently vectorized), and use a multiple-try MH accept probability~\citep{liu2000multiple}.

{\section{Parameter inference}\label{sec:bayesframework}

A natural way to extend our sampler to also simulate the SDE parameter(s) $\theta$ follows~\citet{beskos2006exact, sermaidis2013markov, wang2020exact} and constructs an expanded Gibbs sampler that alternately a) simulates a path skeleton (specifically $\psi, X_\G, \Xmin$ for EA2 and $\psi,\xi,X_{\G \cup \xi},\layer$ for EA3) given $\theta$, exactly following the earlier methodology, and b) the parameter $\theta$ given the path skeleton.
The result below extends~\Cref{thrm:joint} to make all dependency on $\theta$ explicit:
\begin{proposition}
\label{prop:parjoint}
With a prior $\ptheta(\theta)$ on $\theta$, the joint law of $\theta, X, \psi$ and $Y$ is given by %
\begin{align}
    \mathcal{P}(d\theta,dX,d\psi) 
        \propto & \ptheta(d\theta)l_Y(X)\cZ_{h_{\theta}}(dX)\mathcal{M}_{\Delta_\theta(X)}(d\psi)\exp(-m_\theta(X)\cdot T) \nonumber \\
        & \prod_{g\in \psi}\left(\frac{M_\theta(X)-\phi_\theta(X_{g})}{\Delta_\theta(X)}\right)\times
        {\exp\left\{-\alpha_\theta^{\downarrow}T+\log c_{\theta}\right\}}. %
\end{align}
\end{proposition}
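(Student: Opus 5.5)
The plan is to make every $\theta$-dependent normalizing constant in \Cref{thrm:joint} explicit, and then multiply by the prior $\ptheta(d\theta)$. Since $\theta$ now varies, constants that were dropped under "$\propto$" must be tracked. I will carry this out in three stages: the prior path law, the likelihood, and the Poisson augmentation.

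\textbf{Step 1 (unnormalized Girsanov density).} I will return to the Radon--Nikodym derivative in \cref{eq:RN_der_BB} and recover its exact constant. The starting point is Girsanov's theorem relative to Brownian motion started from $h^0$. Because $\alpha_\theta=\nabla A_\theta$, Itô's formula gives
\begin{align*}
\int_0^T \alpha_\theta(X_t)\,dX_t = A_\theta(X_T)-A_\theta(X_0)-\tfrac12\int_0^T\alpha_\theta'(X_t)\,dt.
\end{align*}
Hence $\frac{d\cQ_\theta}{d\mathbb{W}_{h^0}}(X)=\exp\{A_\theta(X_T)-A_\theta(X_0)-\int_0^T\tfrac12(\alpha_\theta^2+\alpha_\theta')(X_t)dt\}$.

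Next I will change reference measure from $\mathbb{W}_{h^0}$ to $\cZ_{h_\theta}$. Both share the Brownian-bridge conditional given $(X_0,X_T)$, so the density ratio is the ratio of endpoint densities. That ratio is $h^0(X_0)\,\mathcal N(X_T;X_0,T)/h_\theta(X_0,X_T)$, which equals $c_\theta\sqrt{2\pi T}\,\exp\{-(A_\theta(X_T)-A_\theta(X_0))\}$. The $A_\theta$ terms cancel. Writing $\tfrac12(\alpha_\theta^2+\alpha_\theta')=\phi_\theta+\alpha^\downarrow_\theta$ then gives
\begin{align*}
\frac{d\cQ_\theta}{d\cZ_{h_\theta}}(X)=\sqrt{2\pi T}\,\exp\{\log c_\theta-\alpha^\downarrow_\theta T\}\exp\Big(-\int_0^T\phi_\theta(X_t)dt\Big).
\end{align*}
Here $\sqrt{2\pi T}$ is free of $\theta$. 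This is the step I expect to need the most care: the Gaussian normalization, the sign of the $A_\theta$ terms and the definition of $h_\theta$ all have to match exactly. Otherwise a spurious $\theta$-dependent factor survives.

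\textbf{Step 2 (joint with data and $\theta$).} The joint law is $\ptheta(d\theta)\,l_Y(X)\,\cQ_\theta(dX)$. Substituting Step 1 gives
\begin{align*}
\ptheta(d\theta)\,l_Y(X)\,\cZ_{h_\theta}(dX)\,e^{-\alpha^\downarrow_\theta T+\log c_\theta}\,e^{-\int_0^T\phi_\theta(X_t)dt}.
\end{align*}
Only the data normalization is dropped here, and it does not depend on $\theta$ or $X$.

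\textbf{Step 3 (Poisson augmentation).} I will reuse the argument behind \Cref{thrm:joint} pointwise in $(\theta,X)$. For fixed $(\theta,X)$, the thinning and superposition theorems show that integrating $\mathcal{M}_{\Delta_\theta(X)}(d\psi)\prod_{g\in\psi}\frac{M_\theta(X)-\phi_\theta(X_g)}{\Delta_\theta(X)}$ over $\psi$ gives $\exp\{-\int_0^T(\phi_\theta(X_t)-m_\theta(X))dt\}$. This is the probability that every event of the rate-$\Delta_\theta(X)$ process is thinned away when the rejection rate is $\phi_\theta-m_\theta$. Multiplying by $\exp(-m_\theta(X)T)$ reproduces $\exp(-\int_0^T\phi_\theta(X_t)dt)$ exactly, with no $\theta$-dependent constant. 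Replacing this factor in Step 2 by the augmented expression therefore yields the stated joint measure, whose $(\theta,X)$-marginal is the correct posterior.
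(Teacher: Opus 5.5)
Your proposal is correct and follows essentially the same route as the paper. You apply Girsanov relative to Brownian motion started from $h^0$, change the reference measure to $\cZ_{h_\theta}$ via the endpoint-density ratio (so the $A_\theta$ terms cancel and only $c_\theta e^{-\alpha^\downarrow_\theta T}$ survives as a $\theta$-dependent constant), and then use the zero-event thinning identity (which the paper states via Campbell's theorem) to absorb $\exp(-\int_0^T\phi_\theta(X_t)\,dt)$ into the $\psi$-augmentation. One harmless slip: the endpoint-density ratio is $c_\theta(2\pi T)^{-1/2}e^{-(A_\theta(X_T)-A_\theta(X_0))}$, not $c_\theta\sqrt{2\pi T}\,e^{-(A_\theta(X_T)-A_\theta(X_0))}$, but this factor is free of $\theta$, so the proportionality is unaffected.
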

As $\cZ^+_{h_\theta}(dX) \propto h_\theta(X_0,X_T)dX_0dX_T \cZ(dX | X_0,X_T)l_Y(X)$, the conditional $\theta \mid X,\psi$ is then proportional to the above expression, with the likelihood term $l_Y(X)$ dropped.
This distribution, while nonstandard, can be simulated from using standard %
MCMC techniques such as MH or slice sampling. A limitation is the possibility of strong coupling between $\theta$ and $X$, slowing convergence. \citet{beskos2006exact, sermaidis2013markov} describe non-centered reparametrizations that can alleviate this.
We outline how to extend our \texttt{GibbsEA3} algorithm to \emph{jointly} update $X$ and $\theta$ while conditioning only on $\psi$. The same idea also applies to \texttt{GibbsEA2}.

The first step, updating $\psi,\xi \mid \theta,X$, remains the same. For the second step, 
we now target the joint $\mathcal{P}(d\theta,dX,d\xi \mid \psi) = \mathcal{P}(d\theta,dX \mid \psi)\mathcal{M}_{\AuxRate}(d\xi)$. Accordingly, we modify the first part of this step (that updates $X_S$ to $\widetilde{X}_S$) to now jointly update $\widetilde{X}_S$ and $\widetilde{\theta}$ using a Markov kernel targeting $\ptheta(\theta)p_{\cZ_{h_\theta}}(X_S)l_Y(X_S) := \ptheta(\theta)h_{\theta}(X_0,X_T)p_{\cZ}(X_S\mid X_0, X_T)l_Y(X_S)$. Here, we drop the $h$-subscript in $p_{\cZ}(X_S\mid X_0, X_T)$ (recall this is just the Gaussian marginal of a Brownian bridge) to emphasize it does not depend on $\theta$. For this step, we can either continue to use HMC (with state-space augmented to include $\theta$), or use one of the many MCMC schemes for posterior simulation of Gaussian process hyperparameters~\citep[e.g.][]{murray2010slice}. For the common setting of Gaussian noise, this is especially simple since $p(Y \mid X_0, X_T)$ is tractable: run HMC targeting $\ptheta(\theta)h_\theta(X_0,X_T)p(Y \mid X_0, X_T)$ to produce $(\widetilde{\theta},\widetilde{X}_0,\widetilde{X}_T)$, and then simulate $\widetilde{X}_S\mid \widetilde{X}_0,\widetilde{X}_T$ from the Gaussian $p_{\cZ}(X_S\mid X_0, X_T)l_Y(X_S)$. Whichever kernel we choose, it is easy to see that the acceptance probability is %
\begin{align}
    \text{acc} 
    &=     \frac{ \exp(- M_{\widetilde{\theta}}({\Xpr^{\Updownarrow}}) \cdot T) \cdot 
     \prod_{e \in \psi} \left({M_{\widetilde{\theta}}({\Xpr^{\Updownarrow} })} - {\phi_{\widetilde{\theta}}(\widetilde{X}_e)}\right)\cdot \exp\{-\alpha^{\downarrow}_{\widetilde{\theta}}T + \log c_{\widetilde{\theta}}\}} %
     { \exp(- M_\theta(\layer) \cdot T) \cdot 
     \prod_{e \in \psi} \left({M_\theta(\layer)} - {\phi_\theta(X_e)}\right)\cdot \exp\{-\alpha^{\downarrow}_{\theta}T + \log c_{\theta}\}}. %
     \label{eq:path_par_acc}
\end{align}
Our algorithm now has the flavor of~\citet{zhang2021mjp}, but adapted from the setting of Markov jump processes to our more complex setting.

\section{Experiments}
In this section and in the Supplementary Material, we  assess the performance of our MCMC samplers on several examples within the classes EA2 and EA3.
Unless specified otherwise, we ran our Gibbs samplers for 10,000 iterations, using the first 2,000 as burn-in. We consider three versions of our proposed algorithms, depending on the choice of the Markov kernel targeting $\pi_Y(X_S) = p_{\cZ_h}(X_S)l_Y(X_S)$ in~\cref{eq:tgt_decom}: 1) a general HMC-based approach we call~\texttt{GibbsHMC}, 2) another general (but inefficient) independent Metropolis-Hastings kernel, where we target $\pi_Y(X_S)$ with proposals from the prior $p_{\cZ_h}(X_S)$ (we call this \gibbsprior), and 3) one applicable only to Gaussian likelihoods,  {where we sample from $\pi_Y(X_S)$ using a combination of lower-dimensional HMC and exact Gaussian sampling} (we call this \gibbspost). 
For the HMC kernel of \gibbshmc, we used fairly default hyperparameter values of \(\epsilon = 0.1, L = 10\), and $|S|\times|S|$ mass matrix $M$ set to the Hessian of the log-posterior distribution at its mode. We include results in the supplement demonstrating the robustness of our algorithm to these settings.
Recall  our HMC sampler operates on a fixed $|S|$-dimensional space; this is in contrast to~\citet{wang2020exact} which additionally operates on $X_\Psi$. Since the cardinality of $\Psi$ changes from iteration to iteration, tuning $M$ is much simpler in our case.

For \texttt{GibbsEA3}, we set the rate of auxiliary Poisson process to $\AuxRate = 2$. %
We would expect that increasing this improves mixing at the cost of greater computation, though in the supplement we show how this has only a small effect on overall mixing.

We also include as a baseline the
particle MCMC sampler~\citep{andrieu2010particle} (\pmcmc): this is a flexible and standard algorithm for posterior path and parameter simulation (whose simplicity in our setting comes at the cost of discretization bias). %
Because of its greater computational expense, we ran \pmcmc\ for 5,000 iterations with a discretization timestep \( \Delta t = 0.001 \) and with 50 particles. In~\Cref{sec:parinf_cir}, we also compare with \pmcmc\ with no discretization error.

We quantify sampler efficiency using effective sample size per unit time (ESS/s) obtained by dividing the median effective sample sizes for a number of statistics (in most cases, the path values at each observation, and if applicable, also the parameter) by the wall-clock runtime. %

\subsection{\texorpdfstring{EA2 example: $dX_t = p \cdot \exp(-qX_t) \, dt + dW_t$}{EA2 example}}
\label{SDE3}
We consider the above SDE, with parameters $\theta = (p,q)$ where $p > 0$ and $q > 0$; we show in the supplement that this belongs to class EA2.

\noindent \textbf{Gaussian Observations:}
With $p=q=1$, and $X_0=-1$, we simulate paths from this SDE over the interval $[0,T]$.
We then generate $|S|$ observations at times $S=(s_1,\dotsc,s_{|S|})$ spread uniformly over $[0,T]$, with each observation Gaussian distributed, centered at the path value at its time, and with variance $\sigma^2_N$. %
We generate 10 datasets for each configuration of $|S|$ in the set $\{1, 5, 10, 20\}$, $T$ in $\{1,2,5,10\}$, and $\sigma_N \in \{0.1, 1\}$. %
\Cref{fig:ex1_ess_over_time_ln} presents boxplots of effective sample sizes per unit time for different samplers across the different settings.
 
\begin{figure}
 \centering
\includegraphics[width=0.8\textwidth]{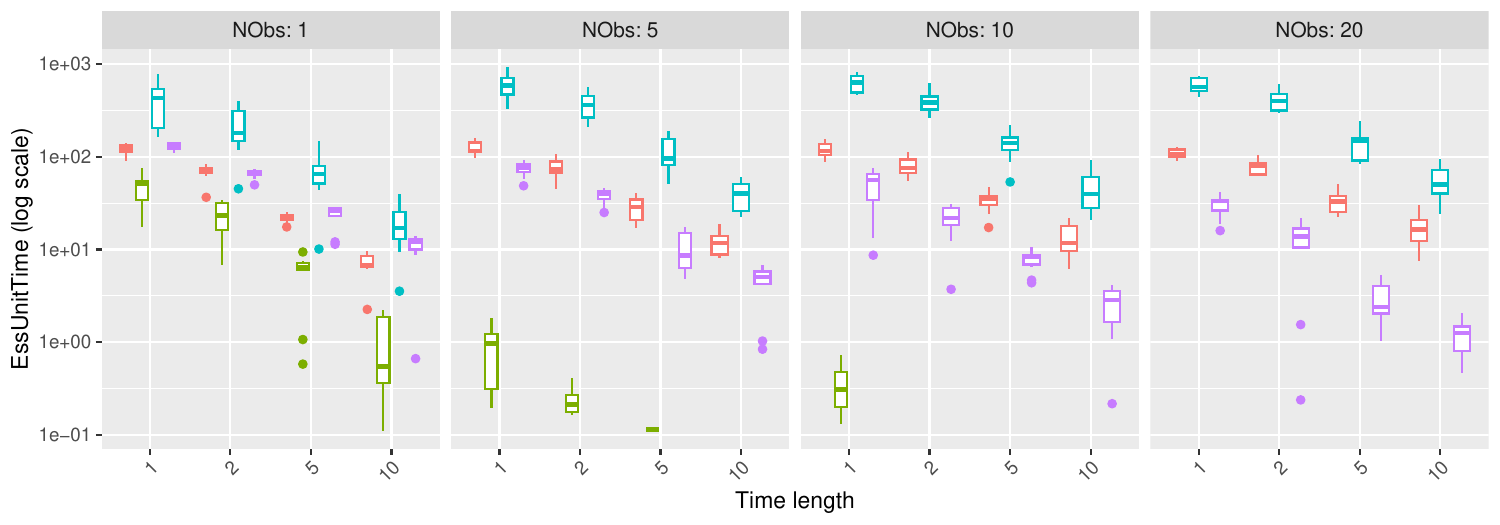}
 \centering
 \includegraphics[width=0.8\textwidth]{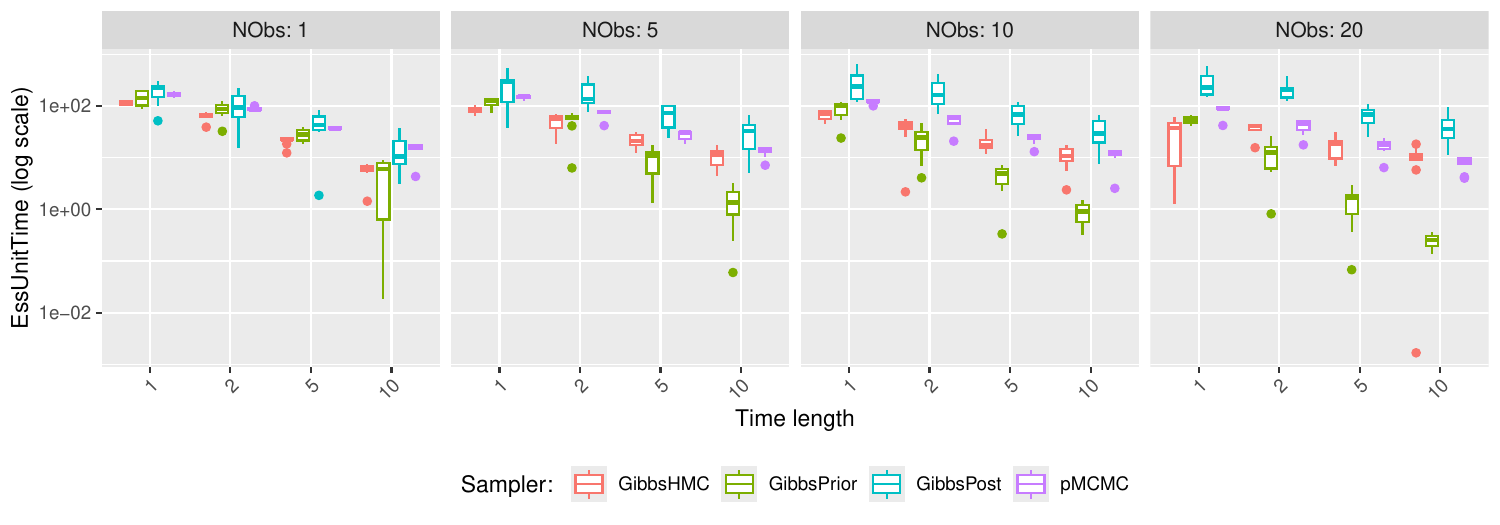}
  \caption{The effective samples per second vs time lengths of the EA2 SDE under Gaussian noise $\sigma_N = 0.1$ (top row) and $\sigma_N=1$ (bottom row)}
\label{fig:ex1_ess_over_time_ln}
\end{figure}

In this setting with Gaussian noise, \gibbspost\ performs best, as we expect. On the other hand, the simplest proposal scheme \gibbsprior\ is the worst performer, especially in settings with small variance \(\sigma_N = 0.1\), a large number of observations, and extended time lengths. %
For clarity, we exclude the worst performances of \gibbsprior\ from the plots.
\gibbshmc\ sits between these two performance-wise, and despite being applicable across a much larger class of likelihoods, exhibits a performance that is competitive with the best sampler. 
It is never less effective than the biased \pmcmc\ sampler in any setting. At the same time, as the sampling problem becomes harder (e.g.\ longer observation time, more observations and less noisy observations), it starts to outperform \pmcmc, with %
the disparity more pronounced in scenarios with smaller Gaussian noise variance. %

\subsection{\texorpdfstring{EA3 example (double-well potential Langevin process): $   dX_{t} = (-pX_{t}^{3}+qX_{t})dt + dW_{t}$}{EA3 example (double-well potential Langevin process)}}
\label{sec:EA3_example}
In the supplement, we verify the above SDE is of class EA3, and derive associated quantities. We set the parameters $p = 1/8$ and $q = 1/2$, and with $X_{0} = 0$.

\noindent \textbf{Gaussian and Poisson Observations:}
With the same setup as earlier, \Cref{fig:gnex6_ess_over_time_ln} compares the performance of \gibbshmc, \gibbspost\ and \pmcmc. Despite our sampler for the EA3 setting being slightly more complex than the EA2 setting, the story is the same as before; if anything, the gap between our proposed \gibbshmc\ and \pmcmc\ is even more clear.
\begin{figure}
 \centering
 \includegraphics[width=0.8\textwidth]{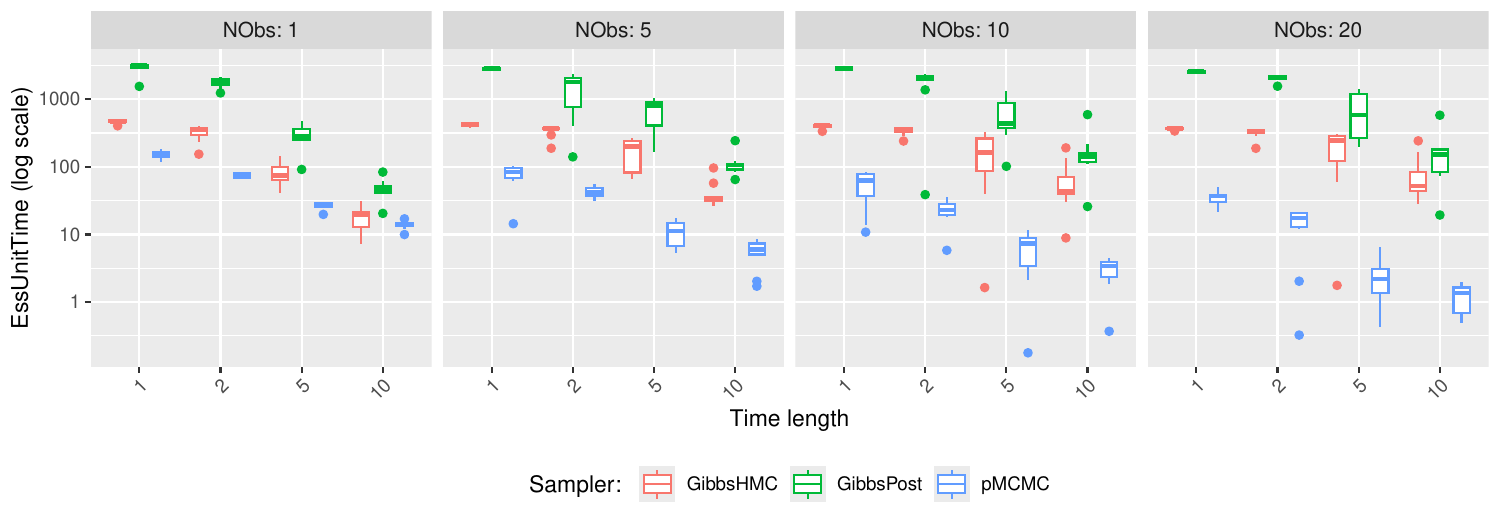}
 \includegraphics[width=0.8\textwidth]{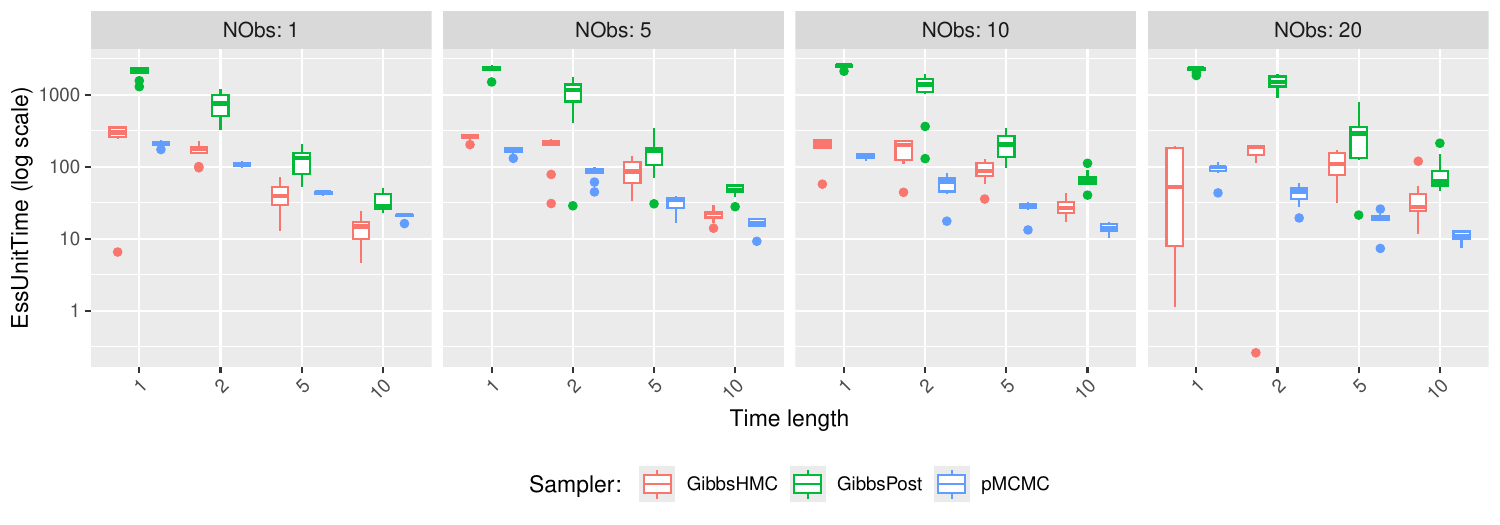}
  \includegraphics[width=0.8\textwidth]{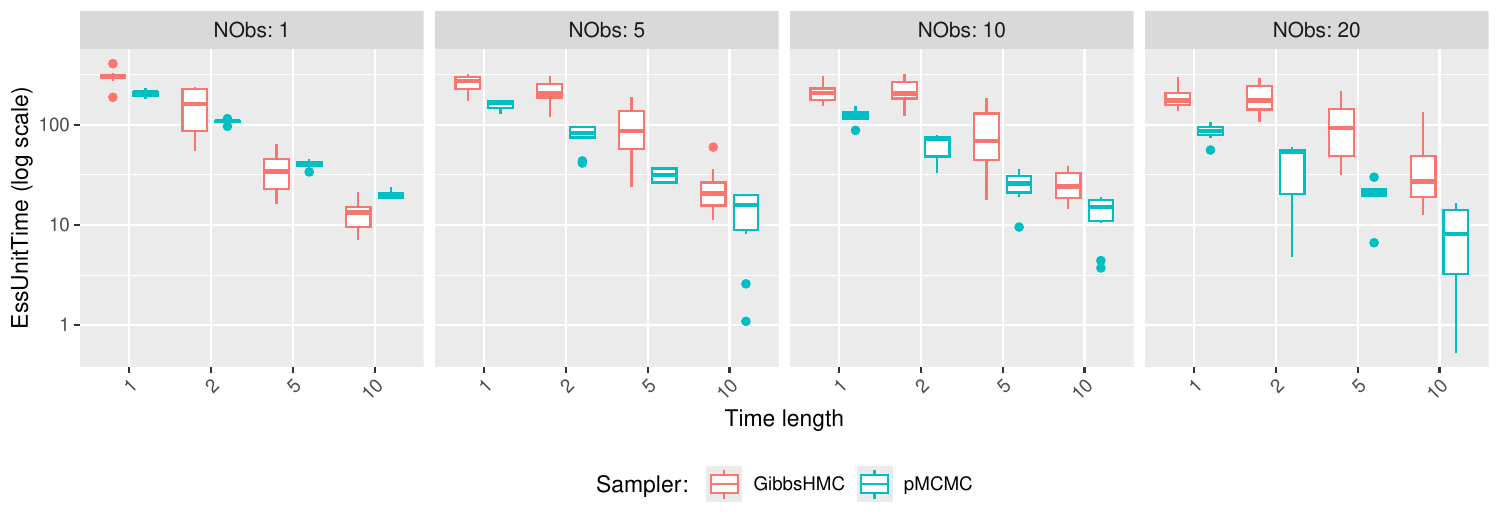}
  \caption{Effective samples per second vs time lengths of EA3 double-well diffusion under Gaussian noise $\sigma_N = 0.1$ (top), $\sigma_N=1$ (middle) and Poisson noise (bottom)}
  \label{fig:gnex6_ess_over_time_ln}
\end{figure}
\Cref{fig:gnex6_ess_over_time_ln} also displays boxplots of effective sample sizes per unit time for the setting with Poisson likelihoods: now each observation is a non-negative integer, with $y_{i} \sim \text{Poi}(\exp(X_{s_{i}}))$.
Now, \gibbspost\ is no longer applicable, but otherwise, the results here show the same pattern as before: our exact approach is competitive with the approximate \pmcmc, but quickly dominates it as the effect of observations becomes stronger and stronger.

\subsection{Parameter inference: CIR process}
\label{sec:parinf_cir}
We illustrate our joint path-parameter update scheme  for the  Cox--Ingersoll--Ross Process (CIR) \citep{cir1985} $dV_t = p(q - V_t)dt + \sigma\sqrt{V_t}dW_t$. Here, $\theta = (p,q,\sigma)$ with $p > 0$, $q > 0$, and $\sigma > 0$. %
Applying the transform $X_t = 2\sqrt{V_t}/\sigma$ yields a unit diffusion coefficient process 
 $ \hspace*{.1in}   dX_t = %
    \left\{\frac{1}{X_t}\left(\frac{2pq}{\sigma^2}-\frac{1}{2}\right)-\frac{pX_t}{2}\right\}%
    dt + dW_t.
$ 
Define the degree $d = 4pq/\sigma^2$ and assume that $d \geq 3$. Under this condition, %
the diffusion belongs to the EA3 class \citep{reutenauer2008} (see also the supplement), %
with
$\phi_{\theta}(x) = \left\{\left(\frac{2pq}{\sigma^2}-1\right)^2 - \frac{1}{4}\right\}\frac{1}{2x^2} + \frac{p^2}{8}x^2 - \frac{p}{4}\left\{\sqrt{(d-1)(d-3)}\right\}$.

We simulate a CIR path with $X_0 = 3.5$  and parameters $(p,q,\sigma) = (1.6,1.1,0.6)$ over the interval $[0,10]$. We record path values at 250 equally spaced observations, each corrupted by independent Gaussian noise with standard deviation $0.2$. 
 We place independent rate-1 exponential priors on $\theta = (p,q,\sigma)^{\top}$, and assume $X_0$ is known.

We implement the joint path-parameter update scheme from~\Cref{sec:bayesframework}.
Since the observation model is Gaussian, we employ the \gibbspost\ version of \texttt{GibbsEA3}. %
\Cref{eq:path_par_acc} requires evaluation of the normalizing constant of $h_{\theta}(X_T \mid X_0)$, $c_{\theta,X_0}$, as well as its gradient $\nabla_{\theta} c_{\theta,X_0}$. These are both straightforward one-dimensional integrals, and are computed numerically using the \texttt{R} function \texttt{integrate()}. %
We compare with \pmcmc, now incorporated into a particle Metropolis-Hastings scheme, with parameters proposed from a random-walk Gaussian proposal with covariance $\Sigma = 0.1\times \mathbb{I}_3$. We note that the CIR process admits a closed-form non-central chi-squared transition density, so that our methodology is not strictly necessary. Nevertheless, this allows us to implement and compare with \pmcmc\ {\em without} time-discretization error. This algorithm is more efficient than the time-discretized version, since the number of steps equals the number of observations, rather than the cardinality of the time-discretization grid. %

Table~\ref{table:pi_gncir} summarizes posterior statistics for $(p,q,\sigma)$ obtained from both samplers, across ten independent runs. %
Both algorithms yield comparable posterior summaries. Across all parameters and repetitions, \gibbspost\ consistently attains markedly higher ESS per second than \pmcmc; this is despite the fact that \pmcmc\ (unlike our method) exploits the tractable transition probabilities of the CIR process.
\begin{table}[!h]
\scriptsize
\centering
\begin{tabular}{ccccccc}
\toprule
Algorithm & Param & Mean & Median & Mode & SD & ESS (per second) \\ \midrule %
\texttt{GibbsPost} & $p$ & $1.220$ & $1.217$ & $1.330$ & $0.522$ & $0.799$\\
 & $q$ & $1.202$ & $0.935$ & $0.471$ & $1.027$ & $6.156$ \\
 & $\sigma$ & $0.565$ & $0.536$ & $0.492$ & $0.254$ & $4.939$\\
\midrule
\texttt{pMCMC} & $p$ & $1.131$ & $1.090$ & $1.148$ & $0.497$ & $0.082$\\
 & $q$ & $1.256$ & $0.925$ & $0.457$ & $0.990$ & $0.194$ \\
 & $\sigma$ & $0.552$ & $0.524$ & $0.531$ & $0.258$ & $0.136$\\
\bottomrule
\end{tabular}
\caption{Comparison of MCMC summary statistics for the CIR process } %
\label{table:pi_gncir}
\end{table}

\subsection{Real-world example: Application to Ice-core data}
\label{sec:real_example}

We consider a dataset\footnote{available at {\tiny\url{https://www.ncei.noaa.gov/pub/data/paleo/icecore/greenland/summit/ngrip/isotopes/ngrip-d18o-50yr-noaa.txt}}} of measurements of the oxygen isotopic composition of the ice ($\delta^{18}\text{O}$) in Greenland. Such measurements were analyzed in~\cite{Kwasniok2009icecore,Kwasniok2012icecore,Kwasniok2013icecore}, and serve as a proxy for Northern Hemisphere temperature variations \citep{NGRIP2004Nature}. We focus on the period spanning $60,000$ up to $20,000$ years before the present. We subsampled the original time series at $250$-year intervals, yielding a total of $160$ observations. We centered the data, and %
rescaled by $\sigma = 3.9$ (the estimate of the diffusion coefficient reported in \cite{Kwasniok2013icecore}), in order to conform to a unit-diffusion coefficient formulation. Finally, we rescaled the observation interval to $[0,4]$. The raw and pre-processed data are displayed in the left and center panels of~\Cref{fig:app_icecore_data}.

\begin{figure}[tbh!]
    \centering
    \begin{minipage}{0.29\textwidth}
        \centering
        \includegraphics[width=\linewidth]{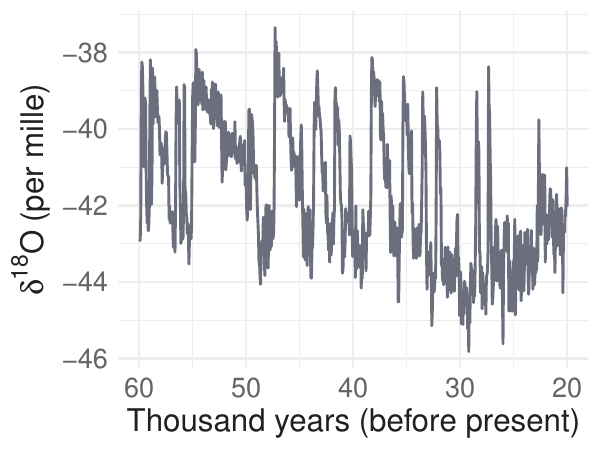}
    \end{minipage}
    \begin{minipage}{0.29\textwidth}
        \centering
        \includegraphics[width=\linewidth]{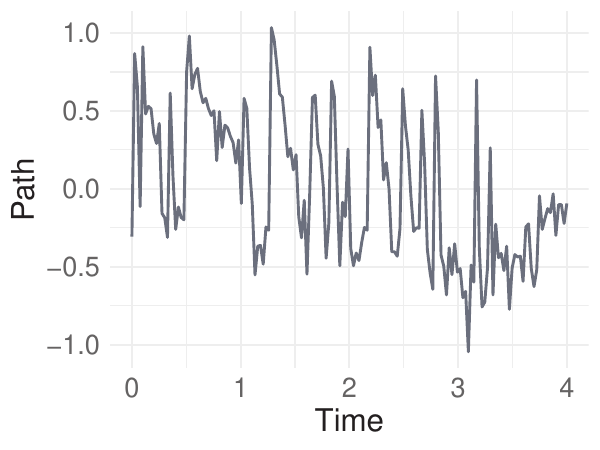}
    \end{minipage}
        \begin{minipage}{0.4\textwidth}
        \centering
        \includegraphics[width=\linewidth]{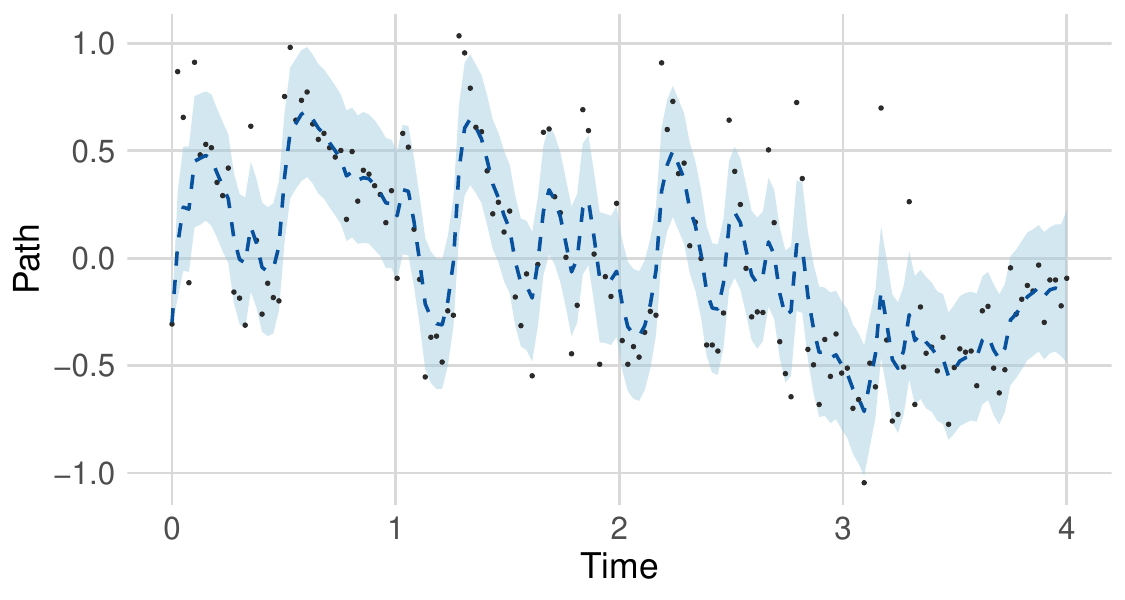}
    \end{minipage}
    \caption{
    (left) raw, and (center) preprocessed $\delta^{18}\text{O}$ data, along with (right) posterior median and $95\%$ credible interval over paths from our methodology}
    \label{fig:app_icecore_data}
\end{figure}
We follow~\citet{Kwasniok2009icecore} and use a bistable diffusion process to describe transitions between cold stadial and warm interstadial states. We choose %
the double-well potential Langevin process described by the SDE below:
\[
 \hspace*{1in}    dX_t = (-pX_t^3 + qX_t)dt + dW_t,\quad X_0 = -0.3,\quad t\in [0,4]  %
    .\]
We place independent exponential priors with rate 2 on $\theta = (p,q)$, and  
a weakly-informative conjugate prior on the noise variance, $\sigma_N^2 \sim \operatorname{IG}(10^{-3},10^{-3})$, where $\operatorname{IG}(a,b)$ denotes an inverse-gamma distribution with shape parameter $a$ and rate parameter $b$.
We performed posterior inference using the \gibbspost\ with joint path-parameter updates following the configuration of the previous experiments.
\begin{figure}[tbh!]
    \centering
    \begin{minipage}{0.3\textwidth}
        \centering
        \includegraphics[width=\linewidth]{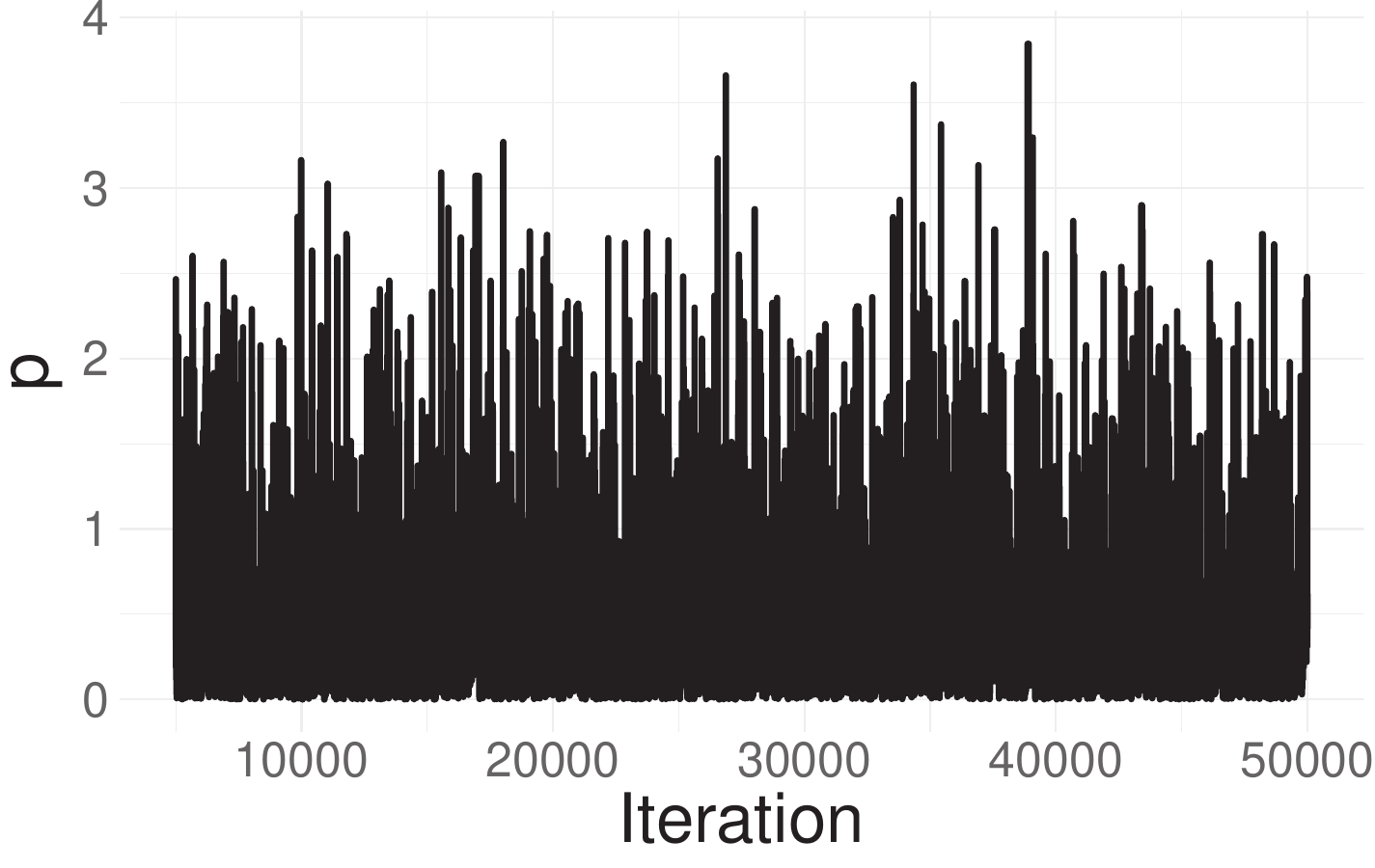}
    \end{minipage}
    \hfill
    \begin{minipage}{0.3\textwidth}
        \centering
        \includegraphics[width=\linewidth]{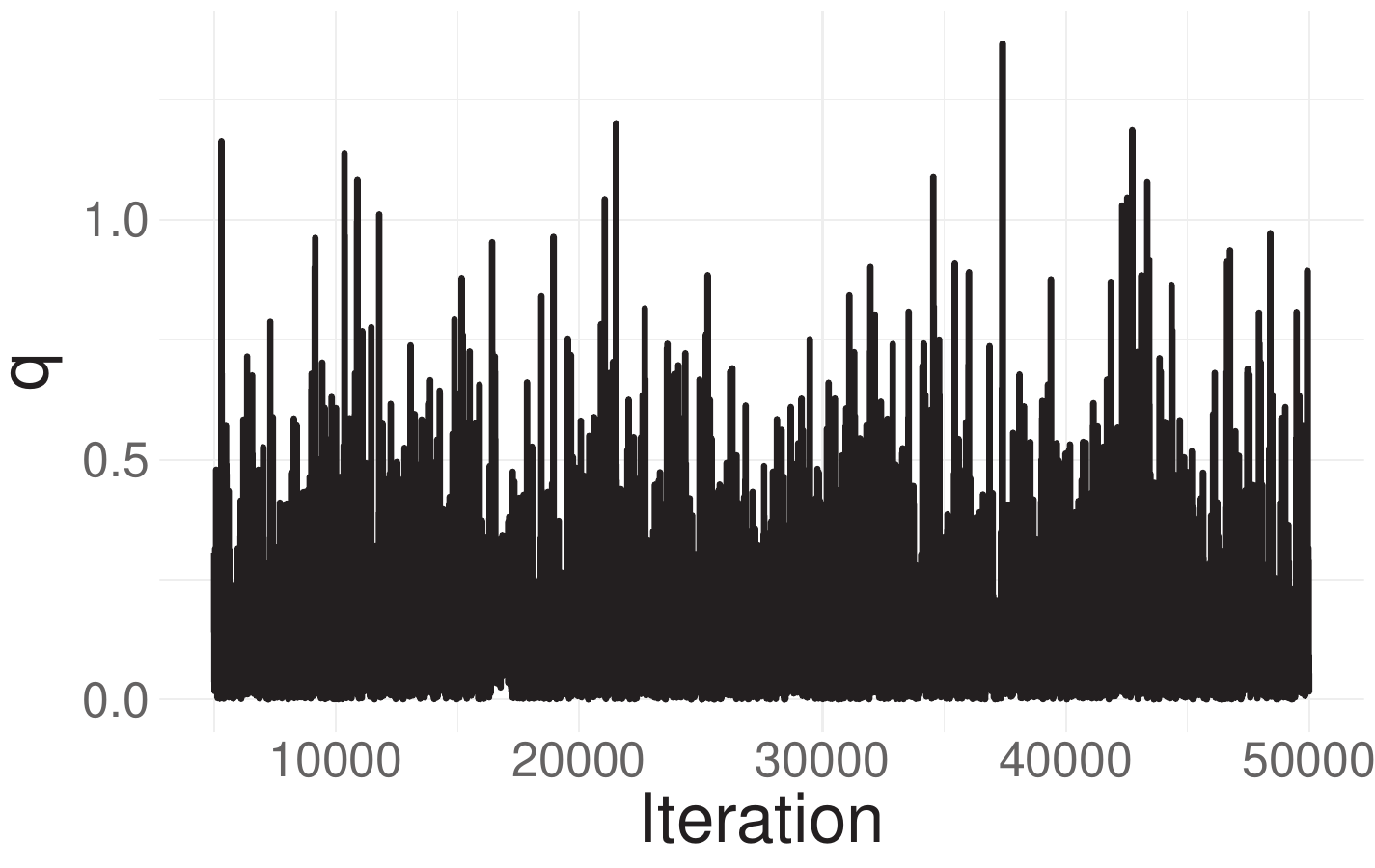}
    \end{minipage}
    \hfill
    \begin{minipage}{0.3\textwidth}
        \centering
        \includegraphics[width=\linewidth]{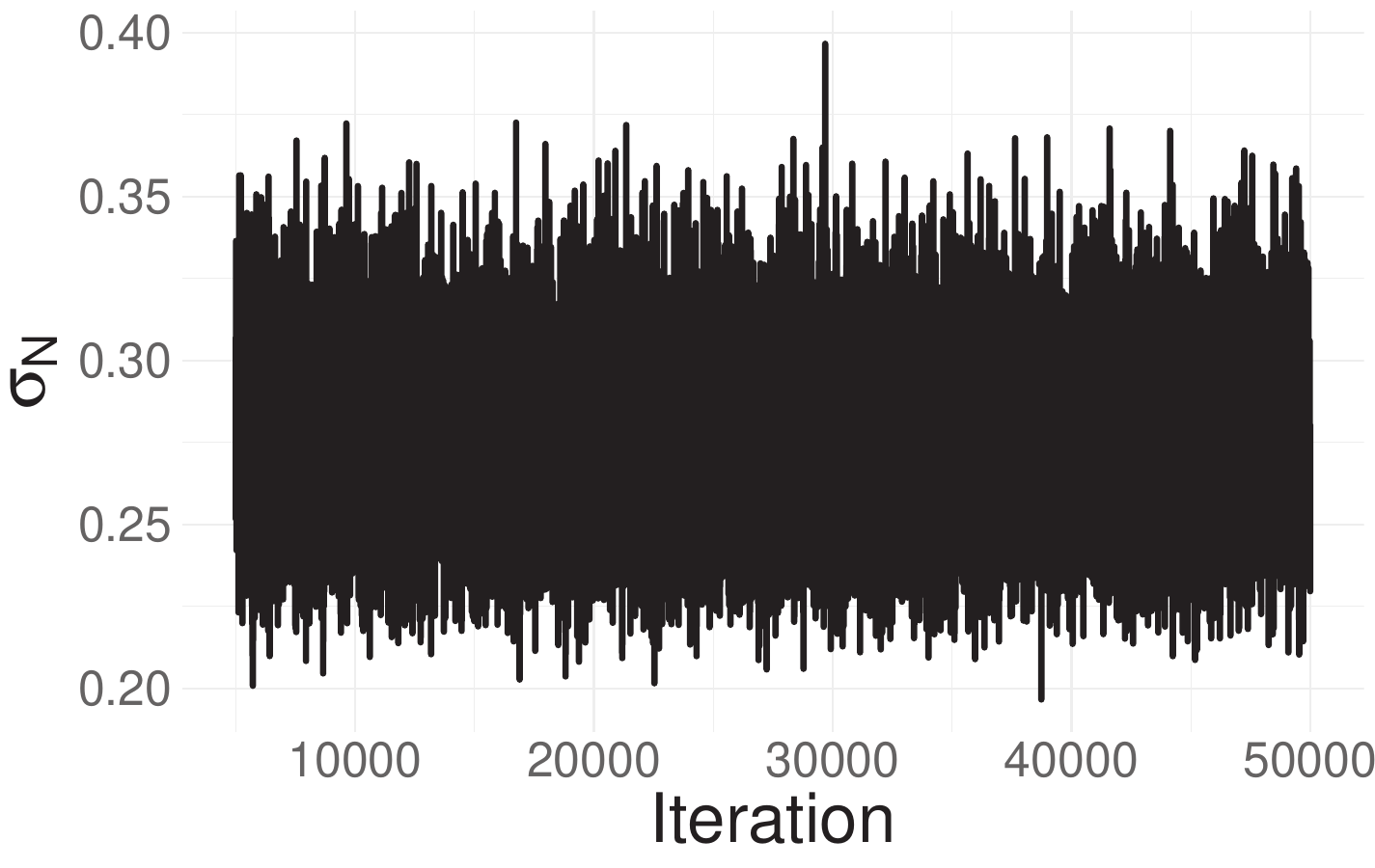}
    \end{minipage}
    \begin{minipage}{0.3\textwidth}
        \centering
        \includegraphics[width=\linewidth]{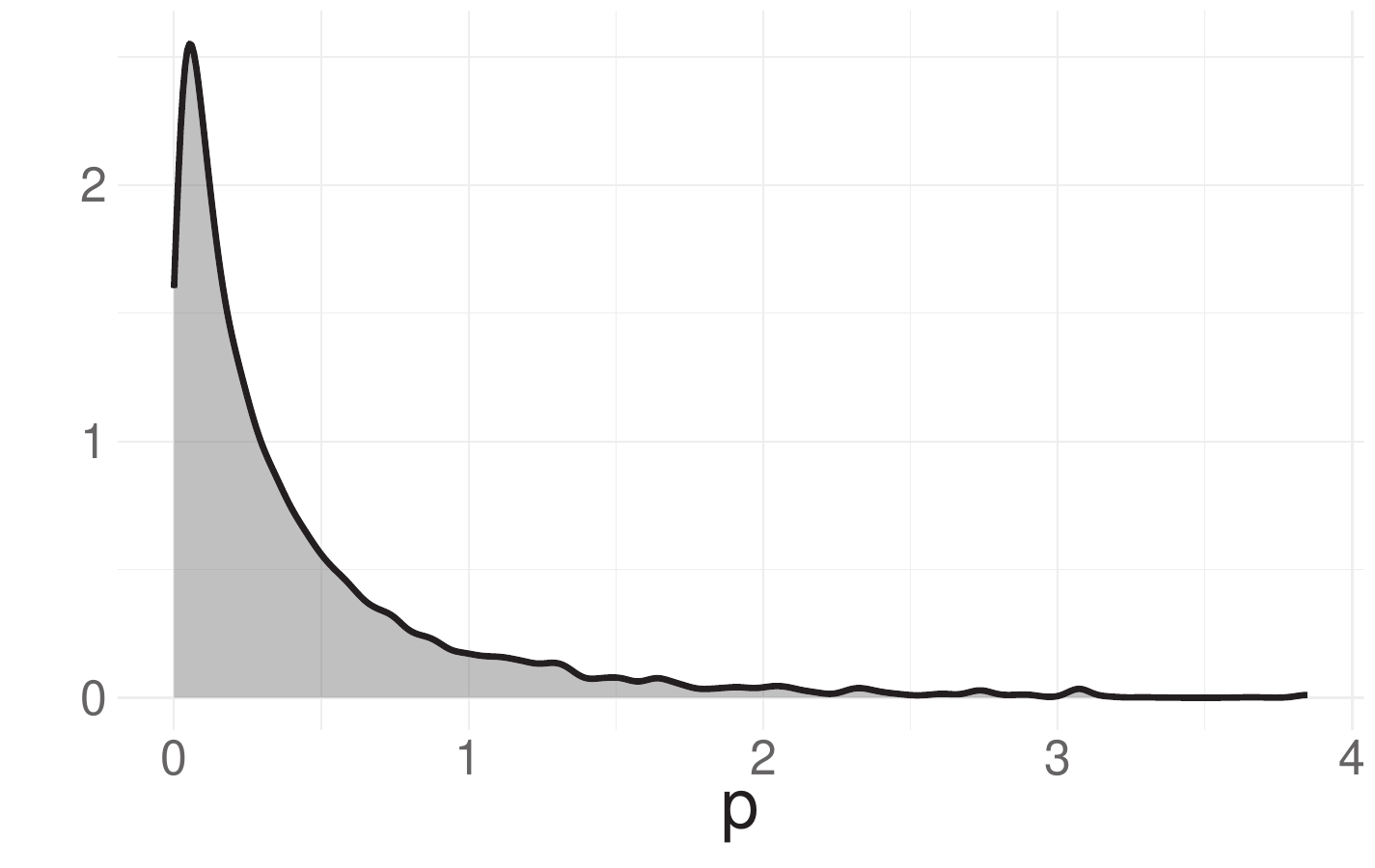}
    \end{minipage}
    \hfill
    \begin{minipage}{0.3\textwidth}
        \centering
        \includegraphics[width=\linewidth]{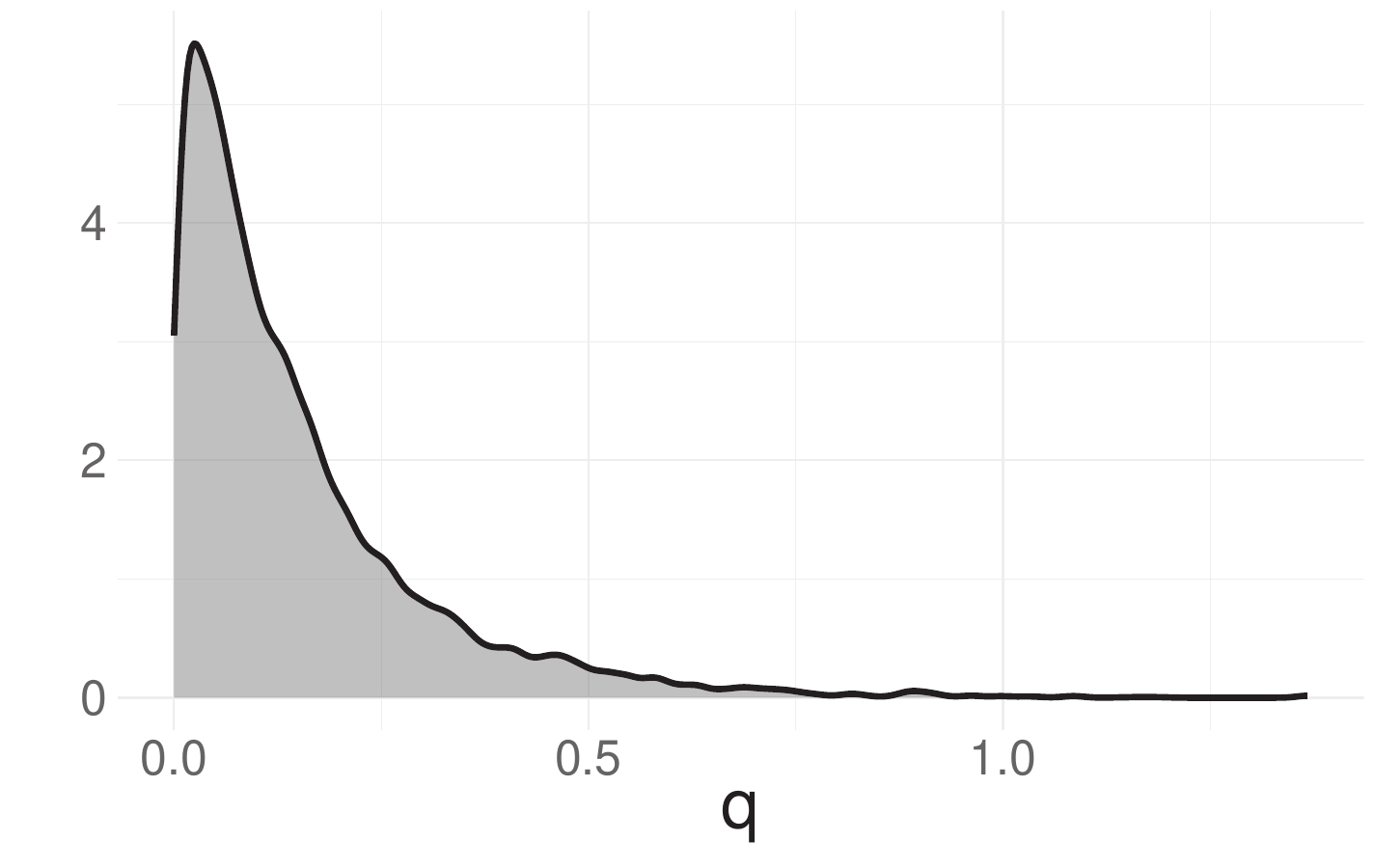}
    \end{minipage}
    \hfill
    \begin{minipage}{0.3\textwidth}
        \centering
        \includegraphics[width=\linewidth]{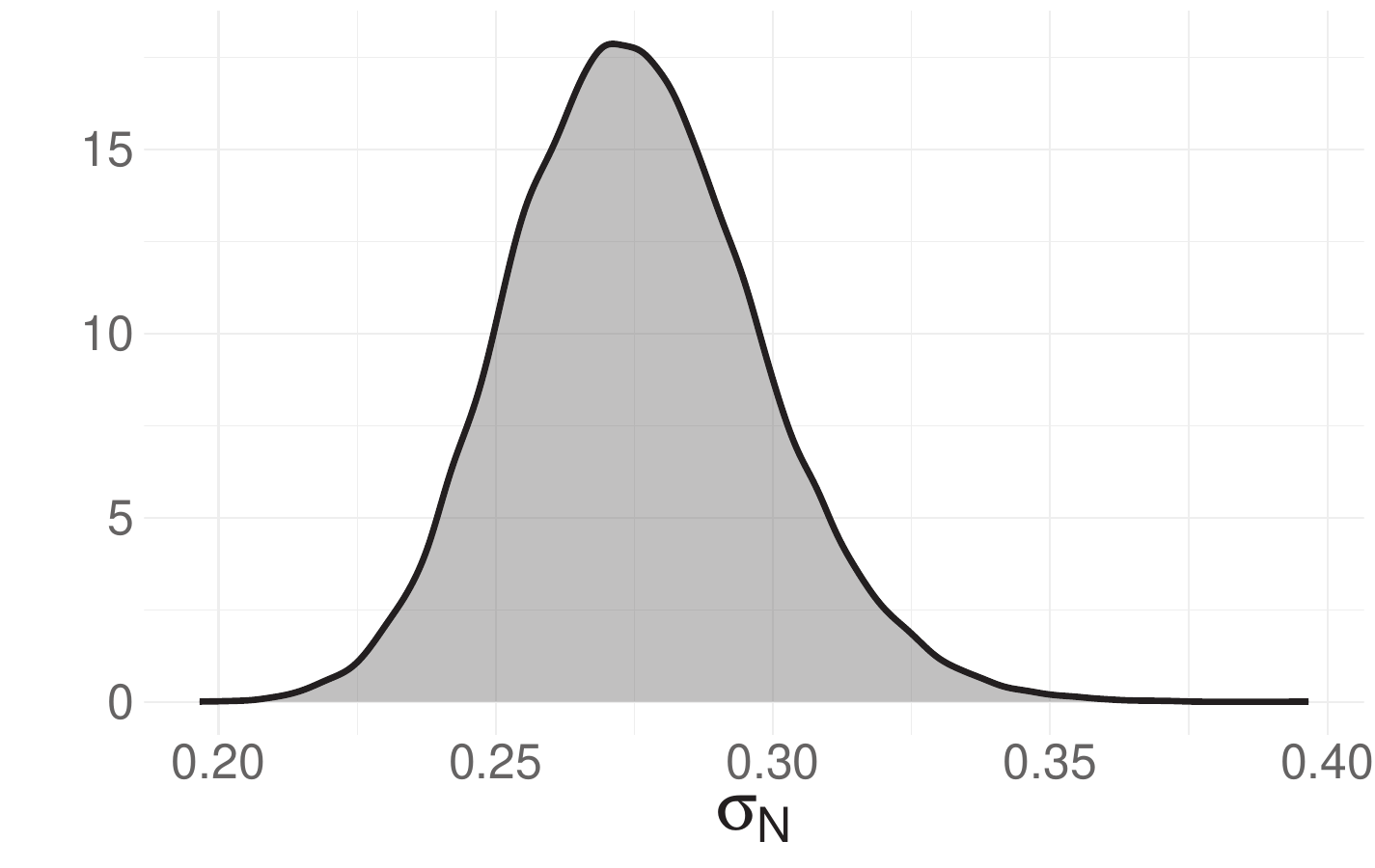}
    \end{minipage}

    \caption{Posterior inference results for the ice-core data. Top row: trace plots of $(p,q,\sigma_N)$. Bottom row: marginal posterior distributions.}
    \label{fig:app_icecore_trace_marginal}
\end{figure}

Our algorithm yielded acceptable mixing of the parameters as shown in Figure~\ref{fig:app_icecore_trace_marginal}.
The posterior distribution of the latent diffusion path is summarized by its median and a $95\%$ credible interval in Figure~\ref{fig:app_icecore_data} (right). The resulting credible band encompasses a substantial fraction of the observations, indicating that the double-well potential Langevin model with additive Gaussian noise provides a reasonable description of the underlying dynamics in the ice-core record.
The maximum a posteriori estimates (MAPs) of $p$, $q$, and $\sigma_N$ are given by $0.0574$, $0.0247$, and $0.2712$, respectively. A quantity of interest is the MAP of the separation between the two potential wells (given by $2\sqrt{q/p}$), and from our simulations, this has value $0.6998$. %

\section{Discussion}
We proposed a flexible framework for MCMC for diffusions of class EA2 and EA3. MCMC is generally unavoidable in the setting of posterior simulation, and by directly focusing on this, our methodology is significantly simpler and more flexible than those based on exact rejection sampling. Our framework is easily customizable to specific applications, e.g.\ the HMC step can be replaced by appropriate data-augmentation schemes for probit, logistic, or point process observations.
We also indicated how our methodology can be useful for parameter inference by reducing the coupling between path and parameter.
\texttt{R} code implementing our methodology is publicly available at the repository: \url{https://github.com/minhyeokstat/GibbsUEA}.
Some directions for future investigation include extending our methodology to jump diffusions, considering situations where the parameter also affects the diffusion term, better understanding the ergodicity properties of our sampler both in path and parameter space, and finally, developing modeling methodology that can exploit the availability of efficient sampling algorithms.

\bibliographystyle{abbrvnat}
\bibliography{ref}

\pagebreak\appendix
\section{Proofs}

\setcounter{proposition}{2}  %

{
\begin{proposition}
With a prior $\ptheta(\theta)$ on $\theta$, the joint law of $\theta, X, \psi$ and $Y$ is given by
\begin{align}
    \mathcal{P}(d\theta,dX,d\psi) 
        \propto & \ptheta(d\theta)l_Y(X)\cZ_{h_{\theta}}(dX)\mathcal{M}_{\Delta_\theta(X)}(d\psi)\exp(-m_\theta(X)\cdot T) \nonumber \\
        & \prod_{g\in \psi}\left(\frac{M_\theta(X)-\phi_\theta(X_{g})}{\Delta_\theta(X)}\right)\times
        {\exp\left\{-\alpha_\theta^{\downarrow}T+\log c_{\theta}\right\}}. \label{eq:app_joint_appx}
\end{align}
\end{proposition}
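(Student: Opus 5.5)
The plan is to make every $\theta$-dependent normalizing constant explicit in Theorem~\ref{thrm:joint}, since the joint over $\theta$ must use the \emph{unnormalized} density of $\cQ_\theta$ relative to a $\theta$-free reference measure. I will take Wiener measure $\mathbb{W}$ on $[0,T]$ with $X_0\sim h^0$ as this reference (the only dependence on $h^0$ is through $X_0$, which is fine as long as $h^0$ does not depend on $\theta$). The joint law is then $\ptheta(d\theta)\,\cQ_\theta(dX)\,l_Y(X)$ composed with a kernel for $\psi$.

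\textbf{Step 1.} By Girsanov's theorem with unit diffusion,
\[
\frac{d\cQ_\theta}{d\mathbb{W}}(X)=\exp\Big(\int_0^T\alpha_\theta(X_t)\,dX_t-\tfrac12\int_0^T\alpha_\theta^2(X_t)\,dt\Big).
\]
Itô's formula applied to $A_\theta$ gives $\int_0^T \alpha_\theta\,dX_t=A_\theta(X_T)-A_\theta(X_0)-\tfrac12\int_0^T\alpha_\theta'(X_t)\,dt$. Hence
\[
\frac{d\cQ_\theta}{d\mathbb{W}}=\exp\{A_\theta(X_T)-A_\theta(X_0)\}\exp\Big(-\int_0^T\phi_\theta(X_t)\,dt-\alpha^\downarrow_\theta T\Big).
\]
This expression is exactly normalized; no $\propto$ is involved.

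\textbf{Step 2.} Compute $d\cZ_{h_\theta}/d\mathbb{W}$. Both measures share the Brownian-bridge conditional given $(X_0,X_T)$, so this derivative is the ratio of endpoint densities:
\[
\frac{h_\theta(X_0,X_T)}{h^0(X_0)(2\pi T)^{-1/2}e^{-(X_T-X_0)^2/2T}}=\frac{\sqrt{2\pi T}}{c_\theta}\exp\{A_\theta(X_T)-A_\theta(X_0)\}.
\]
Dividing Step 1 by this gives
\[
\frac{d\cQ_\theta}{d\cZ_{h_\theta}}=\frac{c_\theta}{\sqrt{2\pi T}}\exp\Big\{-\alpha^\downarrow_\theta T-\int_0^T\phi_\theta(X_t)\,dt\Big\},
\]
which is the exact form of \cref{eq:RN_der_BB}, with the constant $\exp\{-\alpha^\downarrow_\theta T+\log c_\theta\}$ now visible. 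The factor $\sqrt{2\pi T}$ does not depend on $\theta$ and can be dropped.

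\textbf{Step 3.} Write $e^{-\int\phi_\theta}$ as a Poisson integral, as in Theorem~\ref{thrm:joint}. For fixed $X$ and $\theta$, the density of $\mathcal{M}_{\Delta}$ relative to $\mathcal{M}_1$ is $e^{-(\Delta-1)T}\Delta^{|\psi|}$. Integrating $\prod_{g\in\psi}(M-\phi(X_g))/\Delta$ against $\mathcal{M}_\Delta$ then gives the standard Poisson functional
\[
\exp\Big(-\int_0^T\big[\Delta-(M-\phi(X_t))\big]\,dt\Big)=\exp\Big(-\int_0^T\phi(X_t)\,dt+mT\Big).
\]
Multiplying by $e^{-mT}$ recovers $e^{-\int\phi_\theta}$ exactly, with no leftover $\theta$-dependent constant. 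So marginalizing $\psi$ out of \eqref{eq:app_joint_appx} yields $\ptheta(d\theta)\,l_Y(X)\,\cQ_\theta(dX)$ up to a $\theta$-free constant. This identifies \eqref{eq:app_joint_appx} as the joint law, with $\psi\mid\theta,X$ given by Proposition~\ref{prop:condpoi}.

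\textbf{Main obstacle.} The delicate point is Step 3, specifically ensuring that the Poisson-process factors introduce no hidden $\theta$-dependent normalizer. This is why the densities must be taken relative to the fixed measure $\mathcal{M}_1$ and the exact Laplace functional identity for Poisson processes used, rather than "$\propto$". A secondary obstacle is Step 2, which requires care that the Brownian-bridge conditional is $\theta$-free and that $h^0$ carries no $\theta$.
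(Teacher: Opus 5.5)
Your proposal is correct and follows essentially the same route as the paper. Both use Girsanov relative to Brownian motion started from $h^0$, then the endpoint-density ratio to pass to $\cZ_{h_\theta}$, which exposes the factor $c_\theta e^{-\alpha^\downarrow_\theta T}/\sqrt{2\pi T}$, and then the Campbell/Poisson functional identity to show the $\psi$-marginal is exactly $\cQ_\theta$. The only difference is that you derive the Girsanov density via It\^o's formula yourself, whereas the paper quotes it from \citet{beskos2005exact}.
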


\begin{proof}
    Throughout the proof, we keep all $\theta$-dependent terms explicit. Our starting point is the Radon-Nikodym derivative between the diffusion measure $\cQ_{\theta}$ and $\cZ$ \citep{beskos2005exact}:
    \begin{align*}
        \frac{d\cQ_\theta}{d\cZ}(X) &= \exp\left\{A_\theta(X_{T})-A_\theta(X_{0}) - \frac{1}{2}\int_{0}^{T}\left(\alpha_\theta^2(X_{s}) + \alpha_\theta^{\prime}(X_{s})\right)ds\right\}\\
        &= \exp\left\{A_\theta(X_{T})-A_\theta(X_{0}) -\alpha_\theta^{\downarrow}T- \int_{0}^{T}\phi_\theta(X_{s})ds\right\}.
    \end{align*}
     Now, we replace the reference measure of $X$ from $\cZ$ to $\cZ_{h_{\theta}}$. From its definition, the Radon-Nikodym derivative between $\cZ_{h_{\theta}}$ and $\cZ$ can be written as
    \begin{align*}
        \frac{d\cZ_{h_{\theta}}}{d\cZ}(X) &= \frac{h_{\theta}(X_{T}, X_{0})}{h^0(X_0)N(X_{T}\mid X_{0},T)}\cdot\frac{\cZ_{h_{\theta}}(dX \mid X_{0},X_{T})}{\cZ(dX \mid X_{0},X_{T})}
        = \frac{h_{\theta}(X_{T}, X_{0})}{h^0(X_0)N(X_{T}\mid X_{0},T)}\\
        &= \frac{\sqrt{2\pi T}}{c_{\theta}}\exp\left(A_\theta(X_{T})-A_\theta(X_{0})\right)
    \end{align*}
    Putting these two equations together gives
    \begin{align*}
        \frac{d\cQ_\theta}{d\cZ_{h_\theta}}(X) &= \frac{c_{\theta}}{\sqrt{2\pi T}} \exp\left\{ -\alpha_\theta^{\downarrow}T- \int_{0}^{T}\phi_\theta(X_{s})ds\right\}.
    \end{align*}
        Now, as in~\cref{eq:gibbs_target}, define a measure on the joint path-event space as: %
    \begin{align}\hspace*{-.2in}
       \frac{d  \ \ \mathcal{P}_\theta(X,\psi)\hfill}{d \left(\cZ_{h_\theta}(X)\times \mathcal{M}_{\Delta_\theta(X)}(\psi)\right)} = & %
       \frac{c_{\theta}}{\sqrt{2\pi T}} \exp\left\{ -\alpha_\theta^{\downarrow}T-m_\theta(X)\cdot T\right\} \nonumber \\
       & \prod_{g \in \psi} \left(\frac{M_\theta(X) -  \phi_\theta(X_g)}{\Delta_\theta(X)}\right)
    \end{align}
From the Campbell's theorem~\citep{kingman1992poisson}, we have that
    \begin{align}
        \mathbb{E}_{\psi\sim \mathcal{M}_{\Delta_{\theta}(X)}}&\left[\prod_{g\in \psi}\left( \frac{M_\theta(X) - \phi_{\theta}(X_{g})}{\Delta_{\theta}(X)}\right)\right]    \nonumber\\ & = \mathbb{E}_{\psi\sim \mathcal{M}_{\Delta_{\theta}(X)}}\left[\exp \sum_{g\in \psi}\log\left(1 - \frac{\phi_{\theta}(X_{g})- m_\theta(X)}{\Delta_{\theta}(X)}\right)\right] \nonumber\\
        &= \exp\left(m_\theta(X)\cdot T -\int_{0}^{T}\phi_{\theta}(X_{s})ds\right)  %
        \label{eq:app_campb}
    \end{align}   
This shows that $\cP_\theta(dX,d\psi)$ has $\cQ_\theta(dX)$ as its marginal. Since $\cQ_\theta$ is a probability measure, $\cP_\theta$ itself is a probability measure, with all constants accounted for.
Multiplying both sides by $\ptheta(\theta)l_Y(X)$, we have that the joint over $(\theta,X,\psi,Y)$ equals
\begin{align}
    P(dX,d\psi,d\theta,Y) & = \ptheta(\theta)l_Y(X)\cZ_{h_\theta}(dX)\mathcal{M}_{\Delta_\theta(X)}(d\psi) \frac{c_{\theta}}{\sqrt{2\pi T}} \nonumber \\
    & \quad \exp\left\{-\alpha_\theta^{\downarrow}T-m_\theta(X)\cdot T\right\}  \prod_{g \in \psi} \left(\frac{ M_\theta(X) - \phi_\theta(X_g)}{\Delta_\theta(X)}\right)
    \end{align} \label{eq:app_joint}
Since we have defined $\cP(dX,d\psi,d\theta) = P(dX,d\psi,d\theta,Y)$, this proves our result.
\end{proof}

\setcounter{theorem}{0}  %

\begin{theorem}
Define a probability $\cP$ on the path-point process product space as
\begin{equation}
    \label{eq:gibbs_target_appx}
    \cP(dX,d\psi) \propto \cZpost_{h}(dX) \cdot \mathcal{M}_{\Delta(X)}(d\psi) \cdot \exp(-m(X)T) \prod_{g \in \psi} \left(\frac{M(X) - \phi(X_g)}{\Delta(X)}\right).
\end{equation}
 Then, the probability measure $\cQpost$ (i.e.\ the SDE posterior) is the marginal of $\cP$: %
$
\cQpost(dX) %
\propto \int_{\psi} \cZpost_{h}(dX) \times \exp(-m(X)T) \times \mathcal{M}_{\Delta(X)}(d\psi) \times \prod_{g \in \psi} \left( \frac{M(X) - \phi(X_g)}{\Delta(X)}\right).
$
\label{thrm:joint_appx}
\end{theorem}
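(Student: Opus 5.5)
The plan is to integrate out $\psi$ for fixed $X$ and then identify the resulting density with respect to $\cZpost_h$ as the one in \cref{eq:post_dens}. This is the same computation used in the proof of Proposition~3 just above, with the parameter prior and the $\theta$-dependent constants dropped.

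\textbf{Step 1: integrate out $\psi$.} Fix a path $X$. The only factors in \cref{eq:gibbs_target_appx} that depend on $\psi$ are $\mathcal{M}_{\Delta(X)}(d\psi)$ and the product over $g\in\psi$. Integrating over $\psi$ therefore amounts to computing
\[
\mathbb{E}_{\psi\sim\mathcal{M}_{\Delta(X)}}\Bigl[\prod_{g\in\psi} f(g)\Bigr], \qquad f(g)=\frac{M(X)-\phi(X_g)}{\Delta(X)}.
\]
By construction $m(X)\le\phi(X_g)\le M(X)$, so $f(g)\in[0,1]$. For a homogeneous rate-$\Delta$ Poisson process, Campbell's theorem (the probability generating functional) gives
\[
\mathbb{E}\Bigl[\prod_{g\in\psi} f(g)\Bigr]=\exp\Bigl(-\Delta\int_0^T\bigl(1-f(t)\bigr)\,dt\Bigr).
\]
Here $\Delta(1-f(t))=\phi(X_t)-m(X)$, so the expectation equals $\exp\bigl(m(X)T-\int_0^T\phi(X_t)\,dt\bigr)$. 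This is exactly \cref{eq:app_campb}. Equivalently, one can condition on $|\psi|=n$ and use the i.i.d.\ uniform placement of the points, then sum the exponential series.

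\textbf{Step 2: identify the marginal.} Multiplying by the remaining factor $\exp(-m(X)T)$ cancels the $m(X)T$ term. The $\psi$-marginal of the right-hand side is therefore
\[
\cZpost_h(dX)\,\exp\Bigl(-\int_0^T\phi(X_t)\,dt\Bigr),
\]
which by \cref{eq:post_dens} is proportional to $\cQpost(dX)$.

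\textbf{Step 3: normalizability.} The total mass is $\int\cZpost_h(dX)\exp(-\int_0^T\phi)\le 1$, so $\cP$ is a well-defined probability once normalized. The mass is also positive, since the posterior $\cQpost$ exists, i.e.\ $l_Y$ is integrable against $\cQ$.

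\textbf{Expected obstacle.} The main care point is measurability and well-definedness rather than the algebra. We need that $X\mapsto M(X),m(X)$ are measurable functionals (functions of $\Xmin$ or $\layer$), that $\Delta(X)>0$ (or that the degenerate case $\Delta=0$ is handled by reading $\mathcal{M}_0$ as the empty configuration, where the identity holds trivially), and that $\int_0^T\phi(X_t)\,dt$ is finite, which holds because paths are continuous. Fubini's theorem then justifies integrating out $\psi$ first, since all integrands are nonnegative.
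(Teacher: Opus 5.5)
Your proposal is correct and follows the paper's proof. Both integrate out $\psi$ with Campbell's theorem (\cref{eq:app_campb}) to get $\exp\bigl(m(X)T-\int_0^T\phi(X_t)\,dt\bigr)$, cancel this against $\exp(-m(X)T)$, and identify $\cZpost_h(dX)\exp\bigl(-\int_0^T\phi(X_t)\,dt\bigr)$ with $\cQpost$ via the Girsanov density (\cref{eq:RN_der_BB}, equivalently \cref{eq:post_dens}); your remarks on normalizability, measurability and the degenerate case $\Delta(X)=0$ fill in details the paper leaves implicit.
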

\begin{proof}

Consider~\cref{eq:app_joint_appx} from the previous proof. We keep $\theta$ fixed, and drop terms that do not depend on $X$ or $\psi$. Then, along with $\cZ^+_h(dX) \propto \cZ_h(dX) l_Y(X)$, gives
\begin{align*}\hspace{-.2in}
\cP_\theta(dX,d\psi) \propto \cZ^+_{h_\theta}(dX)\mathcal{M}_{\Delta_\theta(X)}(d\psi)  \exp\left\{- m_\theta(X)T \right\} \prod_{g \in \psi} \left(\frac{ M_\theta(X) - \phi_\theta(X_g)}{\Delta_\theta(X)}\right).
\end{align*}
    From~\cref{eq:app_campb}, we have 
    \begin{equation*}
        \exp\left\{- m_\theta(X)T \right\}\mathbb{E}_{\psi\sim \mathcal{M}_{\Delta_{\theta}(X)}}\left[\prod_{g\in \psi}\left(\frac{M_{\theta}(X)-\phi_{\theta}(X_{g})}{\Delta_{\theta}(X)}\right)\right] = \exp\left(-\int_{0}^{T}\phi_{\theta}(X_{s})ds\right).
    \end{equation*}   
Comparing with~\cref{eq:RN_der_BB} shows  $\cP_\theta(dX,d\psi)$ has $\cQpost_\theta(dX)$ as its marginal for any $\theta$.
    
\end{proof}

}
\setcounter{proposition}{0}
\begin{proposition}
Write $\cP(\cdot \mid \psi)$ for the conditional law over paths under~\eqref{eq:gibbs_target_appx}, for any $\psi$. Then this has density with respect to \(\mathcal{Z}_h\) given by:
\begin{align}
\frac{d \cP(X \mid \psi)}{d\mathcal{Z}_h} \propto 
l_Y(X_S) \cdot \exp(- M(X) \cdot T) \cdot \prod_{g \in \psi} \left({M(X)} - {\phi(X_g)}\right). 
\end{align}
\end{proposition}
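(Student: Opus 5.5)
The plan is to rewrite the joint measure in \cref{eq:gibbs_target_appx} with respect to a reference measure that does not depend on $X$, namely $\cZ_h(dX)\times\mathcal{M}_1(d\psi)$. From that form the conditional density of $X$ given $\psi$ can be read off by treating $\psi$ as fixed and dropping all factors that depend only on $\psi$.

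\textbf{Step 1: remove the $X$-dependence of the Poisson reference measure.} I will use the stated density $\frac{d\mathcal{M}_M}{d\mathcal{M}_1}(\psi)\propto\exp(-MT)M^{|\psi|}$ with $M=\Delta(X)$. The proportionality constant here is $e^{T}$, which does not depend on $M$, so
\[
\mathcal{M}_{\Delta(X)}(d\psi)=e^{T}\exp(-\Delta(X)T)\,\Delta(X)^{|\psi|}\,\mathcal{M}_1(d\psi).
\]
I will check this constant explicitly, because an $X$-dependent constant would change the conditional.

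\textbf{Step 2: substitute and simplify.} Substituting Step 1 into \cref{eq:gibbs_target_appx}, the factor $\Delta(X)^{|\psi|}$ cancels the denominators $\prod_{g\in\psi}\Delta(X)$. The exponentials combine as $\exp(-\Delta(X)T)\exp(-m(X)T)=\exp(-M(X)T)$, since $\Delta=M-m$. Writing $\cZpost_h(dX)\propto l_Y(X_S)\,\cZ_h(dX)$ then gives
\[
\frac{d\cP}{d(\cZ_h\times\mathcal{M}_1)}(X,\psi)\propto l_Y(X_S)\exp(-M(X)T)\prod_{g\in\psi}\bigl(M(X)-\phi(X_g)\bigr).
\]

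\textbf{Step 3: condition on $\psi$.} Since the reference measure is a product measure, the regular conditional law of $X$ given $\psi$ has density with respect to $\cZ_h$ proportional to this joint density, for $\mathcal{M}_1$-almost every $\psi$. The normalizer is the $\cZ_h$-integral of the right-hand side. It is finite for a.e.\ $\psi$ because the marginal is a probability measure, by \Cref{thrm:joint}.

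The main obstacle is Step 1. The Poisson reference measure in \cref{eq:gibbs_target_appx} depends on $X$, so conditioning on $\psi$ is not legitimate until it is changed to the fixed measure $\mathcal{M}_1$. Once that change of measure is made with the correct constant, Steps 2 and 3 are cancellation and a standard product-measure disintegration.
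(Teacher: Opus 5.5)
Your proposal is correct and is essentially the paper's proof. The paper likewise substitutes $\frac{d\mathcal{M}_{\Delta(X)}}{d\mathcal{M}_1}(\psi)\propto\exp(-\Delta(X)T)\Delta(X)^{|\psi|}$ into the joint of Theorem 1 and reads off the conditional; your explicit check that the hidden constant $e^{T}$ does not depend on $\Delta(X)$, and your product-measure disintegration argument, supply details the paper leaves implicit.
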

\begin{proof}
This conditional follows directly from the joint probability in~\Cref{thrm:joint_appx}, and the fact that $\frac{d \mathcal{M}_{\Delta(X)}}{d \mathcal{M}_1}(\psi) \propto \exp(-\Delta(X)T)\Delta(X)^{|\psi|}$.
\end{proof}

\begin{proposition}
Write $\cP(\cdot \mid X)$ for the conditional law of $\psi$ given $X$ under~\eqref{eq:gibbs_target_appx}. %
For any $X$, this is the law of a Poisson process on \([0, T]\) with rate \(\lambda_X(t) = M(X) - \phi(X_t)\).
\end{proposition}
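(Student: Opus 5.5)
From \cref{eq:gibbs_target_appx}, for a fixed path $X$ the conditional law of $\psi$ is proportional (as a measure in $\psi$) to
\[
\mathcal{M}_{\Delta(X)}(d\psi)\prod_{g\in\psi}\left(\frac{M(X)-\phi(X_g)}{\Delta(X)}\right).
\]
The factors $\cZpost_h(dX)$ and $\exp(-m(X)T)$ depend only on $X$, so they are absorbed into the normalizing constant. I will identify this measure with the law of a Poisson process of rate $\lambda_X(t)=M(X)-\phi(X_t)$.

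The plan is to compute a Radon--Nikodym derivative against a common reference, namely $\mathcal{M}_{\Delta(X)}$ itself. For two Poisson processes on $[0,T]$ with intensities $\lambda$ and a constant $\Delta>0$, the standard density formula (already used in the paper as $\frac{d\mathcal{M}_M}{d\mathcal{M}_1}\propto e^{-MT}M^{|\psi|}$, in its inhomogeneous form) gives
\[
\frac{d\,\mathrm{PP}(\lambda)}{d\mathcal{M}_\Delta}(\psi)=\exp\Big(\int_0^T(\Delta-\lambda(t))\,dt\Big)\prod_{g\in\psi}\frac{\lambda(g)}{\Delta}.
\]
Setting $\lambda=\lambda_X$ and $\Delta=\Delta(X)$, the product term matches the weight in $\cP(\cdot\mid X)$ exactly. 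The exponential prefactor depends on $X$ but not on $\psi$, so it is a normalizing constant. Hence $\cP(d\psi\mid X)\propto\mathrm{PP}(\lambda_X)(d\psi)$, and since both sides are probability measures they are equal.

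Two conditions must be checked. First, $0\le\lambda_X(t)\le\Delta(X)$. Nonnegativity holds because $M(X)\ge\sup_t\phi(X_t)$. The upper bound holds because $\phi(X_t)\ge m(X)$ gives $\lambda_X(t)\le M(X)-m(X)=\Delta(X)$. This also shows the conditional is the Poisson thinning of $\mathcal{M}_{\Delta(X)}$ with retention probability $\lambda_X(g)/\Delta(X)$, which gives a second proof via the thinning theorem. Second, the normalizer must be finite and positive. This follows from the Campbell computation in \cref{eq:app_campb}: the expectation of the product under $\mathcal{M}_{\Delta(X)}$ equals $\exp(m(X)T-\int_0^T\phi(X_s)\,ds)$, which is finite and nonzero whenever $\int_0^T\phi(X_s)\,ds<\infty$, and that holds for continuous paths.

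The main subtlety, rather than a genuine obstacle, is that $\Delta(X)$ and the reference measure depend on $X$. Disintegration is still legitimate here, because for each fixed $X$ the kernel $\mathcal{M}_{\Delta(X)}(d\psi)$ is a probability measure in $\psi$, so the joint measure is $\cZpost_h(dX)\,K(X,d\psi)$ for an explicit unnormalized kernel $K$. Its normalized version is the regular conditional law.
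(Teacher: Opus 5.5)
Your proof is correct and is essentially the paper's argument. The paper also identifies the $\psi$-dependent factor of the joint as the Radon--Nikodym derivative of a rate-$\lambda_X$ Poisson process and absorbs the $X$-only factors into the normalizer; the only difference is that it takes $\mathcal{M}_1$ as the reference measure, where you take $\mathcal{M}_{\Delta(X)}$ directly. Your additional checks (that $0\le\lambda_X(t)\le\Delta(X)$, that the normalizer is finite and positive, and that the disintegration is valid) are sound, and they make explicit what the paper leaves implicit.
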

\begin{proof}
Substituting $\frac{d \mathcal{M}_{\Delta(X)}}{d \mathcal{M}_1}(\psi) \propto \exp(-\Delta(X)T)\Delta(X)^{|\psi|}$~\citep{reiss2012course} into the joint from~\Cref{thrm:joint_appx}, we have
    \begin{align*}
\frac{d\cP(\psi \mid X)}{d\mathcal{M}_1} & \propto  \exp(-M(X)\cdot T) \prod_{g \in \psi} \left({M(X) - \phi(X_g)}\right) \\
& \propto  \exp\left(-\int_{0}^{T} (M(X)-\phi(X_t))dt\right) \prod_{g \in \psi} \left({M(X) - \phi(X_g)}\right) 
\end{align*}
This gives the desired result.
\end{proof}

\section{The \texttt{GibbsEA2} sampler for EA2 diffusions}
\label{sec:gibbs_ea2}
Recall from~\Cref{sec:ea} that for EA2 diffusions, the function $\phi(\cdot)$ only is bounded over intervals of the form $[x,\infty)$. As a consequence, the rejection sampling algorithm of~\citet{beskos2005exact} requires first instantiating the minimum $\Xmin$ of the Brownian bridge proposal. %
Not surprisingly, our MCMC algorithm involves updating $\psi, X_{\G}$ \emph{and} $\Xmin$. Here, we mention a few facts, see~\citet{pollock2016exact} for details.
\begin{enumerate}[(a)]
    \item\ Conditioned on its minimum $\Xmin = (\Xmin_v,\Xmin_t)$, a Brownian bridge decomposes into two independent Bessel bridges, one on either side of the minimum. Conditioned further on its values an independent set of times $G$, %
    the process now decomposes into independent Bessel bridges over segments $(g_i,g_{i+1})$ defined by successive elements of $\sort{G \cup \Xmin_t}$. %
    \item\ Simulating the above Bessel bridges at a finite set of times is straightforward, this involves simulating and combining 3 independent Brownian bridges. %
     On the other hand, calculating the log-probability density of a Bessel bridge, and its gradient at a finite set of times is much harder than for a Brownian bridge.
    \item\ Simulating the minimum of a Brownian bridge is straightforward matter of simulating from a uniform and an inverse-Gaussian density; see~\citep[Algorithm 12]{pollock2016exact}. To simulate the minimum given path values at $G$, we independently simulate the minima of Brownian bridges over successive intervals of $\sort{G}$ and take the minimum of these minima. This can easily be vectorized for efficiency. Calculating the log-probability and its gradient %
    is again challenging.
\end{enumerate}

\texttt{GibbsEA2} targets $\cP(dX, d\psi)$ by alternately updating $\psi$ and $(X_\G,\Xmin)$. %
\subsection{\texorpdfstring{Conditionally updating the Poisson grid from $\cP(d\psi\mid X)$}{Conditionally updating the Poisson grid}}
Following Proposition~\ref{prop:condpoi}, %
for EA2 diffusions, conditioned on the path value $X$, the grid $\psi$ follows an inhomogeneous Poisson process with rate $M(\Xmin) - \phi(X)$.
To simulate this inhomogeneous Poisson process, we use the thinning theorem, first simulating a homogeneous Poisson process $\vartheta$  with rate \( \Delta(\Xmin) \), and then keeping each point $e \in \vartheta$ with probability \( \frac{M(\Xmin) - \phi(X_e)}{\Delta(\Xmin)} \).

In practice, at the start of this step, we only have $X_{G}$ and $\Xmin$. From the latter, we calculate $\Delta(\Xmin)$ and simulate $\vartheta$. Following point (b) above, we then impute $X_\vartheta$ by simulating from the Bessel bridge $\cZ_h(X_\vartheta\mid \Xmin, X_G)$. %
Given this, we thin the elements of $\vartheta$ as above.
The points that survive the thinning procedure form the new Poisson grid $\psi'$, and the new MCMC state is $(\psi',X_{\Gtick},\Xmin)$. %

\subsection{\texorpdfstring{Conditionally updating the Brownian path from $\cP(dX\mid\psi)$}{Conditionally updating the Brownian path}} \label{sec:path_updt}
For the biased Brownian bridge $\cZ_h$, we can decompose the probability of any path as (probability of its endpoints) $\times$ (probability of its values on $\G$ given end-point values) $\times$ (probability of the time and value of its minimum given earlier values)  $\times$ (probability of the rest of the path given everything so far).
Thus,
$$\cZ_h(dX) = p_{\cZ_h}(X_{\G}) dX_{\G}\cdot\pmin(\Xmin\mid X_{\G}) d\Xmin \cdot \cZ_h(dX\mid \Xmin,X_{\G}).$$
This ordering differs from the EA2 rejection sampler, where, to simulate $\psi$, we must {\em first} simulate the minimum $\Xmin$. Then, $X_{\G}$ follows a Bessel process, while under our ordering, $X_{\G}$ follows a simpler Gaussian distribution. This is possible only because we are \emph{conditioning} on $\psi$.
Now we can simplify the conditional (\cref{eq:path_cond}) as
\begin{align}
    \cP(dX \mid \psi)  \propto p(X_\G, & \Xmin \mid  \psi, Y)  dX_\G d\Xmin \cZ_h(dX \mid \Xmin, X_\G), \text{ with } \label{eq:tgt_decom_ea2}\\
    p( X_{\G}, \Xmin \mid \psi, Y)  & \propto \left[p_{\cZ_h}(X_{\G}) \cdot  l_Y(X_S) \right]\cdot \pmin(\Xmin\mid X_\G) \cdot \nonumber \\ 
    & \quad    [\exp(- M(\Xmin)\cdot T) \cdot \prod_{g \in \psi} \left(M(\Xmin) - {\phi(X_g)}\right)] \nonumber \\
    &  \hspace{-.2in} := [\piY(X_{\G})] \cdot \pmin(\Xmin\mid X_{\G}) \cdot 
    [\piM({\Xmin}, X_\psi)].\nonumber
\end{align}
Above, the only nonstandard term is $\pmin(\Xmin\mid X_\G)$, though we saw this is easy to simulate from. The term $\piY(\cdot)$ corresponds to a standard GP posterior.

To simulate \(\cP(dX\mid\psi)\), we employ a Metropolis-Hastings (MH) scheme, proposing a new path \(X^*\) from some probability \(\mathcal{V}(dX^*\mid X, \psi)\), and accepting with the usual MH acceptance probability.
Along the lines of~\cref{eq:tgt_decom_ea2},  we use a proposal distribution that decomposes as follows:
\begin{align*} 
{\mathcal{V}(dX^*\mid X,\psi)} = q(X^*_{\G}\mid X_{\G} ) \pmin(\XminStar \mid X^*_{\G}) dX^*_{\G} d{\XminStar} \cZ(dX^*\mid {\XminStar},X^*_{\G}). %
\end{align*}
That is, we propose new values $X^*_{\G}$ conditioned on the old values $X_{\G}$. Conditioned on these (and independent of the old values), we propose a new minimum ${\XminStar}$, with the remaining path then following the appropriate Bessel bridge. 
The acceptance probability simplifies to
\begin{align}
 \hspace{-.1in}   \text{acc} = 
     \frac{d\cP(X^*\mid \psi)}{d\cP(X\mid \psi)} \cdot \frac{d\mathcal{V}(X\mid X^*)}{d\mathcal{V}(X^*\mid X)} = 
    \frac{\piY(X^*_{\G})  
    \piM(\XminStar,X^*_\psi)}{\piY(X_{\G}) \piM(\Xmin,X_\psi)} \cdot \frac{q(X_{\G}\mid X^*_{\G} )}{q( X^*_{\G}\mid  X_{\G} )}.
\end{align}
Notice that the $\pmin(\cdot)$ term cancels out above, and the only terms remaining from the target distribution are straightforward to evaluate. We are only left to find an efficient proposal $q( X^*_{\G}\mid  X_{\G} )$ to explore the $X_{\G}$ space.

{ In general, $\piY(X_S)$ is intractable, and we demonstrate how MCMC methodology like Hamiltonian Monte Carlo (HMC) can be brought to this setting. Our overall approach is only slightly more complicated than %
\citet{wang2020exact}.}

\noindent \textbf{A double Metropolis-Hastings scheme~\citep{liang2010double}}:
First note that the density $\piY(X_{\G})$ can be written as $p_{\cZ_h}(X_S) l_Y(X_S) p_{\cZ_h}(X_\psi \mid X_S)$, where the last term is just a Brownian bridge conditional. Accordingly, we choose $q( X^*_{\psi \cup S}\mid  X_{\psi \cup S} ) = q( X^*_S \mid  X_{S} )  p_{\cZ_h}(X_\psi \mid X_S)$. %
For $q( X^*_S \mid  X_{S} )$, we can choose any Markov kernel that targets $p_{\cZ_h}(X_S) l_Y(X_S)$. We use HMC: our target is the posterior distribution for a Gaussian prior with a non-conjugate (but differentiable) likelihood, a setting where HMC is known to excel. 
One iteration of HMC involves $L$ leapfrog steps of size $\epsilon$ followed by an accept/reject step, with $L, \epsilon$ and a mass-matrix $M$ parameters of the algorithm~\citep{neal2011mcmc}. While we cannot analytically evaluate the HMC proposal density $q(X^*_S \mid X_S)$, from the detailed balance condition of HMC, we have 
$
p_{\cZ_h}(X^*_S) l_Y(X^*_S) \cdot q(X_S \mid X^*_S)
=
p_{\cZ_h}(X_S) l_Y(X_S) \cdot
q(X^*_S\mid X_S)$.
With these facts, the acceptance probability simplifies to the easy to calculate form
\begin{align}
    \text{acc} 
    &= \frac{\piM(\XminStar,X_\psi^*)}{\piM(\Xmin,X_\psi)} = 
    \frac{ \exp(- M({\XminStar}) \cdot T) \cdot 
     \prod_{e \in \psi} \left({M({\XminStar})} - {\phi(X^*_e)}\right)}{ \exp(- M(\Xmin) \cdot T) \cdot 
     \prod_{e \in \psi} \left({M(\Xmin)} - {\phi(X_e)}\right)}. \label{eq:mh_acc}
\end{align}
Thus our scheme involves two accept/reject steps each iteration: one for the HMC proposal (to correct for only approximately simulating the Hamiltonian dynamics), and one to correct for the mismatch between the proposal and target distribution arising from the Poisson events. These details are repeated in Algorithm \ref{alg:ea2gibbsparainf}. %

\begin{algorithm}[ht]
   \caption{One iteration of \texttt{GibbsEA2} for EA2 diffusions.}
   \label{alg:ea2gibbsparainf}
   \begin{flushleft}
   \textbf{Input:}  
   \begin{itemize}[leftmargin=1.5em, labelsep=0.5em, itemsep=0pt, topsep=0pt, parsep=0pt, partopsep=0pt]
      \item Observation times $S \subset [0,T]$, along with likelihood function {$l_Y(X_S)$}. %
      \item Poisson times $\psi$, path values $X_{\G}$, and path minimum $\Xmin = (\Xmin_v, \Xmin_t)$.
   \end{itemize}
   \textbf{Output:}  
   \begin{itemize}[leftmargin=1.5em, labelsep=0.5em, itemsep=0pt, topsep=0pt, parsep=0pt, partopsep=0pt]
      \item Updated Poisson times $\psi{'}$, path values $X{'}_{\psi{'} \cup S}$ and minimum ${\Xminpr}$. %
   \end{itemize}
   \vspace{0.3em}
   \hrule
   \end{flushleft}
   \vspace{-0.5em}
   \begin{algorithmic}[1]
    \State \textbf{Update $\psi$ conditioned on $X$:}
      \State \quad Simulate a rate-\( \Delta(\Xmin) \) homogeneous Poisson process $\vartheta$ on $[0, T]$.
      \State \quad Impute $X_\vartheta$ according to Bessel bridge. %
      \State \quad Discard each $e \in \vartheta$ with probability $\frac{\phi(X_e)-m(\Xmin)}{\Delta(\Xmin)}$. Let $\psi'$ be the surviving times.
      \State \quad Retain only $\psi'$, $X_{\psi' \cup S}$, and ${\Xmin}$, and discard all other variables.
    \State \textbf{Update $X$ conditioned on $\psi$:}
      \State %
\quad Simulate $X_S^* \mid X_S$ from a Markov kernel  targeting $p_{\cZ_h}(X_{S})l_Y(X_S)$.
      \State \quad Simulate $X^*_{\psi'} \mid X^*_S$ from a Brownian bridge.
      \State \quad Simulate the minimum $\XminStar$ of a Brownian bridge given {$X^*_{\psi'\cup S}$}.
      \State \quad Set  {$(X'_{\psi'\cup S}, {\Xminpr})$} to {$(X^*_{\psi'\cup S}, {\XminStar})$} with probability given in~\cref{eq:mh_acc}.
\State \textbf{Return} the updated values {$(\psi', X'_{\psi'\cup S}, {\Xminpr}$)}.
   \end{algorithmic}
\end{algorithm}

\noindent \textbf{Why not directly use HMC to target the posterior conditional distribution $\cP(X \mid \psi)$, instead of the proposal distribution?} This was the approach taken in~\citet{wang2020exact} for EA1 diffusions. The EA2 setting requires instantiating the minimum of the Brownian bridge, requiring us to target a Bessel bridge posterior, a significantly more complex object %
whose log-probability and gradient are much more challenging to evaluate than a Brownian bridge posterior. %

In work not reported here, we used numerical differentiation to carry out this approach. In addition to being slow, targeting $(X_{\G} \mid \Xmin)$, requires that $\Xmin_v < \min X_{\G}$ is always satisfied. While the exact Hamiltonian dynamics will respect this, the leapfrog approximation resulted in this being violated very frequently.

\section{Experiments with MCMC parameters}

\subsection{Experiments on HMC Tuning}

One of the goals of this paper is to bring tools such as Hamilton Monte Carlo to the setting of SDEs. HMC itself comes with a number of hyperparameters, and it is important to explore how robust it is to their settings. The three key hyperparameters influencing the mixing behavior and efficiency are the leapfrog step size \(\epsilon\), the number of leapfrog steps \(L\), and the $(|S|) \times (|S|)$ mass matrix \(M\). We recall again that our HMC sampler operates on SDE values at a fixed set times, the observation times $S$ (which includes $0$ and $T$). This is in contrast to the algorithm of~\citet{wang2020exact} that also operates on the Poisson grid $\Psi$. Since $\Psi$ changes from iteration to iteration, tuning $M$ is much simpler in our case.

We tested a range of values for the step size \(\epsilon\) (\(\{0.05, 0.1, 0.2, 0.4\}\)) and the number of leapfrog steps \(L\) (\(\{2, 5, 10\}\)). We set the mass matrix \(M \in \{0.2M_H, M_H, 5M_H\}\), where $M_H$ is the Hessian of the log-posterior distribution at its mode.
We evaluated these settings using EA2 example described above,  %
with 10 observations under Gaussian noise with \(\sigma_N = 0.1\), and over 3 settings of the time interval: \(T \in \{1, 5, 10\}\). 
For each setting, we performed 10 repetitions of \gibbshmc\  and averaged the resulting metrics. Again, we used ESS/s as the performance metric, and summarize the results in~\Cref{fig:tunehmc_gnex3sn}.

\begin{figure}
 \centering
 \caption{ESS/s for different HMC settings in EA2 example with Gaussian noise $\sigma_N = 0.1$}
 \includegraphics[scale = 0.4]{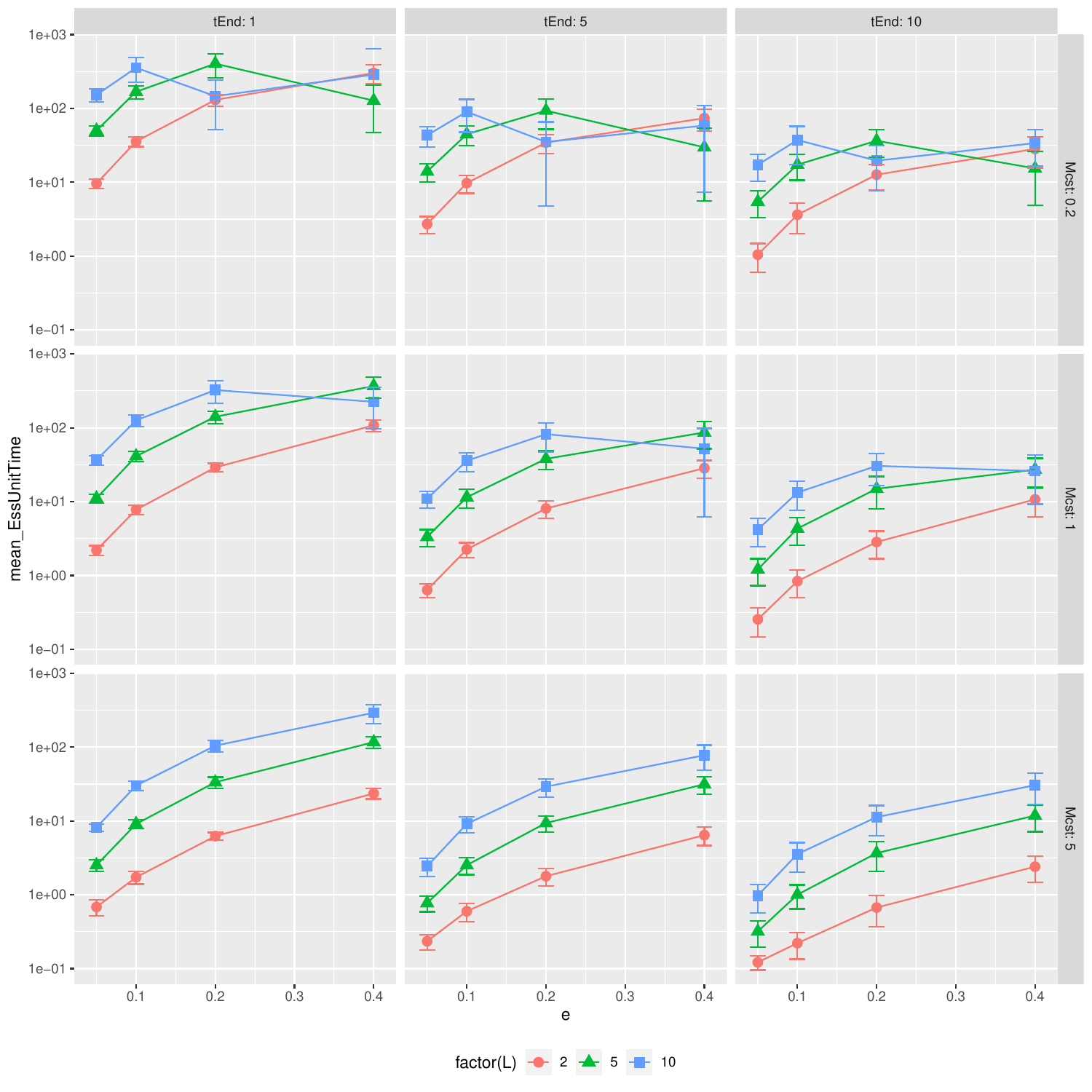}
 \label{fig:tunehmc_gnex3sn}
\end{figure}

There are several conclusions we can take from the figure.
Generally, a larger number of leapfrog steps yields better ESS while also requiring longer runtime. Accounting for both using ESS/s, we see that \(L = 10\) outperforms \(L = 5\), and \(L = 5\) is superior to \(L = 2\). Performance plateaus with larger values of $L$, though we do not include this. 
While the optimal settings of the hyperparameters fluctuate across different settings of $T$, for large enough $L$ (e.g.\ 5 or 10), performance does not fluctuate too dramatically.
Similarly, the different settings of $M$ that we considered also do not affect performance too dramatically.
Overall, we recommend \((\epsilon, L, M) = (0.1, 10, M_H)\): while this performs slightly worse than the best settings across all time lengths, it corresponds to standard default settings, and always gives competitive performance. We use this setting all our other experiments.

\subsection{Experiments on the Auxiliary Poisson rate \texorpdfstring{$\AuxRate$}{AuxRate}}

The \texttt{GibbsEA3} algorithm requires specification of the auxiliary Poisson events rate, $\AuxRate$. Choosing $\AuxRate$ too small would result in slow mixing of $\psi$, whereas excessively large values increase computational cost without commensurate gains in efficiency. We suggest a guideline for choosing $\AuxRate$, while acknowledging that comprehensive work remains to be done from both theoretical and empirical perspectives. 

Our suggestion is to choose $\AuxRate$ so that the average of allocation probability $\frac{M(\layer)-\phi(X_t)}{\AuxRate+ M(\layer)-\phi(X_t)}$ lies within a moderate intermediate range, for example between $0.4$ and $0.6$. Values in this range correspond to a balanced relabeling regime in which neither $\psi$ nor $\xi$ overwhelmingly dominates the configuration of $\psi \cup \xi$.

In practice, a suitable value of $\AuxRate$ may be obtained from a short pilot run of the algorithm or from samples collected during the burn-in period. One convenient choice is given by
\begin{equation*}
    \AuxRate \;\approx\; \frac{1}{N_{\operatorname{iter}}}\sum_{i=1}^{N_{\operatorname{iter}}} \left[M({\layer}^{(i)}) -  \frac{1}{|\psi^{(i)}|}\sum_{t\in \psi^{(i)}}\phi(X_{t})\right],
\end{equation*}
where ${\layer}^{(i)}$ and $\psi^{(i)}$ denote the layer and the set of auxiliary event times assigned to $\psi$ at the $i$th iteration, respectively.

To illustrate this choice, we consider a CIR process under the same setting as before, but with three different levels of time-window lengths, $T \in \{5,10,15\}$. For each value of $T$, we run the \texttt{GibbsPost} algorithm using ten different values of the auxiliary rate $\AuxRate$, ranging from $0.01$ to $10$. Figure~\ref{fig:tune_auxrate} shows the mean allocation probability $\frac{M(\layer)-\phi(X_t)}{\AuxRate+ M(\layer)-\phi(X_t)}$ computed for each time-window length by taking median over ten independent replications. Based on our guideline and the numerical results in Figure~\ref{fig:tune_auxrate}, a moderate range for $\AuxRate$ is between $1$ and $5$. We observe that satisfactory mixing of both the Poisson counts $|\psi|$ and the latent path is achieved for values of $\AuxRate$ within this range. Throughout the paper, we use the value of $\AuxRate = 2$.

\begin{figure}[tbh!]
    \centering
    \includegraphics[scale=0.45]{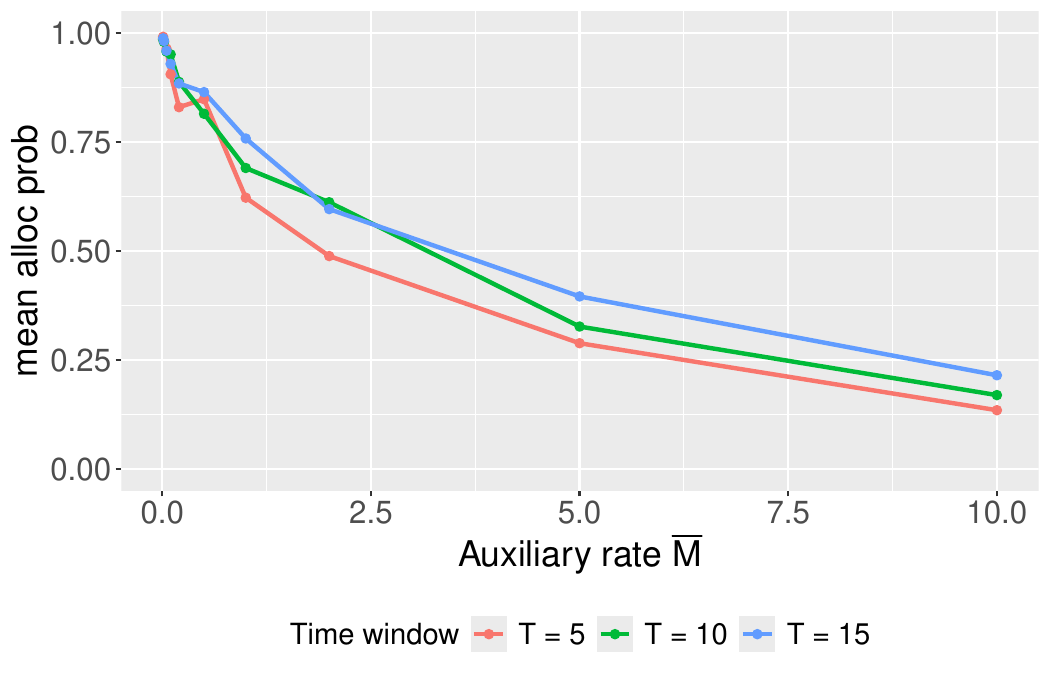}
    \caption{Mean allocation probability in relabeling step in \texttt{GibbsEA3} algorithm for the CIR process.}
    \label{fig:tune_auxrate}
\end{figure}

\section{Additional Experiments}
\label{syndata}

\subsection{EA2 example 2}\label{SDE2}
We consider the diffusion parameterized by $\theta = (p,q)$ with $p > 0$, $q > 0$:
\begin{align}
dX_t = - p X_t \exp(-q X_t) \, dt + dW_t. \label{eq:sde2}
\end{align}

\noindent \textbf{Gaussian and Poisson Observations}
We set the parameters $(p,q) = (1,0.1)$, and evaluate the different samplers using synthetic data following the same setup as EA2 example.
We plot the results in Figure \ref{fig:gnex4_ess_over_time_ln_theta0.1} and  \ref{fig:prex4_ess_over_time_theta0.1} for the case of Gaussian and Poisson observations respectively.
Now, compared to EA2 example, we observe smaller ESS/s values across all samplers, reflecting the fact that this is a harder problem. At the same time, the conclusions from these plots are largely the same as before, and reaffirm once again that \gibbshmc, through its flexibility and performance, is once again the best option.

\begin{figure}
 \centering
 \includegraphics[width=0.8\textwidth]{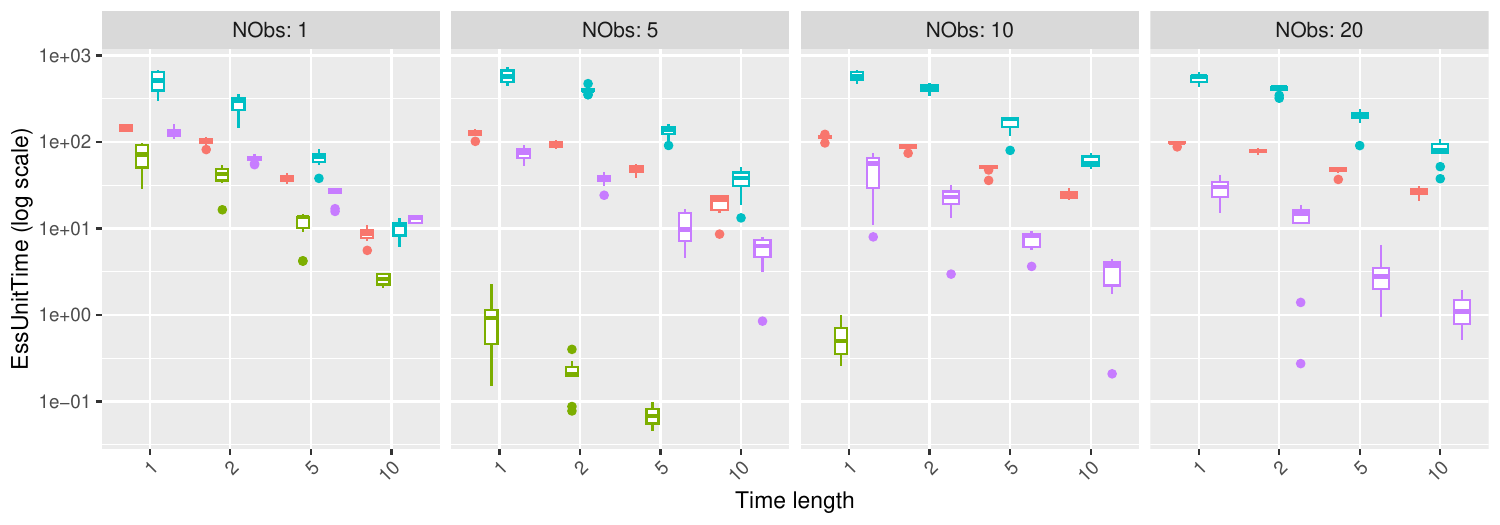}
 \centering
 \includegraphics[width=0.8\textwidth]{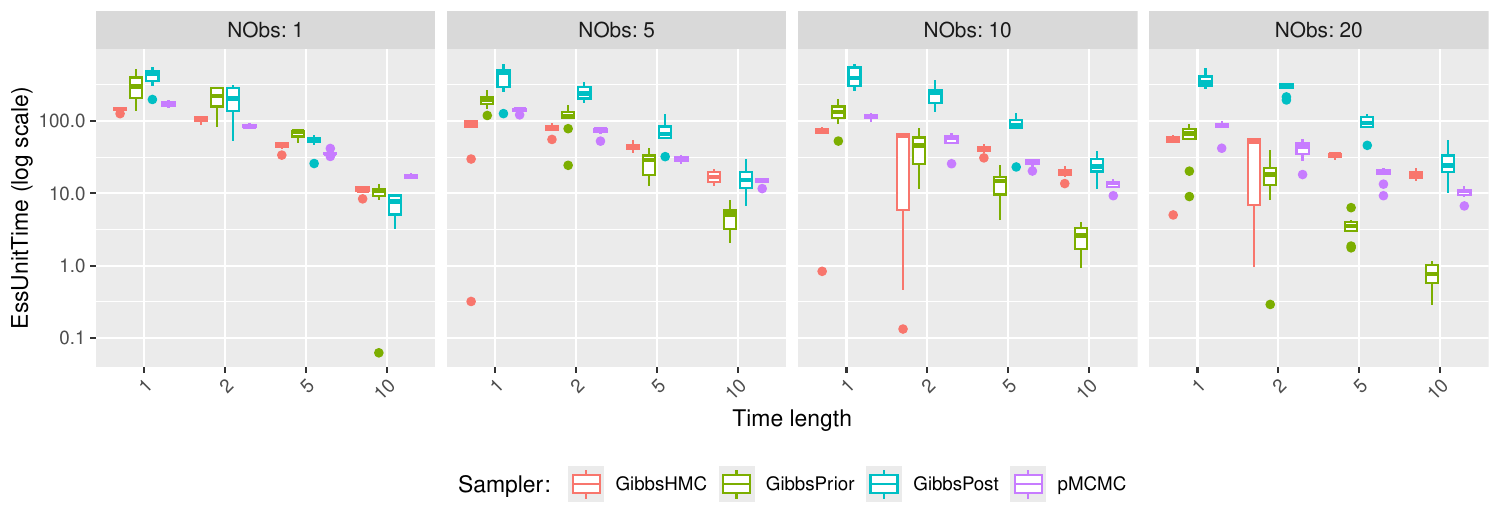}
\caption{The effective samples per seconds vs time lengths of EA2 example 2 under Gaussian noise $\sigma_N = 0.1$ (top row) and $\sigma_N = 1$ (bottom row)} \label{fig:gnex4_ess_over_time_ln_theta0.1}
\end{figure}

\begin{figure}
 \centering
 \caption{ESS/s vs time lengths of EA2 example 2 under Poisson response}
 \includegraphics[width=0.8\textwidth]{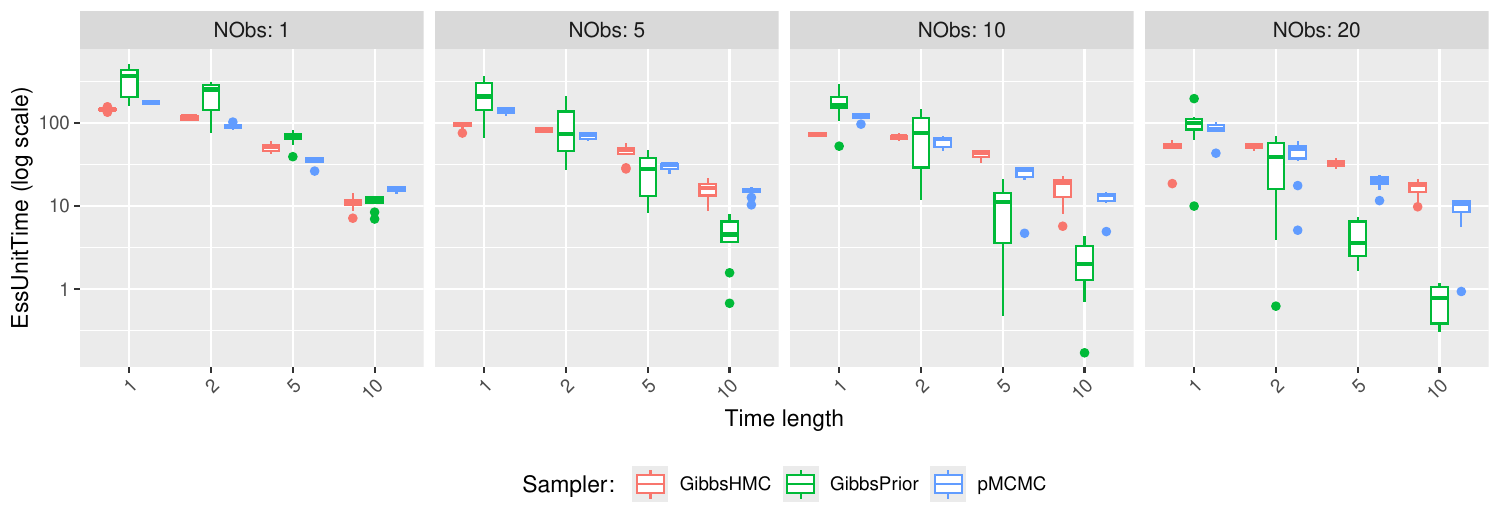}
 \label{fig:prex4_ess_over_time_theta0.1}
\end{figure}

\subsection{New York Temperature Data}\label{sec: ny_applications}
We consider an application involving a dataset of temperature measurements in New York city (NYC).
We downloaded the annual average temperature data for New York Central Park from 1870 to 2023 from the National Center for Environmental Information\footnote{available at {\tiny \url{https://www.ncei.noaa.gov/access/search/data-search/global-summary-of-the-year?pageNum=3&dataTypes=TAVG&bbox=38.075,-122.715,37.485,-122.125}}}. The dataset contains 154 data points in total.

To prepare the data, we first subtracting out the linear trend, and then rescaled the time length to $15$. We selected every other data point as observations, using the rest as a test dataset. 
Figure \ref{fig:app_NYtemperature} illustrates the original annual average temperature data, the detrended and rescaled data, and the subset used as observations. This plot shows substantial fluctuations in temperature from year to year.

With this as our data, we modeled the underlying temperature process as a realization of EA2 example 2 (\cref{eq:sde2}). We ran our algorithm both with the parameter $\theta$ fixed to $0.1$, and with a   uniform prior. 
\begin{figure}[H]
 \caption{Annual average temperature for New York Central Park from 1870 to 2023}
 \centering
 \includegraphics[scale=0.22]{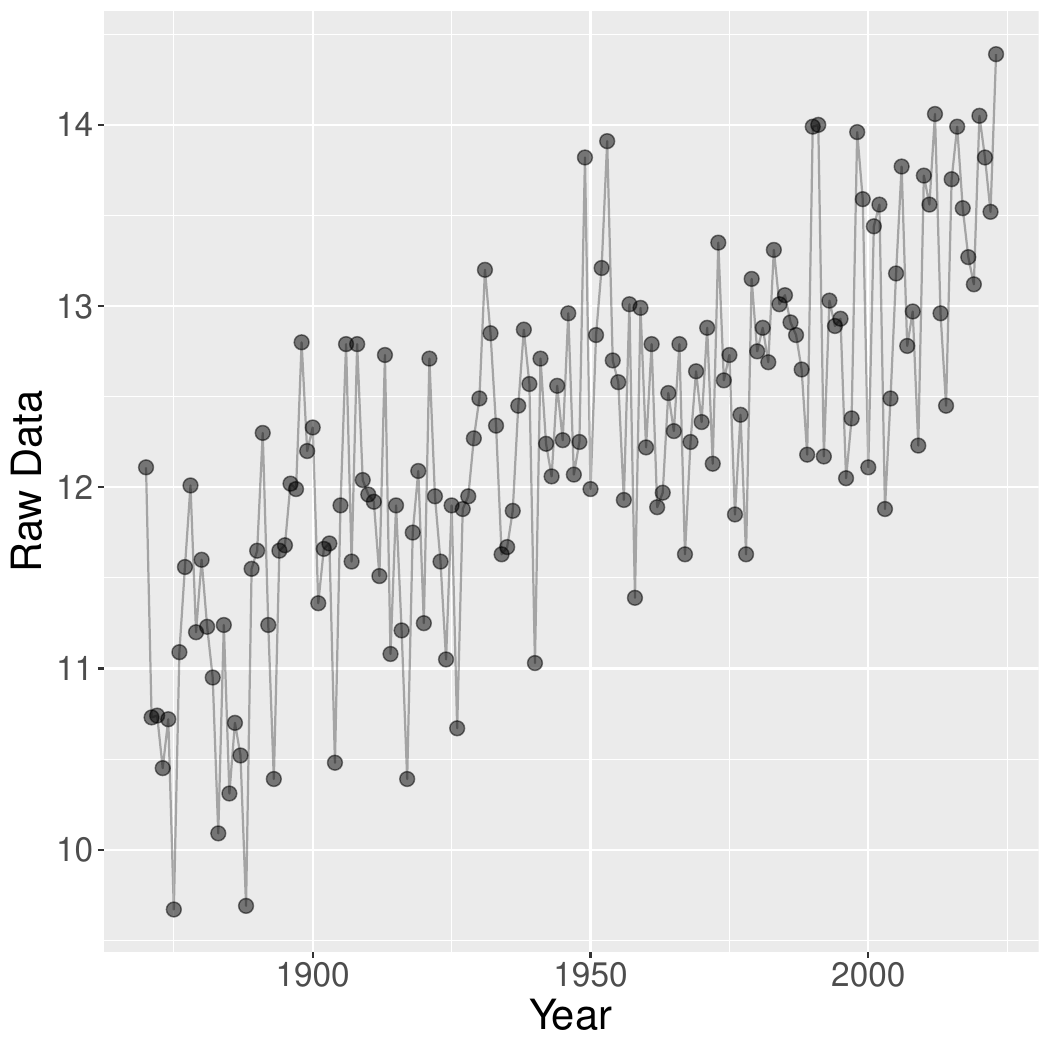}
 \includegraphics[scale=0.22]{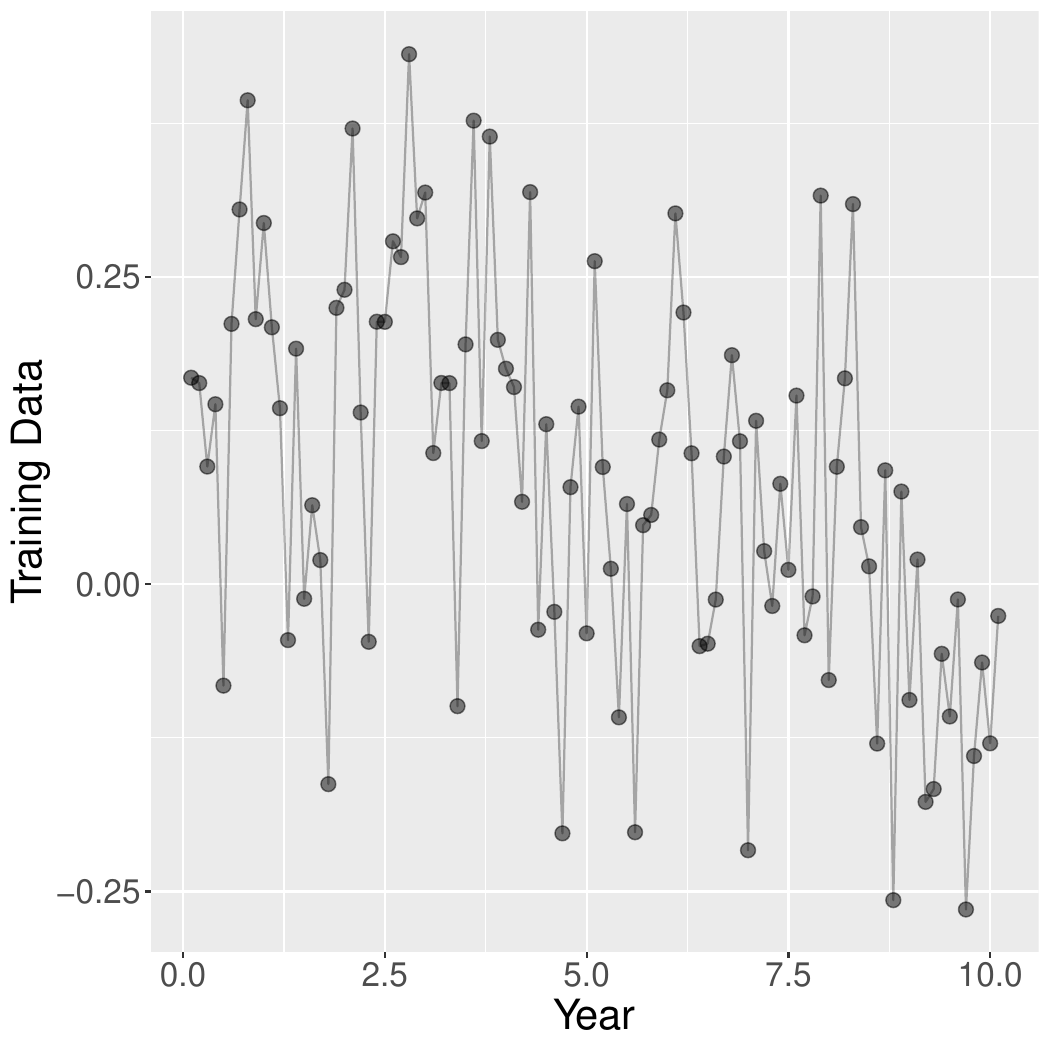}
 \label{fig:app_NYtemperature}
\end{figure}
Figure \ref{fig:app_NYtemperature_pstpath} shows  both the observed data as well as the posterior quantile bands. The posterior paths closely align with the unobserved data, demonstrating the good performance of our sampler.

\begin{figure}[H]
 \caption{90\% posterior intervals for fixed parameter and parameter inference cases}
 \centering
 \includegraphics[scale=0.25]{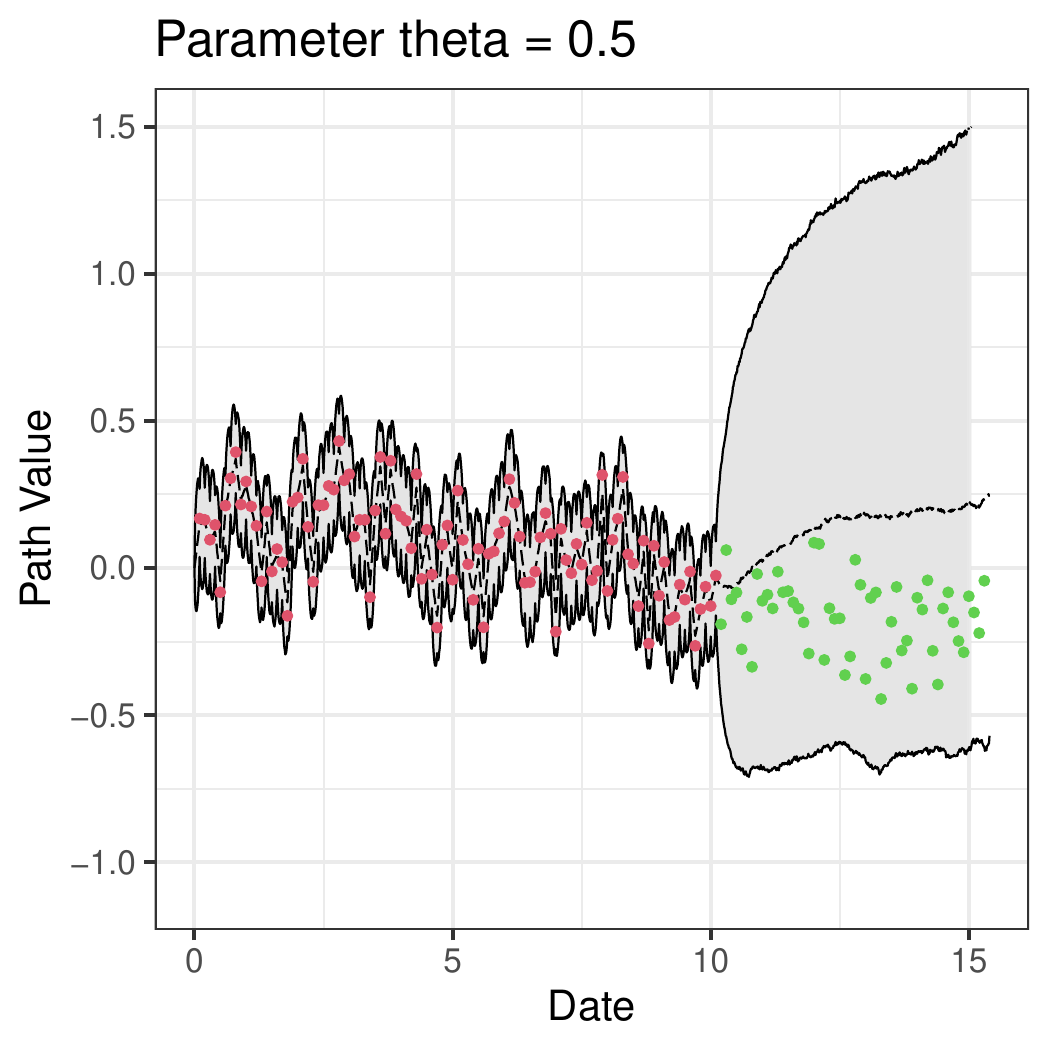}
  \includegraphics[scale=0.25]{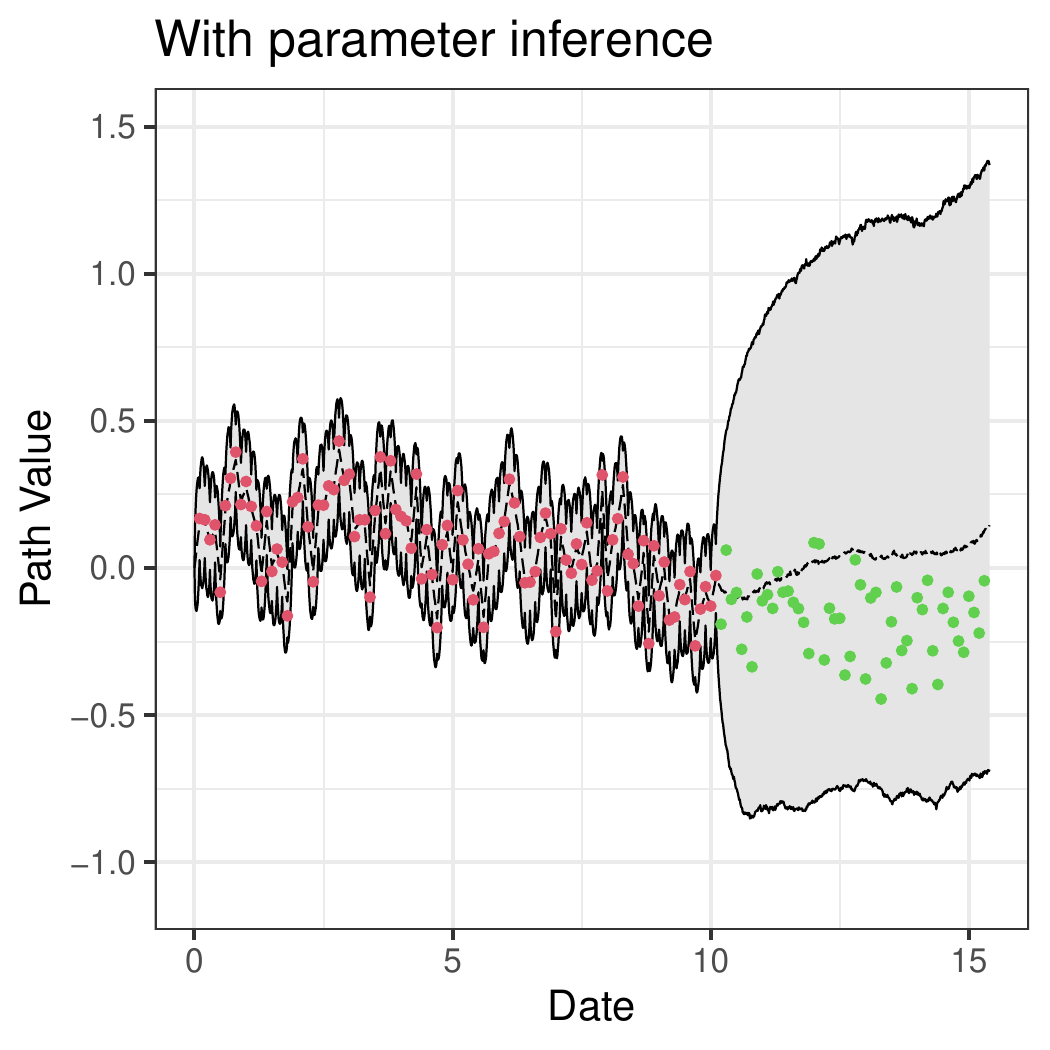}
 \label{fig:app_NYtemperature_pstpath}
\end{figure}

Additionally, we include trace plots of the middle time of the path $X_{T/2}$ and its autocorrelation function (ACF) plots to illustrate mixing in Figure \ref{fig:app_NYtemperature_tracetheta}. The fixed parameter case exhibits faster mixing compared to the parameter inference case.

\begin{figure}
 \centering
\begin{minipage}[c]{0.34\textwidth}
   \includegraphics[width=\linewidth]{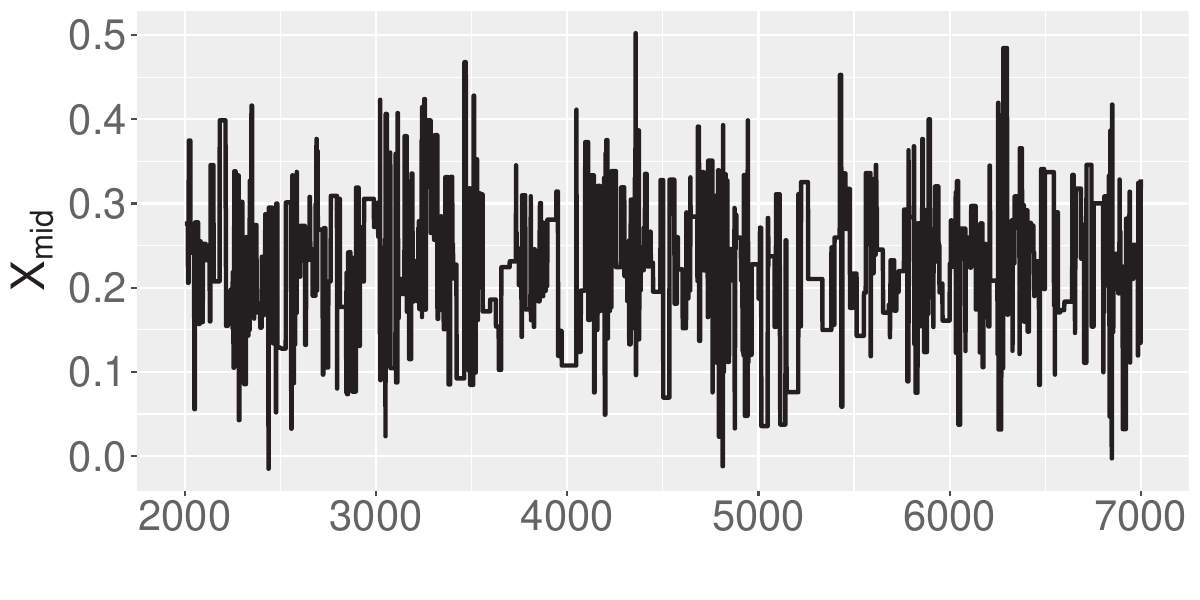}
\end{minipage}
\begin{minipage}[c]{0.34\textwidth}
  \includegraphics[width=\linewidth]{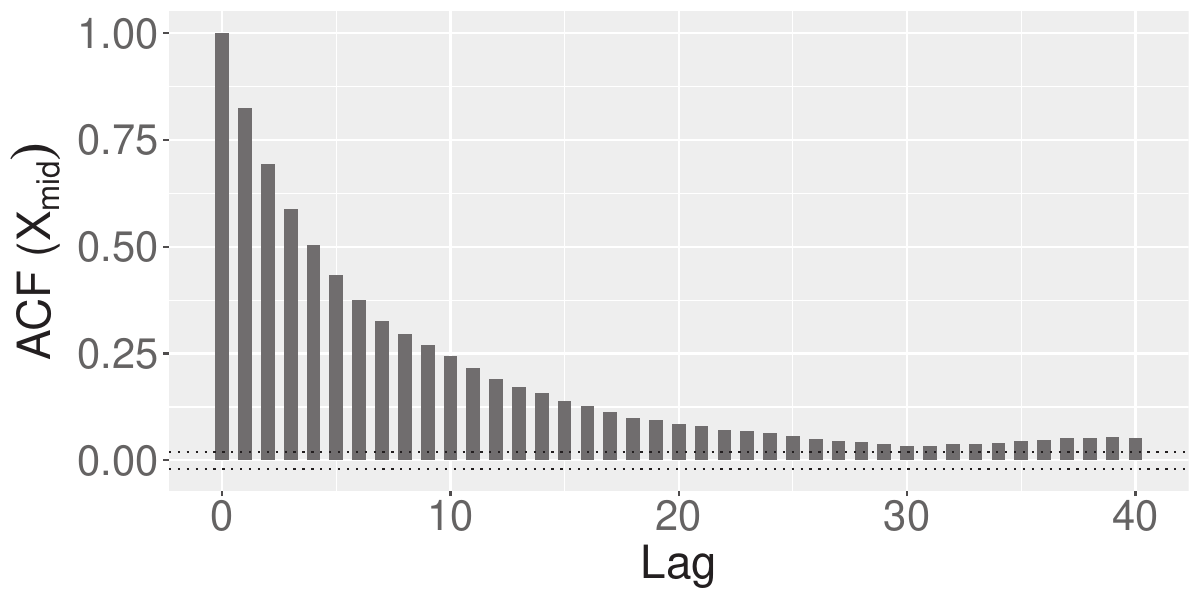}
\end{minipage}
\begin{minipage}[c]{0.34\textwidth}
   \includegraphics[width=\linewidth]{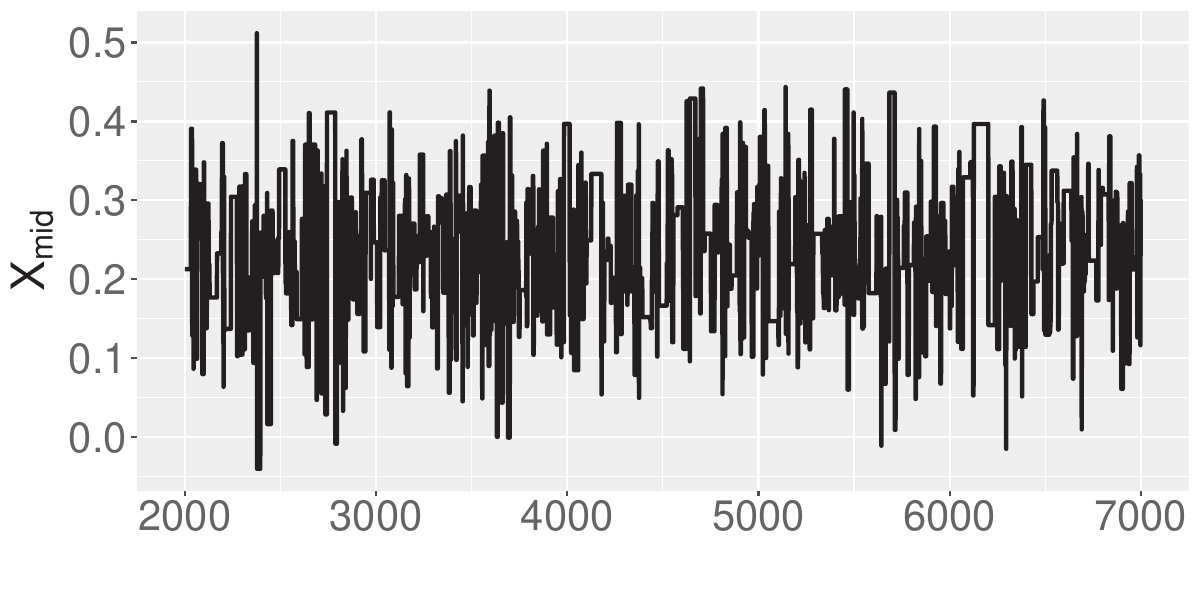}
\end{minipage}
\begin{minipage}[c]{0.34\textwidth}
   \includegraphics[width=\linewidth]{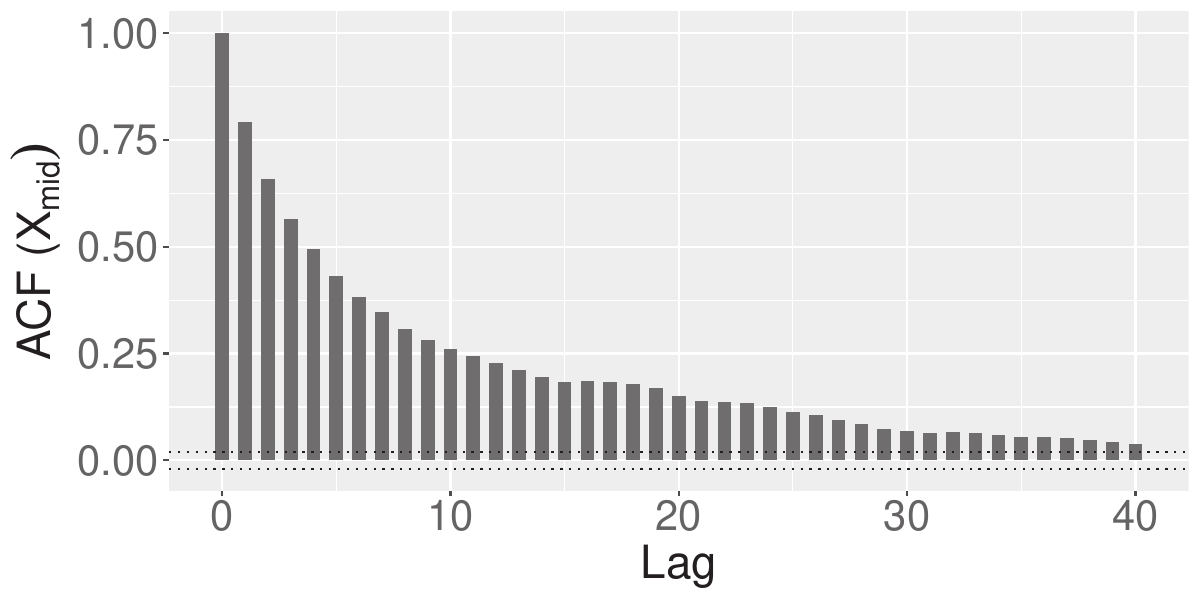}
\end{minipage}
\begin{minipage}[c]{0.34\textwidth}
  \includegraphics[width=\linewidth]{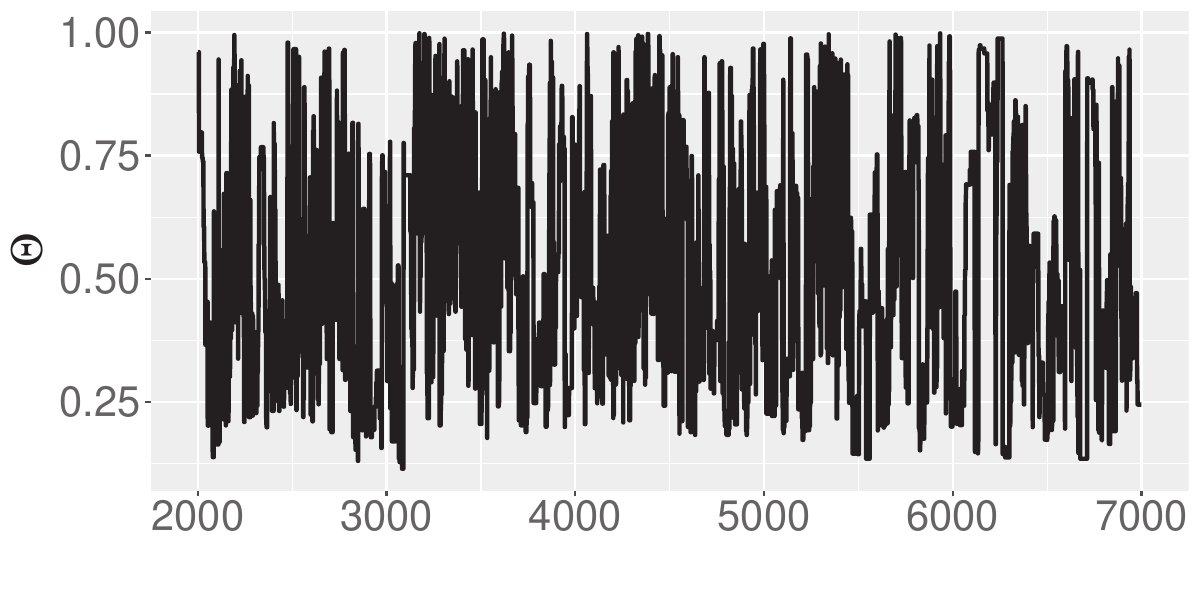}
\end{minipage}
\begin{minipage}[c]{0.34\textwidth}
   \includegraphics[width=\linewidth]{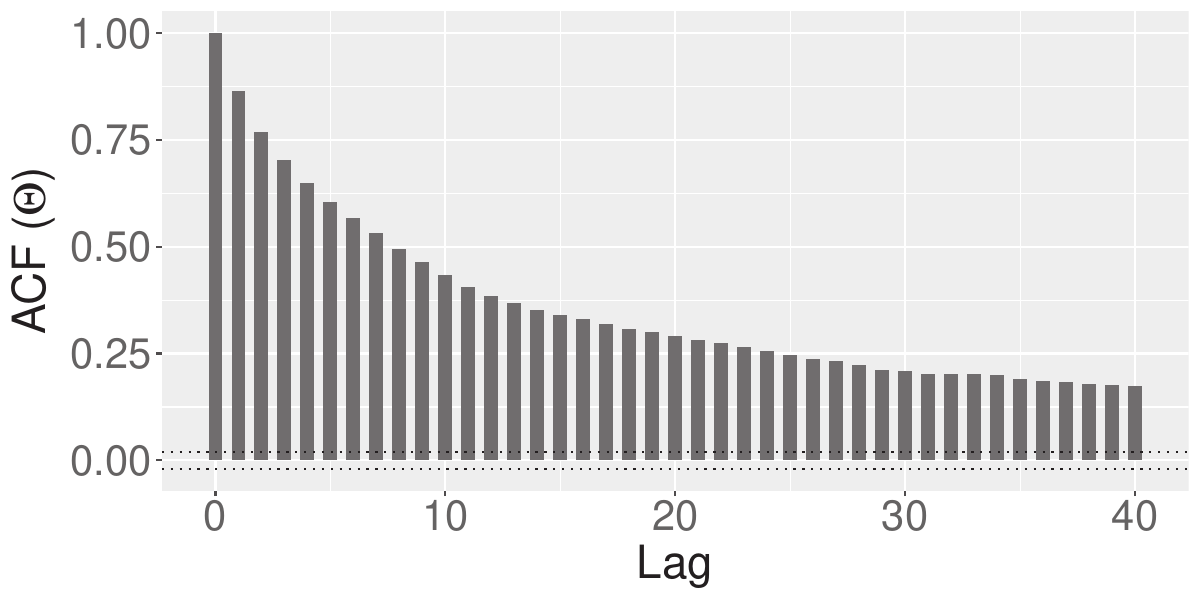}
\end{minipage}
\begin{minipage}[c]{0.34\textwidth}
  \includegraphics[width=\linewidth]{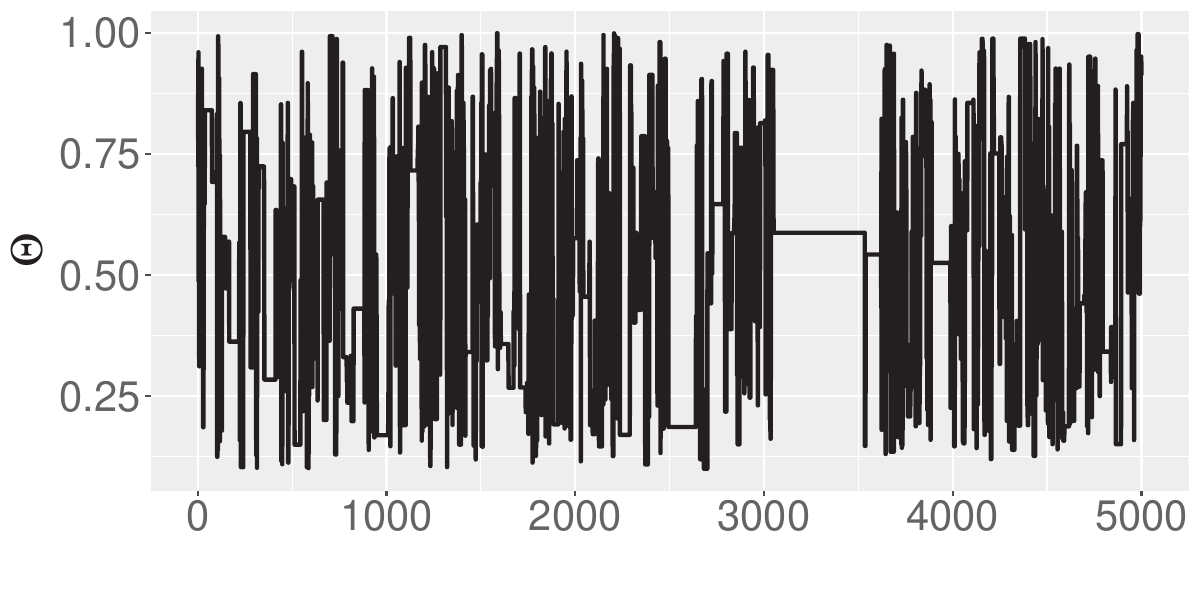}
\end{minipage}
\begin{minipage}[c]{0.34\textwidth}
   \includegraphics[width=\linewidth]{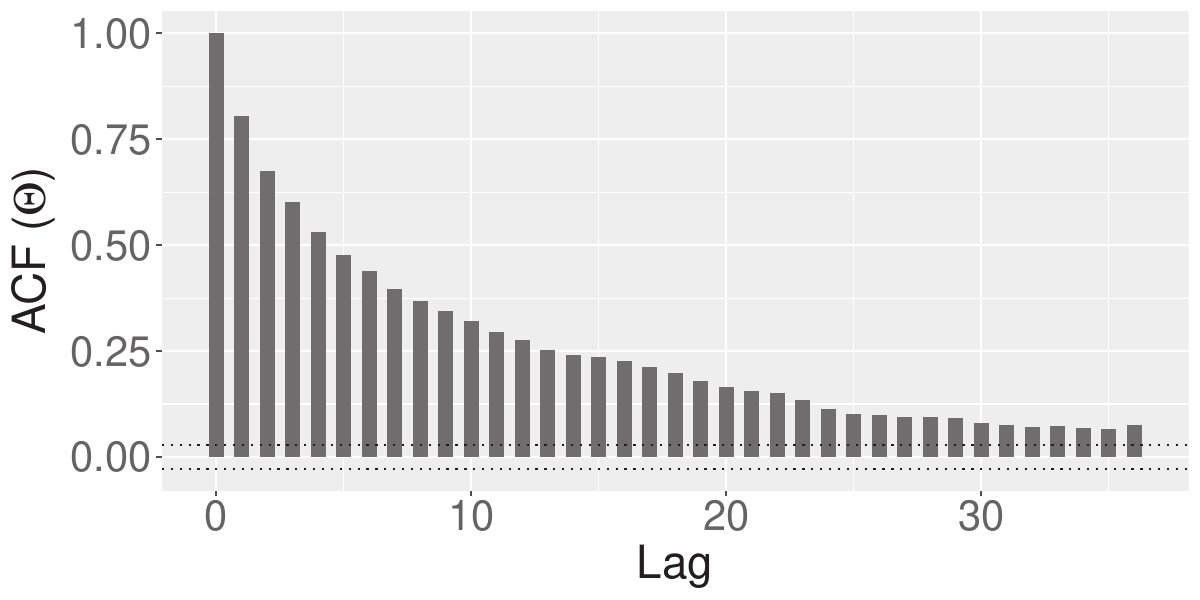}
\end{minipage}
\begin{minipage}[c]{0.05\textwidth}
$\quad$
\end{minipage}
\begin{minipage}[c]{0.5\textwidth}
\caption{Parameter inference for EA2 example 2 under Gaussian noise with \(\sigma_N = 0.1\). The top two rows show the trace and autocorrelation plots of $\theta$ (left) and $X_{mid}$, the SDE path value at time $T/2$. The bottom row shows the posterior distribution over the parameter $\theta$.}
 \label{fig:app_NYtemperature_tracetheta}
\end{minipage}
\begin{minipage}[c]{0.34\textwidth}
   \includegraphics[width=\linewidth]{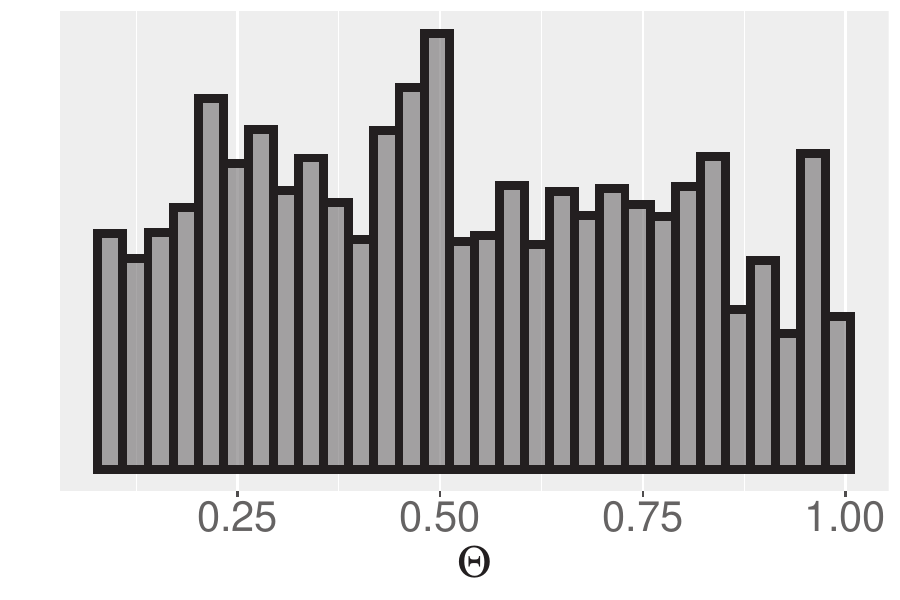}
\end{minipage}
\end{figure}

\clearpage

\section{Derivation of SDE functionals}
\label{apdix:SDE}

This section derives the functionals associated with the SDEs considered in this paper. These quantities are required both for classifying the class of diffusion and for implementing the corresponding algorithm. %

\subsection{EA2 example (positive drift function)}
\label{apdix:sde1}

Consider the following SDE with parameters $\theta = (p,q)$, where $p > 0$ and $q > 0$:
\begin{equation*}
    dX_t = p \cdot \exp(-qX_t) \, dt + dW_t
\end{equation*}
The drift function $\alpha_{\theta}(x) = p \cdot \exp(-qx)$ is strictly positive for all $x$, implying that the diffusion exhibits an upward tendency on average. For this SDE, we compute $\frac{1}{2}(\alpha_\theta^2(x) + \alpha_\theta'(x)) = \frac{1}{2} \left(p\cdot \exp(-qx)-\frac{q}{2}\right)^2 -\frac{q^2}{8}$. It follows that $\alpha_\theta^{\downarrow} = -q^2/8$ and $\phi_{\theta} (x) = \frac{1}{2} \left( p \cdot \exp(-qx) - \frac{q}{2} \right)^2$. The function $\phi_{\theta}(x)$ therefore satisfies the EA2 condition with $\underset{x\rightarrow-\infty}{\lim}\phi_{\theta}(x) = \infty$ for all $p,q > 0$. In this paper, we set the EA2 Poisson rate function as
\begin{equation*}
    M_{\theta}(x) = \begin{cases} 
\phi_{\theta}(x) & \text{if } x \leq \frac{1}{q} \log(p/q), \\
\frac{q^2}{8} & \text{if } x > \frac{1}{q} \log(p/q).
\end{cases}
\end{equation*}
Finally, straightforward calculations yield the end-point bias function of the $h$-biased Brownian bridge: $h_{\theta}(X_T| X_0) \propto \exp\left(-\frac{p}{q}\cdot\exp(-q\cdot X_T) - \frac{(X_T-X_0)^2}{2T}\right)$.

\subsection{EA2 example 2 (oscillatory drift function)}
\label{apdix:sde2}

We consider the diffusion parameterized by $\theta = (p,q)$ with $p > 0$, $q > 0$:
\begin{equation*}
    dX_t = - p X_t \exp(-q X_t) \, dt + dW_t
\end{equation*}

The drift $\alpha_\theta(x) = - p x \exp(-q x)$ is positive for $x < 0$ and negative for $x > 0$, causing the diffusion to move upward and downward, respectively. This oscillatory behavior is qualitatively different from that of the former EA2 example and makes this case more challenging. The function $\phi_{\theta}(x)$ is given by $\phi_{\theta}(x) = \frac{1}{2} \left\{ \left( p x \exp(-q x) + \frac{q}{2} \right)^2 - p \exp(-q x) - \frac{q^2}{4} \right\} - \alpha_\theta^{\downarrow}$, where $\alpha_\theta^{\downarrow} = \underset{x}{\min}\ \frac{1}{2}\left(\alpha_\theta^2(x)+\alpha'_\theta(x)\right)$ admits no closed-form expression. Thus, a closed-form expression for the EA2 Poisson rate $M_\theta(x)$ is also unavailable. We therefore use a numerical optimizer to obtain $\alpha_\theta^{\downarrow}$ and $M_\theta(x)$. The bias function is given by $h_{\theta}(X_T|X_0) \propto \exp\left\{ \frac{p}{q}\left(X_T+\frac{1}{q}\right)\cdot \exp(-q X_T) - \frac{(X_T-X_0)^2}{2T}\right\}$.

\subsection{EA3 example (double-well potential Langevin process)}
\label{apdix:sde3}

We consider the following SDE with parameter $\theta = (p,q)$ where $p>0$ and $q>0$:
\begin{equation*}
    dX_{t} = (-pX_{t}^{3}+qX_{t})dt + dW_{t},
\end{equation*}
a member of the \emph{double-well potential Langevin process}. For this SDE, one can verify $\alpha_{\theta}^{\downarrow} = -\frac{q}{2}-\frac{\sqrt{q^{2}+9p}}{3} + \frac{q^{3}}{27p} - \frac{q^{2}}{27p}\sqrt{q^{2}+9p}$
and $\phi_{\theta}(x) = \frac{1}{2}\left[p^{2}x^{6} - 2pqx^{4}+(q^{2}-3p)x^{2} + q\right] - \alpha_{\theta}^{\downarrow}$, so that $\phi_{\theta}(x)$ satisfies the EA3 condition. Recall that under the EA3 framework, $\phi_{\theta}(x)$ is unbounded above on either side of the real line. Thus, for each Brownian bridge segment, we work with a layer $\layer = (\Llayer,\Ulayer)$ that contains the path range $(\Xmin,\Xmax)$. Given a layer, we define the rate of Poisson process $M_{\theta}(\layer)$ as $M_{\theta}(\layer) = \sup_{x\in [\Llayer,\Ulayer]} \phi_{\theta}(x)$. In this example, $\phi_{\theta}(x)$ is a sixth-order polynomial, hence the supremum over a closed interval is attained either at the endpoints or at interior stationary points that are local maxima:
$
    M_{\theta}(\layer) =  \underset{x\in\mathcal{C}}{\sup}\ \phi_{\theta}(x)$,
where
\begin{equation*}
    \mathcal{C} = \{\Llayer,\Ulayer\} \cup \left(\{0\} \cap [\Llayer,\Ulayer]\right) \cup \begin{cases}
        \{\pm \zeta_{\theta}\} \cap [\Llayer,\Ulayer], & \text{if }q^{2} > 3p,\\
        \emptyset, & \text{otherwise.}
    \end{cases}
\end{equation*}
and
$
    \zeta_{\theta} = \left(\frac{2q-\sqrt{q^{2}+9p}}{3p}\right)^{1/2}
$
denotes the positive stationary point of $\phi_{\theta}(\cdot)$, corresponding to a local maximum when $q^{2} > 3p$. Finally, a simple calculus gives the bias function $h$:
$
    h_{\theta}(X_{T}|X_{0}) \propto \exp\left(-\frac{1}{4}pX_{T}^{4}+\frac{1}{2}qX_{T}^{2}-\frac{(X_{T}-X_{0})^{2}}{2T}\right).
$

\subsection{Parameter inference (CIR process)}
\label{apdix:sde4}

Let $\theta = (p,q,\sigma)$ with $p > 0$, $q > 0$, $\sigma > 0$, and consider the diffusion
$$
    dV_t = p(q - V_t)dt + \sigma\sqrt{V_t}dW_t.
$$
Applying the transform $X_t = 2\sqrt{V_t}/\sigma$ yields a unit diffusion coefficient process
\begin{align*}
    dX_t &= \underbrace{\left\{\frac{1}{X_t}\left(\frac{2pq}{\sigma^2}-\frac{1}{2}\right)-\frac{pX_t}{2}\right\}}_{:=\alpha_{\theta}(X_t)}dt + dW_t.
\end{align*}

Define the degree $d = 4pq/\sigma^2$ and assume that $d \geq 3$. Under this condition, the diffusion $X_t$ satisfies the Feller condition ($d > 2$) and therefore remains strictly positive. In addition, $\frac{1}{2}\left(\alpha_{\theta}^2(x) + \alpha_{\theta}^{\prime}(x)\right) = \left\{\left(\frac{2pq}{\sigma^2}-1\right)^2 - \frac{1}{4}\right\}\frac{1}{2x^2} + \frac{p^2}{8}x^2 - \frac{p^2 q}{\sigma^2}$ is bounded below on $(0,\infty)$; thus the diffusion belongs to the EA3 class \citep{reutenauer2008}. In particular, writing the expression in terms of $d$ and applying the AM--GM inequality yields
$    \alpha_{\theta}^{\downarrow} = \frac{p}{4}\left\{\sqrt{(d-1)(d-3)}-d\right\},
$
and thus
$
    \phi_{\theta}(x) = \left\{\left(\frac{2pq}{\sigma^2}-1\right)^2 - \frac{1}{4}\right\}\frac{1}{2x^2} + \frac{p^2}{8}x^2 - \frac{p}{4}\left\{\sqrt{(d-1)(d-3)}\right\}.
$ Finally, the bias function is given by $
    h_{\theta}(X_{T}|X_{0}) \propto \exp\left\{\left(\frac{2pq}{\sigma^2}-\frac{1}{2}\right)\log(X_T) - \frac{p}{4}X_T^2-\frac{(X_{T}-X_{0})^{2}}{2T}\right\}.
$

\end{document}